\begin{document}

\title{Indistinguishability Obfuscation from Well-Founded Assumptions} 

\author{
    Aayush Jain\thanks{UCLA, Center for Encrypted Functionalities, and NTT Research. Email: \texttt{aayushjain@cs.ucla.edu}. }\\
\and
    Huijia Lin\thanks{UW. Email: \texttt{rachel@cs.washington.edu}. }\\
\and
    Amit Sahai\thanks{UCLA, Center for Encrypted Functionalities. Email: \texttt{sahai@cs.ucla.edu}.}\\
}
\date{February, 2020} 
\date{August 18, 2020}
\maketitle
 
\noteswarning 

\thispagestyle{empty}
\newtheorem*{theorem*}{Theorem}
\newcommand{\lwe}{\epsilon}

\begin{abstract}
  In this work, we show how to construct indistinguishability
  obfuscation from subexponential hardness of four well-founded
  assumptions. We prove:

\begin{theorem*}[Informal]
    Let $\tau \in (0,\infty), \delta \in (0,1), \lwe \in (0,1)$ be arbitrary constants. Assume sub-exponential security of
    the following assumptions, where $\secparam$ is a security parameter, and the parameters $\ell,k,n$ below are large enough polynomials in $\lambda$: 
  \begin{itemize}
  \item the $\mathsf{SXDH}$ assumption on asymmetric bilinear groups of a prime order $p = O(2^\secparam)$,
  \item the $\mathsf{LWE}$ assumption over $\Int_p$ with subexponential modulus-to-noise ratio $2^{k^\lwe}$, where $k$ is the dimension of the $\mathsf{LWE}$ secret,
  \item the $\mathsf{LPN}$ assumption over $\mathbb{Z}_p$ with
    polynomially many $\mathsf{LPN}$ samples and error rate
    $1/\ell^\delta$, where
    $\ell$ is the dimension of the $\mathsf{LPN}$ secret,
  \item the existence of a Boolean $\mathsf{PRG}$ in $\mathsf{NC}^0$ with
    stretch $n^{1+\tau}$,
  \end{itemize}
  Then, (subexponentially secure) indistinguishability obfuscation for
  all polynomial-size circuits exists.
  \end{theorem*}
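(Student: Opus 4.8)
\textbf{Overview and Stage 1 (iO from weakly-compact $\mathsf{P}/\mathsf{poly}$-FE).} The plan is a top-down reduction in four stages, funnelling iO through two new primitives --- a \emph{structured-seed PRG} (sPRG) and a \emph{partially-hiding functional encryption} scheme (PHFE) --- and then building each from the four stated assumptions in a clean separation. First I would recall that subexponentially secure indistinguishability obfuscation for all polynomial-size circuits follows from a subexponentially secure functional encryption scheme for $\mathsf{P}/\mathsf{poly}$ whose ciphertext size is sublinear in the circuit size (e.g.\ $|C|^{1-\gamma}\cdot\mathrm{poly}(\lambda,|x|)$ for some constant $\gamma>0$ suffices), via the FE-to-iO transformations of Ananth--Jain and Bitansky--Vaikuntanathan and their refinements; subexponential hardness is consumed here by the complexity-leveraging step of that transformation. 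Since public-key encryption (hence a public-key upgrade of a secret-key FE scheme) follows from any one of the four assumptions, it suffices to build such a subexponentially secure, weakly-compact FE scheme for $\mathsf{P}/\mathsf{poly}$.

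\textbf{Stage 2 (weakly-compact $\mathsf{P}/\mathsf{poly}$-FE from an sPRG and a degree-$2$ PHFE).} Following the ``FE from low-degree FE and polynomial-stretch PRG'' template (Lin; Lin--Tessaro; Ananth--Sahai; and refinements), I would reduce this to: (i) a polynomial-stretch \emph{structured-seed PRG} $G\colon\{0,1\}^{n}\to\{0,1\}^{m}$ with $m=n^{1+\Omega(1)}$, whose seed splits as $(\mathsf{sd}_{\mathrm{pub}},\mathsf{sd}_{\mathrm{priv}})$ with $|\mathsf{sd}_{\mathrm{priv}}|=\widetilde{O}(n)$, and for which every output bit equals a fixed degree-$2$ polynomial over $\mathbb{Z}_p$ evaluated on a \emph{preprocessed seed} computed from $(\mathsf{sd}_{\mathrm{pub}},\mathsf{sd}_{\mathrm{priv}})$ by a constant-degree (e.g.\ $\mathsf{NC}^0$) map; and (ii) a \emph{partially-hiding} FE scheme for the class of functions that apply such a degree-$2$ $\mathbb{Z}_p$-computation to the private input and then a shallow Boolean post-computation (up to $\mathsf{NC}^1$) to the resulting bits together with the public input, with ciphertext size depending only on the private-input length. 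The reduction is the familiar one: a ciphertext for $x$ is a PHFE ciphertext with private input $(x,\mathsf{sd}_{\mathrm{priv}})$ for a fresh short seed; a key for a circuit $C$ is a PHFE key with public input $(C,\mathsf{sd}_{\mathrm{pub}})$ that instructs the decryptor to expand $G$ to its $m$ pseudorandom bits and then use them as the randomness of a randomized encoding (a garbled circuit) of $C(x)$ --- a computation in the PHFE class. Weak compactness is immediate since the ciphertext is independent of $|C|$ (one rescales the seed length with the circuit-size bound so that ciphertexts are sublinear), and security reduces to sPRG pseudorandomness, PHFE security, and privacy of the randomized encoding.

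\textbf{Stage 3 (the sPRG from $\mathsf{LPN}$ over $\mathbb{Z}_p$ and a Boolean $\mathsf{NC}^0$-PRG).} Starting from the assumed Boolean PRG $h\in\mathsf{NC}^0$ of constant locality $c$ and stretch $n^{1+\tau}$, one writes its $i$-th output bit as a fixed linear form over the vector of all monomials of degree $\le c$ in the $c$ seed bits read by that output; since each such monomial is a product of two monomials of degree $\le\lceil c/2\rceil$, every output bit is degree $2$ in the (polynomially long) vector of all degree-$\le\lceil c/2\rceil$ monomials of the seed --- a vector not computable from the seed by any shallow map. The $\mathbb{Z}_p$-$\mathsf{LPN}$ assumption is used precisely to re-express this vector through fresh ``preprocessed'' variables that a constant-degree map \emph{can} output from $(\mathsf{sd}_{\mathrm{pub}},\mathsf{sd}_{\mathrm{priv}})$ --- with $\mathsf{sd}_{\mathrm{priv}}$ holding the PRG seed, an LPN secret, and LPN-masked auxiliary values, and $\mathsf{sd}_{\mathrm{pub}}$ the LPN matrix --- while keeping the output pseudorandom. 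The sparsity of the LPN error at rate $1/\ell^{\delta}$ is essential: it corrupts only a sublinear number of coordinates, so a perturbation-resilience argument (in the spirit of the $\Delta$RGs of Ananth--Jain--Lin--Matt--Sahai) shows the corruption does not break pseudorandomness even against the view induced by functional keys; pseudorandomness of $G$ then reduces to that of $h$ and to $\mathbb{Z}_p$-$\mathsf{LPN}$, with $|\mathsf{sd}_{\mathrm{priv}}|=\widetilde{O}(n)$ delivering the required polynomial stretch.

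\textbf{Stage 4 (the PHFE from $\mathsf{SXDH}$ and $\mathsf{LWE}$ --- the main obstacle).} This is the new cryptographic core. The degree-$2$ arithmetic on the private input is carried out by a function-hiding quadratic FE over $\mathbb{Z}_p$ built from $\mathsf{SXDH}$ on asymmetric bilinear groups (in the style of Lin and of Baltico--Catalano--Fiore--Gay): the single pairing performs the degree-$2$ multiplication and its outputs live in the exponent of the target group, from which the $\{0,1\}$-valued sPRG bits are recoverable. The subsequent shallow Boolean post-computation on those bits together with the public input must be performed by the decryptor \emph{without} enlarging the ciphertext; for this I would layer in $\mathsf{LWE}$-based homomorphic-evaluation machinery (lattice ABE/predicate-encryption techniques of the Gorbunov--Vaikuntanathan--Wee and Boneh-et-al.\ type), which lets the decryptor evaluate an $\mathsf{NC}^1$ computation on committed values while keeping ciphertext size polynomial in the private-input length only --- the subexponential modulus-to-noise ratio $2^{k^{\lwe}}$ being what absorbs the noise growth of that depth. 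The difficulty --- where I expect essentially all of the genuinely new work to lie --- is making these two worlds coexist in a single scheme and, above all, in its security proof: one needs a hybrid argument that \emph{simultaneously} (a) invokes $\mathsf{SXDH}$ and function hiding to replace the pairing-side encodings by their simulated counterparts and (b) invokes $\mathsf{LWE}$ to replace the lattice-side ciphertext/commitment components, with the two reductions kept from interfering, and that yields exactly the ``partial'' guarantee needed in Stage 2 --- the private input fully hidden while the public input may leak.
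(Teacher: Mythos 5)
Your high-level architecture matches the paper's (weakly-compact FE $\Rightarrow$ iO; sPRG + PHFE $\Rightarrow$ weakly-compact FE; sPRG from LPN + $\NC^0$-PRG; PHFE from SXDH + LWE), but you have inverted where the genuine difficulty and the paper's new contribution lie, and your Stage~3 as stated would not go through.

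\textbf{Misattribution of the new work.} You declare Stage~4 --- the degree-2 PHFE from SXDH and LWE, with the $\mathsf{NC}^1$ post-processing --- to be ``the main obstacle'' and ``the new cryptographic core.'' In fact that component, together with the $\Delta$RG$\to$FE pipeline, is inherited essentially verbatim from prior work (which the paper cites as Theorem~\ref{thm:fenc0}); the hybrid argument that interleaves SXDH, LWE, and the $\Delta$RG was already done there. The paper's sole genuinely new cryptographic construction is the sPRG of Section~\ref{sec:sprg}, i.e.\ your Stage~3. Your proposal spends the least effort and detail exactly where the paper's actual novelty sits.

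\textbf{The gap in Stage~3.} You propose to mask the $\NC^0$-PRG seed $\vec\sigma$ with an LPN sample so that the (erroneous) seed bit $\sigma_i + e_i$ becomes linear in the LPN secret $\ol{\vec s}$, and thereby all degree-$\pdeg$ monomials become degree~2 in $\ol{\vec s}^{\otimes \lceil d/2\rceil}$; this is correct and is Equation~\eqref{eq:replace-var}. But this only gives you $\peval_\pid(\vec\sigma + \vec e)$, not $\peval_\pid(\vec\sigma)$, and you handle the discrepancy by appealing to ``a perturbation-resilience argument (in the spirit of the $\Delta$RGs)'' that the corruption ``does not break pseudorandomness.'' That step does not work, for two reasons. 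First, $\Delta$RG perturbation resilience is about adding an adversarially-chosen small integer vector to the \emph{output}; it says nothing about tolerating a corruption of the \emph{seed}, which can flip an arbitrary subset of the $m$ Boolean output bits, in a way correlated with the LPN error locations. Second, and more damagingly, once you invoke LPN to replace $(\vec A,\vec b)$ with uniform in the security reduction, the reduction no longer knows $\vec s$ and cannot compute $\vec\sigma + \vec e = \vec b - \vec s\vec A$, so it cannot produce $G(\vec\sigma+\vec e)$ at all; the reduction only goes through if the object being output is $G(\vec\sigma)$, computed directly from the sampled $\vec\sigma$ (this is precisely what makes Lemma~\ref{lem:flpnprg} work). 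The paper therefore must --- and does --- \emph{exactly correct} the $\bad$ outputs back to $\peval_\pid(\vec\sigma)$ inside the evaluation polynomial, while keeping the correction (i) shorter than $m$ so as not to kill expansion, and (ii) oblivious of the error locations so as not to leak them. This is done by randomly assigning output indices to $\numbkt$ buckets via $\phibkt$, low-rank-factoring the resulting sparse per-bucket correction matrices $\mat M_i = \mat U_i\mat V_i$, storing only $\{\mat U_i,\mat V_i\}$ in $\ssd$ (the degree-2 term $(\mat U_{\phibkt(j)}\mat V_{\phibkt(j)})_{\phiind(j)}$ expands it back), and introducing a one-bit $\flag$ to zeroize in the low-probability uneven case, with the one bit of $\flag$-leakage handled via the leakage simulation lemma (Theorem~\ref{thm:leakage}). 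None of this machinery --- bucketing, factorization, obliviousness of the evaluation polynomial, or $\flag$ + leakage simulation --- appears in your proposal, and without it your sPRG is neither secure nor expanding.
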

  
  Further, assuming only polynomial security of the aforementioned
  assumptions, there exists collusion resistant public-key functional encryption
  for all polynomial-size circuits. 
\end{abstract}

\pagebreak
\tableofcontents
\thispagestyle{empty}
\pagebreak
\setcounter{page}{1}

\newcommand{\lpnstretch}{\epsilon}
\newcommand{\spasity}{\delta}
\newcommand{\flpn}{\mathsf{LPN}}
\newcommand{\pid}{I}
\newcommand{\pbid}{{\bar I}}
\newcommand{\piset}{\mathcal{I}}
\newcommand{\pg}{G}
\newcommand{\pset}{\mathcal{G}}
\newcommand{\psamp}{\mathsf{Samp}}
\newcommand{\pidsamp}{\mathsf{IdSamp}}
\newcommand{\psdsamp}{\mathsf{SdSamp}}
\newcommand{\peval}{\mathsf{Eval}}
\newcommand{\pprocess}{\mathsf{Process}}
\newcommand{\pstretch}{\tau}
\newcommand{\pdeg}{d}
\newcommand{\plocality}{c}
\newcommand{\sparsity}{\delta}
\newcommand{\modulus}{\epsilon_{p}}
\newcommand{\explain}[1]{{\em Note: #1}}
\newcommand{\ol}[1]{{\overline{#1}}}
\newcommand{\psd}{P}
\newcommand{\ssd}{S}
\newcommand{\seclevel}{\gamma}
\newcommand{\Nat}{\mathbb{N}}
\newcommand{\pprg}{\mathsf{pPRG}}
\newcommand{\sprg}{\mathsf{sPRG}}
\section{Introduction}

In this work, we study the notion of indistinguishability obfuscation
($\iO$) for general polynomial-size circuits
\cite{C:BGIRSVY01,C:GolKalRot08,FOCS:GGHRSW13}.  $\iO$ requires that for any two 
circuits $\mathsf{C}_0$ and $\mathsf{C}_1$  of the
same size, such that
$\mathsf{C}_0(x) = \mathsf{C}_1(x)$ for all inputs $x$, we have that
$\iO(\mathsf{C}_0)$ is computationally indistinguishable to
$\iO(\mathsf{C}_1)$.  Furthermore, the obfuscator $\iO$ should be computable in
probabilistic polynomial time.
The notion of $\iO$ has proven to be very powerful, with over a hundred  papers published utilizing $\iO$ to enable a remarkable variety of applications in cryptography and complexity theory; indeed $\iO$ has even expanded the scope of cryptography,  (see, e.g. ~\cite{FOCS:GGHRSW13,SW14,C:BrzFarMit14,EC:GGGJKL14,C:HohSahWat13,KLW15,FOCS:BitPanRos15,CHNVW16,C:GarPanSri16,AC:HJKSWZ16}).

Despite this success, until this work, all previously known $\iO$ constructions~\cite{EC:GarGenHal13,FOCS:GGHRSW13, EC:BGKPS14,BR14,PST14,AGISfull,BMSZ16,C:CorLepTib13,C:CorLepTib15,TCC:GenGorHal15,CheonHaLeRySt14,BWZ14,CoronGHLMMRST15,HuJ15,BrakerskiGHLST15,Halevi15,CheonLR15,MinaudF15,MSZ16,DGO+16,EC:Lin16,FOCS:LinVai16,EC:AnaSah17,C:Lin17,C:LinTes17,GJK18,AJS18,EC:Agrawal19,LM18,EC:JLMS19,ITCS:BIJMSZ20,EC:AgrPel20,BDGM20} 
required new hardness assumptions that were postulated specifically for showing security of the $\iO$ schemes proposed.  
Indeed, the process of understanding these assumptions has been tortuous, with several of these assumptions  broken by clever cryptanalysis~\cite{CheonHaLeRySt14,BWZ14,CoronGHLMMRST15,HuJ15,BrakerskiGHLST15,Halevi15,CheonLR15,MinaudF15,MSZ16,BBKK17,TCC:LomVai17,EC:BHJKS19}. The remaining standing ones are based on new and novel computational problems that are different in nature from well-studied computational problems (for instance, $\mathsf{LWE}$ with leakage on noises).


As a result, there has been a lack of clarity about the state of $\iO$ security~\cite{BKMPRS19}. Our work aims to place $\iO$ on \emph{terra firma}. 

\paragraph{Our contribution.}
We show how to construct $\iO$ from subexponential hardness of four well-founded assumptions. We prove:
\begin{theorem} (Informal)
   Let $\tau$ be arbitrary constants greater than 0, and
    $\delta$, $\lwe$ in $(0,1)$. Assume sub-exponential security of
    the following assumptions, where $\secparam$ is the security parameter, and the parameters $\ell,k,n$ below are large enough polynomials in $\lambda$: 
  \begin{itemize}
  \item the $\mathsf{SXDH}$ assumption on asymmetric bilinear groups of a prime order $p = O(2^\secparam)$,
  \item the $\mathsf{LWE}$ assumption over $\Int_p$ with subexponential modulus-to-noise ratio $2^{k^\lwe}$, where $k$ is the dimension of the $\mathsf{LWE}$ secret,
  \item the $\mathsf{LPN}$ assumption over $\mathbb{Z}_p$ with
    polynomially many $\mathsf{LPN}$ samples and error rate
    $1/\ell^\delta$, where
    $\ell$ is the dimension of the $\mathsf{LPN}$ secret,
  \item the existence of a Boolean $\mathsf{PRG}$ in $\mathsf{NC}^0$ with
    stretch $n^{1+\tau}$,
  \end{itemize}

  Then, (subexponentially secure) indistinguishability obfuscation for
  all polynomial-size circuits exists.
  \end{theorem}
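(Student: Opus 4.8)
The strategy is to reduce the construction of $\iO$ to that of a suitably compact functional encryption ($\mathsf{FE}$) scheme for a restricted function class, and then to build that scheme by composing a bilinear-map-based partially-hiding $\mathsf{FE}$ for low-degree polynomials with a new polynomial-stretch pseudorandom generator whose evaluation has a very special algebraic shape. Concretely, I would first invoke the known bootstrapping theorems (Ananth--Jain; Bitansky--Vaikuntanathan; Ananth--Jain--Sahai; Lin--Pass--Seth--Telang; Lin--Vaikuntanathan; and the compactness amplification of Ananth--Jain--Sahai): subexponentially secure, collusion-resistant, \emph{sublinearly compact} public-key $\mathsf{FE}$ for $\mathsf{NC}^1$ (equivalently for $\mathsf{P/poly}$) implies subexponentially secure $\iO$ for all polynomial-size circuits, where the $\mathsf{PRG}$ in $\mathsf{NC}^0$ hypothesis already supplies the low-depth $\mathsf{PRG}$ that this step uses and where the subexponential hardness loss is incurred. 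Thus it suffices to build a polynomially secure, collusion-resistant, sublinearly compact public-key $\mathsf{FE}$ for $\mathsf{NC}^1$ --- a scheme in which a ciphertext for message $x$ has size $|x|\cdot\mathrm{poly}(\lambda) + s^{1-\Omega(1)}\cdot\mathrm{poly}(\lambda)$ for functions of output length/size $s$ --- which also immediately yields the ``further'' claim about collusion-resistant public-key $\mathsf{FE}$.

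The central new object is a \emph{structured-seed polynomial-stretch $\mathsf{PRG}$}. Its seed splits into a public part $\mathbf{P}$ and a private part $\mathbf{s}$, sampled jointly; the output is pseudorandom even given $\mathbf{P}$; $|\mathbf{s}|$ is much shorter than the output length; and --- crucially --- each output coordinate is a degree-$2$ polynomial over $\mathbb{Z}_p$ in $\mathbf{s}$, whose coefficients are $\mathsf{NC}^1$-computable (indeed constant-degree) functions of $\mathbf{P}$. I would construct this from the two ``combinatorial'' hypotheses. Take the Boolean $\mathsf{PRG}$ $G\in\mathsf{NC}^0$ of locality $c$ and stretch $n^{1+\tau}$, and use $\mathsf{LPN}$ over $\mathbb{Z}_p$ with secret dimension $\ell$ and error rate $1/\ell^{\delta}$ to ``encode'' the seed of $G$ so that (i) the encoded seed is pseudorandom given the public $\mathsf{LPN}$ matrix, and (ii) applying the local maps of $G$ to the encoded seed becomes expressible, after pushing all monomials of degree larger than two in $\mathbf{s}$ (but low degree overall, using the public $\mathsf{LPN}$ values) into a precomputed public string, as a degree-$2$ computation in the private $\mathsf{LPN}$ secret. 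Balancing the error rate $1/\ell^{\delta}$ against the constant locality $c$ ensures a typical output of $G$ reads only noiseless positions, while a combinatorial ``pseudo-flawed-smudging''/$\Delta$-generator argument (as in the prior $\Delta\mathsf{RG}$ line) controls the rest; standard length extension turns the $n^{1+\tau}$ stretch into an arbitrary polynomial stretch.

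Next I would construct a partially-hiding $\mathsf{FE}$ ($\mathsf{PHFE}$) for the class of functions that first apply a polynomial-size (arithmetic $\mathsf{NC}^1$) circuit to a \emph{public} input and then a degree-$2$ polynomial over $\mathbb{Z}_p$ jointly in that result and a \emph{private} input, with security hiding the private input against admissible key queries. The degree-$2$ core comes from $\mathsf{SXDH}$ on asymmetric prime-order bilinear groups, following the bilinear-map quadratic-$\mathsf{FE}$ line (Lin; Gay; Wee; Jain--Lin--Matt--Sahai), which already yields a public-key scheme whose ciphertext size grows only with the private input. To support an $\mathsf{NC}^1$/polynomial preprocessing of the public input without enlarging the ciphertext, I would invoke the $\mathsf{LWE}$ assumption with the stated subexponential modulus-to-noise ratio $2^{k^{\lwe}}$ to import an $\mathsf{LWE}$-based attribute-based-encryption / succinct-circuit gadget and glue it to the quadratic scheme, keeping ciphertext size proportional to the private-input length while arbitrary poly-size public computation is handled.

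Finally I would assemble the sublinearly compact $\mathsf{NC}^1$-$\mathsf{FE}$. To encrypt $x$, sample a fresh structured-seed pair $(\mathbf{P},\mathbf{s})$ with $|\mathbf{s}|$ short, and output a $\mathsf{PHFE}$ ciphertext whose private input is $(x,\mathbf{s})$ and public input is $\mathbf{P}$; a functional key for $f\in\mathsf{NC}^1$ is a $\mathsf{PHFE}$ key for the function ``from $\mathbf{P}$ compute the coefficients, then emit a randomized encoding of $f(x)$ under randomness $\sprg(\mathbf{P},\mathbf{s})$''; decryption recovers the randomized encoding and decodes $f(x)$. Since the randomized encoding is constant-degree in $(x,\text{randomness})$ and $\sprg$ is degree-$2$ in $\mathbf{s}$ over an $\mathsf{NC}^1$-preprocessed public $\mathbf{P}$, the whole map lies in the class supported by the $\mathsf{PHFE}$, and since the $\mathsf{PHFE}$ ciphertext size depends only on $|x|+|\mathbf{s}|$ --- small compared to the size/output length of $f$ --- the scheme is sublinearly compact. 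Security proceeds by a hybrid that (1) uses $\mathsf{PHFE}$ security to replace the encrypted $(x,\mathbf{s})$ by the value of the key function, (2) uses structured-seed $\mathsf{PRG}$ pseudorandomness (legitimate because only $\mathbf{P}$, not $\mathbf{s}$, is exposed) to replace $\sprg(\mathbf{P},\mathbf{s})$ by true randomness, and (3) simulates the randomized encoding from $f(x)$ alone. The main obstacle I anticipate is exactly step (2): making the $\mathsf{PHFE}$ indistinguishability guarantee --- which only protects the private input, only against admissible queries --- interface cleanly with $\sprg$ pseudorandomness while the $\mathsf{LPN}$ error is implicitly present, and simultaneously keeping the degree bookkeeping, the admissibility of all key queries, the compactness accounting, and (for the $\iO$ corollary) the subexponential loss mutually consistent; this is where the delicate technical work lives, together with the careful setting of the polynomial parameters $\ell,k,n$ in terms of $\lambda$.
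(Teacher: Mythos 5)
Your high-level architecture matches the paper's: the new primitive is a structured-seed PRG whose evaluation is degree $2$ in a short private seed over a constant-degree preprocessing of a public seed, built from $\mathsf{LPN}$ over $\mathbb{Z}_p$ together with an $\mathsf{NC}^0$ Boolean PRG; it is then combined with an $\mathsf{SXDH}/\mathsf{LWE}$-based partially-hiding FE via randomized encodings, and the resulting weak FE is strengthened and converted to $i\mathcal{O}$ through the standard pipeline. The LPN encoding $\vec b = \vec s\vec A + \vec e + \vec\sigma$ of the PRG seed, and the tensoring $\bar{\vec s}^{\otimes\lceil d/2\rceil}$ to bring the private degree down to $2$, are also exactly the paper's first two moves.

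The gap is in how you handle the LPN errors. You observe that roughly a $c/\ell^{\delta}$ fraction of the $m$ output positions read an erroneous seed bit, and propose to ``control the rest'' with a pseudo-flawed-smudging / $\Delta$RG-style argument. This does not work, for two reasons. First, it is circular: a $\Delta$RG is exactly what the structured-seed PRG is being constructed in order to yield, so there is no such object to invoke at this stage of the argument. Second, and more fundamentally, the corrupted coordinates are not small perturbations. When $e_i\ne 0$ it is uniform in $\mathbb{Z}_p$, so on a bad position the degree-$2$ polynomial that faithfully computes $G(\vec\sigma+\vec e)$ over $\mathbb{Z}_p$ outputs an arbitrary field element, not a bit shifted by $\pm 1$; no perturbation-resilience or smudging argument can absorb such discrepancies, and leaving them uncorrected makes the candidate visibly non-pseudorandom. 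The paper's central new idea, which your proposal leaves open, is to correct these positions using \emph{precomputed, compressed} data placed in the private seed: the correction vector $\mathsf{Corr} = G(\vec\sigma) - G(\vec\sigma+\vec e)$ has at most $T = m\log n/\ell^{\delta}$ nonzero entries with probability $1-o(1)$; after randomly assigning the $m$ output indices into $B = m t/\ell^{\delta}$ buckets (with slack $t=\lambda$), each bucket receives at most $t$ bad indices except with negligible probability, so its $t\ell^{\delta/2}\times t\ell^{\delta/2}$ correction matrix has rank at most $t$ and factors as $\vec U_i\vec V_i$, and all factors together occupy $O(m\lambda^{4}/\ell^{\delta/2})$ bits, polynomially less than $m$. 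The term $\bigl(\vec U_{\phi_{\mathsf{bkt}}(j)}\vec V_{\phi_{\mathsf{bkt}}(j)}\bigr)_{\phi_{\mathsf{ind}}(j)}$ added to output $j$ is degree $2$ in the private seed and is evaluated \emph{obliviously} of where the bad positions lie, so nothing about $\vec e$ leaks through the evaluation polynomial itself; a single public bit $\mathsf{flag}$ zeroizes the rare overflow events, and the resulting one-bit leakage is simulated via the leakage lemma. Without this bucketed low-rank correction and the flag/leakage argument, your structured-seed PRG --- and hence the whole construction --- does not go through.
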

All four assumptions are based on computational problems with a long history of study, rooted in complexity, coding, and number theory. Further, they were introduced for building basic cryptographic primitives (such as public key encryption), and have  been used for realizing a variety of  cryptographic goals  that have nothing to do with $\iO$.   

\subsection{Assumptions in More Detail}
We now describe each of these assumptions in more detail and briefly survey their history.

\paragraph{The $\mathsf{SXDH}$ Assumption:} The standard $\mathsf{SXDH}$ assumption is stated as follows: Given an appropriate prime $p$, three groups $\mathbb{G}_1,\ \mathbb{G}_2$, and $\mathbb{G}_T$ 
are chosen of order $p$ such that there exists an efficiently computable nontrivial bilinear map $e: \mathbb{G}_1 \times \mathbb{G}_2 \rightarrow \mathbb{G}_T$. Canonical generators, $g_1$ for $\mathbb{G}_1$, and $g_2$ for 
$\mathbb{G}_1$, are also computed.
Then, the $\mathsf{SXDH}$ assumption requires that the Decisional Diffie Hellman (DDH) assumption holds in both $\mathbb{G}_1$ and $\mathbb{G}_2$. That is, it requires that
the following  computational indistinguishability holds:
\begin{align*}
\forall b \in \{1, 2\},\ \left \{\left (g_b^x, g_b^y, g_b^{xy}\right ) \ \mid \ x,y \gets \Z_p\right \}  \approx_c \left \{\left (g_b^x, g_b^y, g_b^{z} \right )\ \mid \ x,y,z \gets \Z_p
\right \}
\end{align*}

This assumption was first defined in the 2005 work of Ballard et. al. \cite{BGMM05}. Since then, $\mathsf{SXDH}$ has seen extensive use in a wide variety of applications throughout cryptography, including Identity-Based Encryption and Non-Interactive Zero Knowledge (See, e.g. \cite{EC:GroSah08,FOCS:BKKV10,AC:BisJaiKow15,C:Lin17,CLLWW12,AC:JutRoy13}).
It has been a subject of extensive cryptanalytic study (see~\cite{EC:Verheul01} for early work and~\cite{GR04} for a survey).

\paragraph{The $\mathsf{LWE}$ Assumption:} The $\mathsf{LWE}$ assumption with respect to subexponential-size modulus $p$, dimension $\secparam$, sample complexity $n(\secparam)$ and polynomial-expectation discrete Gaussian distribution $\chi$ over integers states that the following computational indistinguishability holds:
\begin{align*}
    \{\mat A, \vec s \cdot \mat{A}+\vec{e} \mod p\ \mid\ & \mat{A}\leftarrow \Z^{\secparam\times n}_{p},\ \vec s \leftarrow \Z^{1\times \secparam}_p,\ \vec e \leftarrow \chi^{1\times n} \}\\
    \approx_c\ \{\mat A,\vec u\ \mid\ & \mat{A}\leftarrow \Z^{\secparam \times n}_{p},\ \vec u \leftarrow \Z_p^{1\times n}\}
\end{align*}

This assumption was first stated in the work of \cite{Regev05}. The version stated above is provably hard as long as $\mathsf{GAP}$-$\mathsf{SVP}$. 
 is hard to approximate to within subexponential factors in the worst case
 \cite{Regev05,STOC:Pei09,STOC:GenPeiVai08,FOCS:MicReg04,C:MicPei13}. $\mathsf{LWE}$ has been used extensively to construct applications such as Leveled Fully Homomorphic Encryption \cite{FOCS:BraVai11,ITCS:BraGenVai12,C:GenSahWat13}, Key-Homomorphic PRFs \cite{C:BLMR13}, Lockable Obfuscation \cite{FOCS:GoyKopWat17,FOCS:WicZir17}, Homomorphic Secret-Sharing \cite{EC:MukWic16,C:DHRW16}, Constrained PRFs \cite{TCC:BraVai15}, Attribute Based Encryption \cite{BGG+14,STOC:GorVaiWee13,C:GorVaiWee15} and Universal Thresholdizers \cite{C:BGGJ18}, to name a few.

\paragraph{The existence of $\mathsf{PRG}$s in $\mathsf{NC}^0$:} The assumption of the existence of a Boolean $\mathsf{PRG}$
in $\mathsf{NC}^0$
states that there exists a Boolean function ${G}:\{0,1\}^{n}\rightarrow \{0,1\}^m$ where $m=n^{1+\pstretch}$ for some constant $\pstretch>0$, and where each output bit computed by $\mathsf{G}$ depends on a constant number of input bits, such that the following computational indistinguishability holds:
\begin{align*}
 \{ G(\vec \sigma)\ \mid \ \vec \sigma \leftarrow {\left \{0,1\right \}}^n \} 
\approx_c
 \{ \vec y\ \mid\ \vec y \leftarrow {\left \{0,1\right \}}^m \} 
\end{align*}
Pseudorandom generators are a fundamental primitive in their own right, and have vast applications throughout cryptography.
$\mathsf{PRG}$s in $\mathsf{NC}^0$ are tightly connected to the
fundamental topic of Constraint Satisfaction Problems (CSPs) in
complexity theory, and were first proposed for cryptographic use by
Goldreich~\cite{Gol00,CM01} 20 years ago. The complexity theory and
cryptography communities have jointly developed a rich body of
literature on the cryptanalysis and theory of constant-locality
Boolean PRGs~\cite{Gol00,CM01,FOCS:MosShpTre03,TCC:AppBogRos12,BogdanovQ12,STOC:Applebaum12,CCC:OdoWit14,STOC:AppLov16,STOC:KMOW17,AC:CDMRR18}.

\paragraph{$\mathsf{LPN}$ over large fields:} Like $\mathsf{LWE}$, the $\mathsf{LPN}$ assumption over finite fields $\Z_p$ is also a decoding problem. The standard $\mathsf{LPN}$ assumption with respect to subexponential-size modulus $p$, dimension $\ell$, sample complexity $n(\ell)$ and a noise rate $r=1/\ell^{\delta}$ for $\delta \in (0,1)$  states that the following computational indistinguishability holds:
\begin{align*}
    \{\mat A, \vec s \cdot \mat{A}+\vec{e} \mod p\ \mid\ & \mat{A}\leftarrow \Z^{\ell\times n}_{p},\ \vec s \leftarrow \Z^{1\times \ell}_p,\ \vec e \leftarrow \D_{r}^{1\times n}\}\\
    \approx_c\
     \{\mat A,\vec u\ \mid \ & \mat{A}\leftarrow \Z^{\ell\times n}_{p},\ \vec u \leftarrow \Z_p^{1\times n}\}.
\end{align*}
Above $e\gets \D_{r}$ is a generalized Bernoulli distribution, \emph{i.e.} $e$ is sampled randomly from $\Z_p$ with probability ${1/\ell^{\delta}}$ and set to be $0$ otherwise. Thus, the difference between $\mathsf{LWE}$ and $\mathsf{LPN}$ is the structure of the error distribution. In $\mathsf{LWE}$ the error vector is a random (polynomially) bounded vector. In $\mathsf{LPN}$, it is a sparse random vector, but where it is nonzero, the entries have large expectation. 
The origins of the $\mathsf{LPN}$ assumption date all the way back to the 1950s: the works of Gilbert~\cite{Gilbert} and Varshamov~\cite{Varshamov} showed that random linear codes possessed remarkably strong minimum distance properties. However, since then, almost no progress has been made in efficiently decoding random linear codes under random errors. 
The $\mathsf{LPN}$ over fields assumption above formalizes this, and was formally defined for general parameters in 2009~\cite{IPS09}, under the name ``Assumption 2.'' While in~\cite{IPS09}, the assumption was used when the error rate is constant, in fact, polynomially low error (in fact $\delta = 1/2$) has an even longer history in the $\mathsf{LPN}$ literature: it was used by Alekhnovitch in 2003~\cite{FOCS:Alekhnovich03}  to construct public-key encryption with the field $\F_2$. The exact parameter settings that we describe above, with both general fields and polynomially low error, was explicitly posed by~\cite{CCS:BCGI18}. 

This assumption was posed for the purpose of building efficient secure two-party and multi-party protocols for arithmetic computations~\cite{IPS09,ITCS:AppAvrBrz15}. Earlier, $\mathsf{LPN}$ over binary fields was posed for the purpose of constructing identification schemes~\cite{AC:HopBlu01} and public-key encryption~\cite{FOCS:Alekhnovich03}. Recently, the assumption has led to a wide variety of applications (see for example, \cite{IPS09,ITCS:AppAvrBrz15,CCS:BCGI18,C:ADINZ17,CCS:DGNNT17,AC:GhoNieNil17,EC:BLMZ19,CCS:BCGIKRS19}).
A comprehensive review of known attacks on $\mathsf{LPN}$ over large fields, for the parameter settings we are interested in, was given in~\cite{CCS:BCGI18}. For our parameter setting, the best running time of known attacks is sub-exponential, for any choice of the constant $\delta \in (0,1)$ and for any polynomial $n(\ell)$

\subsection{Our Ideas in a Nutshell}
Previous work~\cite{AJS18,LM18,C:AJLMS19,EC:JLMS19,JLS19,GJLS20}
showed that to achieve $\iO$, it is sufficient to assume $\mathsf{LWE}$, $\mathsf{SXDH}$,
and
$\mathsf{PRG}$
in
$\mathsf{NC}^0$,
and one other object, that we will encapsulate as a \emph{structured-seed} PRG ($\mathsf{sPRG}$) with polynomial stretch and special efficiency properties. In an $\mathsf{sPRG}$, 
the seed to the 
$\mathsf{sPRG}$ consists of both a public and private part. The pseudorandomness property of the $\mathsf{sPRG}$ should hold even when the adversary can see the public seed in addition to the output of the $\mathsf{sPRG}$. Crucially, the output of the $\mathsf{sPRG}$
should be computable by a \emph{degree-2} computation in the private seed (where, say, the coefficients of this degree-2 computation are obtained through constant-degree computations on the public seed).

Our key innovation is a simple way to leverage $\mathsf{LPN}$ over fields to build an $\mathsf{sPRG}$. The starting point for our construction is the following observation.
Assuming $\mathsf{LPN}$ and that $G$ is an (ordinary) $\mathsf{PRG}$ in $\mathsf{NC}^0$ with  stretch $m(n)$, we immediately have the following computational indistinguishability:
\begin{align*}
 \Big\{(\vec{A}, \; \vec{b}= \vec{s}\cdot \vec{A}+ \vec{e} +
            \vec{\sigma},\; G(\vec{\sigma}))  \ \mid \ & \vec{A}\leftarrow \Z^{\ell\times n}_{p};\;
\vec{s}\leftarrow \Z^{1\times \ell}_{p};\;
\vec{e} \leftarrow \cD^{1 \times n}_{r}(p);\;
\vec{\sigma}\leftarrow \{0,1\}^{1 \times n}\Big \}\\
 \approx_c\ \Big \{ (\vec{A},\; \vec{u},\; \vec{w})\ \mid \ & \vec{A}\leftarrow \Z^{\ell\times n}_{p};\;
\vec{u}\leftarrow \Z^{1\times n}_{p};\; \vec{w} \leftarrow \{0,1\}^{1\times m(n)} \Big \}
\end{align*}

Roughly speaking, we can think of both $\vec{A}$ and $\vec{b}$ above as being public. All that remains is to show that the computation of $G(\vec{\sigma})$ can be performed using a degree-2 computation in a short-enough specially-prepared secret seed. Because $G$ is an arbitrary $\mathsf{PRG}$ in $\mathsf{NC}^0$, it will not in general be computable by a degree-2 polynomial in $\vec \sigma$. To accomplish this goal, we crucially leverage the \emph{sparseness} of the $\mathsf{LPN}$  error $\vec{e}$, by means of a simple pre-computation idea to ``correct'' for  errors introduced due to this sparse error. A gentle overview is provided in Section~\ref{sec:sprg}, followed by our detailed construction and analysis.

\newcommand{\blen}[1]{{\left\vert #1 \right\vert}}
\renewcommand{\dim}[1]{{\mathsf{dim}(#1)}}

\section{Preliminaries}
For any distribution $\mathcal{X}$, we denote by $x \leftarrow \mathcal{X}$  the process of sampling a value $x$ from the distribution $\mathcal{X}$. Similarly, for a set $X$ we denote by $x \leftarrow X$ the process of sampling  
$x$ from the uniform distribution over $X$. For an integer $n \in \mathbb{N}$ we denote by $[n]$ the set $\{1,..,n\}$. A function $\negl: \mathbb{N} \rightarrow \mathbb{R}$ is negligible if for every constant $c >0$ there exists an integer $N_c$ such that $\negl(\secparam)<\secparam^{-c}$ for all $\secparam>N_c$. Throughout, when we refer to polynomials in security parameter, we mean constant degree polynomials that take positive value on non negative inputs. We denote by $\poly(\secparam)$ an arbitrary polynomial in $\secparam$  satisfying the above requirements of non-negativity. We denote vectors by bold-faced letters such as $\vec{b}$ and $\vec{u}$. Matrices will be denoted by capitalized bold-faced letters for such as $\vec{A}$ and $\vec{M}$. 
For any $k\in \mathbb{N}$, we denote by the tensor product $\vec v^{\otimes k} = \underbrace{\vec v \otimes \dots \otimes
    \vecv}_{k}$ to be the standard tensor product, but converted back into a vector. We also introduce two new notations. First, for any vector $\vec{v}$ we refer by $\dim{\vec v}$ the dimension of vector $\vec v$. For any matrix $\mat{M} \in \Z^{n_1\times n_2}_q$, we denote by $\blen{\mat{M}}$ the  bit length of $\mat{M}$. In this case, $\blen{\mat{M}}=n_1\cdot n_2 \cdot \log_2 q$. We also overload this operator in that, for any set $S$, we use $\blen{S}$ to denote the cardinality of $S$. The meaning should be inferred from context.

For any two polynomials  $a(\secparam,n), b(\secparam,n):\mathbb{N} \times \mathbb{N}\rightarrow \mathbb{R}^{\geq 0}$, we say that $a$ is polynomially smaller than $b$, denoted as $a \ll b$, if there exists an $\epsilon \in (0,1)$ and a constant $c>0$ such that $a < b^{1-\epsilon} \cdot \lambda^c$ for all large enough $n,\lambda \in \mathbb{N}$.
The intuition behind this definition is to think of $n$ as being a sufficiently large polynomial in $\lambda$

\paragraph{Multilinear Representation of Polynomials and Representation over $\Z_p$.} In this work we will consider multivariate polynomials $p \in \Z[\vec x= (x_1,\ldots,x_n)]$ mapping $\{0,1\}^n$ to $\{0,1\}$. For any such polynomial there is a unique multilinear polynomial $p'$ (obtained by setting $x^2_i=x_i$) such that $p'\in \Z[\vec x]$ and $p'(\vec x)=p(\vec x)$ for all $\vec x\in \{0,1\}^n$. Further, such a polynomial can have a maximum degree of $n$. At times, we will consider polynomials $g\in \Z_p[\vec x]$ such that for every $\vec x \in \{0,1\}^n$, $g(\vec x) \mod p= p(\vec x)$. Such a polynomial $g$ can be constructed simply as follows. Let $p'(\vec x)= \sum_{S\subseteq [n]} c_{S} \ \Pi_{i\in S} x_i $. We can construct $g(\vec x)=\sum_{S\subseteq [n]} (c_{S} \mod p)\Pi_{i\in S} x_i$. Note that $g$ has degree at most the degree of $p'$ over $\Z$.
For polynomials of degree $d$, both the process described above can take $O(n^{d})$ time. 
In this work, we consider polynomials representing pseudorandom generators in $\mathsf{NC}^0$. Such polynomials depend only on a constant number of input bits, and thus their multilinear representations (and their field representations) are also constant degree polynomials. In this scenario, these conversions take polynomial time.

  \begin{definition}[$(T,\epsilon)$-indistinguishability]
  We say that two ensembles $\mathcal{X}=\{\mathcal{X}_{\secparam}\}_{\secparam \in \mathbb{N}}$ and $\mathcal{Y}=\{\mathcal{Y}_{\secparam}\}_{\secparam \in \mathbb{N}}$ are $(T,\epsilon)$-indistinguishable where $T: \mathbb{N}\rightarrow \mathbb{N}$ and $\epsilon: \mathbb{N}\rightarrow [0,1]$ if for every non-negative polynomial $\poly(\cdot,\cdot)$ and any adversary $\adversary$  running in time bounded by $T\poly(\secparam)$ it holds that: For every sufficiently large $\secparam \in \mathbb{N}$,
\begin{align*}
\bigg | \Pr_{x \leftarrow  \mathcal{X}_{\secparam}  }[     \adversary(1^{\secparam},x)=1  ] -\Pr_{y \leftarrow  \mathcal{Y}_{\secparam}} [\adversary(1^{\secparam},y)=1] \bigg | \leq \epsilon(\secparam).
\end{align*}

We say that two ensembles are $\epsilon$-indistinguishable if it is
$(\secparam,\epsilon)$-indistinguishable,
and is subexponentially $\epsilon$-indistinguishable if it is
$(T,\epsilon)$-indistinguishable for $T(\secparam)=2^{\secparam^c}$ for some positive constant
$c$. It is indistinguishable if it is
$\frac{1}{\secparam^c}$-pseudorandom for every
positive constant $c$, and subexponentially indistinguishable if $(T, 1/T)$-indistinguishable for $T(\secparam)=2^{\secparam^c}$ for some positive constant $c$.
\end{definition}
Below if the security a primitive or the hardness of an assumption are defined through indistinguishability, we say the primitive or assumption is $(T,\epsilon)$ secure, hard, or indistinguishable, or (subexponentially) secure, hard, or indistinguishable if the appropriate $(T,\epsilon)$-indistinguishability or (subexponentially) indistinguishability holds.


\paragraph{Indistinguishability Obfuscation.}
We now define our object of interest, Indistinguishability Obfuscation ($\iO$).
The notion of indistinguishability obfuscation (iO), first conceived by Barak et al.~\cite{BGIRSVY01}, guarantees that the obfuscation of two circuits are computationally indistinguishable as long as they both are equivalent circuits, i.e., the output of both the circuits are the same on every input. Formally, 

\newcommand{\inpl}{n}
\begin{definition} [Indistinguishability Obfuscator (iO) for Circuits]
\label{def:io} A uniform PPT algorithm $\iO$ is called a $(T,\gamma)$-secure indistinguishability obfuscator for polynomial-sized circuits 
if the following holds:
\begin{itemize}

\item {\bf Completeness:} For every $\secparam \in \mathbb{N}$, every circuit $C$ with input length $n$, every input $x \in \{0,1\}^{\inpl}$, we have that 
$$\prob \left[C'(x) = C(x)\ :\ C' \leftarrow \iO(1^\secparam,C) \right] = 1~.$$

\item {$(T,\gamma)$-{\bf Indistinguishability:}}  For every two ensembles  $\{C_{0,\secparam}\}$ $\{C_{1,\secparam}\}$ of polynomial-sized circuits that have the same size, input length, and output length, and are functionally equivalent, that is, $\forall \secparam$, $C_{0,\secparam}(x) = C_{1,\secparam}(x)$ for every input $x$, the following distributions are  $(T,\gamma)$-indistinguishable. 
\begin{align*}
\{\iO(1^\secparam,C_{0,\secparam})\} \qquad \{\iO(1^\secparam,C_{1,\secparam})\}    
\end{align*}


\end{itemize}
\end{definition}

\paragraph{LPN over Fields Assumption.}
In this work, we use the LPN assumption over a large field. This assumption has been used in a various works (see for example, \cite{IPS09,ITCS:AppAvrBrz15,CCS:BCGI18,C:ADINZ17,CCS:DGNNT17,AC:GhoNieNil17,EC:BLMZ19,CCS:BCGIKRS19}). We adopt the following definition from \cite{CCS:BCGI18}.

We set up some notation for the definition below. Let $p$ be any prime modulus. We define the distribution $\cD_{r}(p)$ as the distribution that outputs $0$ with probability $1-r$ and a random element from $\Z_p$ with the remaining probability.

\begin{definition}[$\flpn(\ell,n,r,p)$-Assumption, ~\cite{IPS09,ITCS:AppAvrBrz15,CCS:BCGI18}]
Let $\secparam$ be the security parameter.
For an efficiently computable prime modulus $p(\secparam)$, dimension $\ell(\secparam)$, sample complexity $n(\ell)$,  and noise rate $r(n)$ we say that the $\flpn(\ell,n,r,p)$ assumption is $(T,\gamma)$-secure / hard / indistinguishable if the following two distributions are $(T,\gamma)$-indistinguishable:
\begin{align*}
 \Big \{ \left (\vec{A}, \vec{b}= \vec{s}\cdot \vec{A}+ \vec{e} \right )\ | & \ \vec{A}\leftarrow \Z^{\ell\times n}_{p}, \ \vec{s}\leftarrow \Z^{1\times \ell}_{p}, \
\vec{e} \leftarrow \cD^{1 \times n}_{r}(p) \Big \}\\
 \Big \{ \left (\vec{A}, \vec u \right )\ | &\ \vec{A}\leftarrow \Z^{\ell\times n}_{p}, \ \vec u \leftarrow \Z^{1\times n}_p\Big \}
\end{align*}
\end{definition}
 We will set $\ell$ to be a large enough polynomial in $\secparam$,  set $r = \ell^{-\delta}$, for a constant $\delta \in (0,1)$, and set the number of samples $n = \ell^{c}$ for some constant $c>1$. Note that this setting of parameters was considered in detail in the work of ~\cite{CCS:BCGI18}. 
We refer the reader to~\cite{CCS:BCGI18}  for a comprehensive discussion of the history and security of this assumption.

\paragraph{Leakage Lemma.}
We will use the following theorem in our security proofs.
   \begin{theorem}[Imported Theorem \cite{CCL18}] 
    \label{thm:leakage}
Let $n, \ell \in \bN, \epsilon > 0$, and $\cC_{leak}$ be a family of distinguisher circuits from $\{0,1\}^n \times \{0,1\}^\ell \rightarrow \{0,1\}$ of size $s(n)$. Then, for every distribution $(X, W)$ over $\{0,1\}^n \times \{0,1\}^\ell$, there exists a simulator $h$ such that: 
    \begin{enumerate}
        \item $h$ is computable by circuits of size bounded by $s' = O(s2^\ell \epsilon^{-2})$, and maps $\{0,1\}^n \times \zo^{s'}\rightarrow \{0,1\}^\ell$. We denote by $U$ the uniform distribution over $\zo^{s'}$.
        \item $(X,W)$ and $(X, h(X,U))$ are $\epsilon$-indistinguishable by $\cC_{leak}$. That is, for every $C \in \cC_{leak}$,
            \[\Abs{\Pr_{(x,w) \from (X,W)}[C(x, w) = 1] - \Pr_{x \from X, u \from U}[C(x, h(x,u)) = 1]} \leq \epsilon\]
    \end{enumerate}
\end{theorem}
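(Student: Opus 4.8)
The statement is the auxiliary-input (``leakage'') simulation lemma, and the plan is to prove it by a boosting / multiplicative-weights argument, the non-uniform analogue of Impagliazzo's hardcore-set construction; the min-max theorem gives a quick existence proof, but the content of the lemma is the circuit-size bound, so I would run an explicit iterative procedure.

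First I would reformulate the goal. Identify a candidate simulator $h$ with an assignment $x \mapsto h_x$, where $h_x$ is a distribution over $\{0,1\}^\ell$, and for $x$ in the support of $X$ let $W_x$ denote the conditional distribution of $W$ given $X = x$. The requirement is that for every $C \in \mathcal{C}_{leak}$, $\left| \mathbb{E}_{(x,w) \sim (X,W)}[C(x,w)] - \mathbb{E}_{x \sim X,\, w \sim h_x}[C(x,w)] \right| \le \epsilon$. That some such $h$ exists at all, ignoring its complexity, is a one-line consequence of the von Neumann min-max theorem applied to the zero-sum game in which the simulator player picks $h$ and the distinguisher player picks a pair $(C,\pm)$: against any fixed mixed strategy of the distinguisher the simulator responds with $h_x = W_x$ and gets advantage $0$, so the value of the game is $0 \le \epsilon$. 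The real work is to realize $h$ by a circuit of size $O(s\, 2^\ell \epsilon^{-2})$.

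The procedure: initialize $h^{(0)}_x$ to the uniform distribution on $\{0,1\}^\ell$ for all $x$. Given $h^{(t)}$, if it already fools $\mathcal{C}_{leak}$ up to $\epsilon$ we are done; otherwise fix a distinguisher $C^{(t)} \in \mathcal{C}_{leak}$ and a sign $\sigma_t \in \{+1,-1\}$ witnessing advantage strictly greater than $\epsilon$, and perform the exponential-weights update $h^{(t+1)}_x(w) \propto h^{(t)}_x(w)\cdot \exp\!\big(\sigma_t\, \eta\, C^{(t)}(x,w)\big)$ with step size $\eta = \Theta(\epsilon)$. Termination follows from the potential $\Phi^{(t)} = \mathbb{E}_{x \sim X}\!\big[ D_{\mathrm{KL}}(W_x \,\|\, h^{(t)}_x) \big]$, which satisfies $0 \le \Phi^{(t)} \le \Phi^{(0)} \le \ell$; the standard one-step analysis of multiplicative weights — using that $C^{(t)}$ is bounded in $[0,1]$ pointwise and that the witnessed advantage exceeds $\epsilon$ — shows $\Phi^{(t)} - \Phi^{(t+1)} = \Omega(\epsilon^2)$, so the process halts after at most $T = O(\ell\, \epsilon^{-2})$ rounds.

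Finally, the complexity accounting and the place I expect the difficulty. The terminal $h$ is the circuit that hard-codes $C^{(1)},\dots,C^{(T)}$ together with the scalars $\sigma_t \eta$; on input $(x,u)$ it computes, for each of the $2^\ell$ strings $w$, the unnormalized weight $\prod_{j \le T} \exp(\sigma_j \eta C^{(j)}(x,w))$ (that is $T \cdot 2^\ell$ evaluations of size-$s$ circuits plus $2^\ell$ cheap arithmetic operations), and then uses the randomness $u$ to sample $w$ from the induced distribution by inverse-CDF sampling; this gives size $\tilde O(s\, 2^\ell \epsilon^{-2})$. Matching the clean bound $O(s\, 2^\ell \epsilon^{-2})$ in the statement, i.e.\ shaving the $\mathrm{poly}(\ell,\log(1/\epsilon))$ overhead, is the delicate step, and is exactly what the refined analysis of~\cite{CCL18} supplies: one recasts the update as a KL-projection, optimizes the round count, and verifies that carrying the exponential weights to only $O(\ell + \log(1/\epsilon))$ bits of precision neither spoils the potential decrease nor the final sampling. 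I expect that precision-and-round-count bookkeeping, rather than the conceptual boosting step, to be where the effort goes; a secondary point to watch is that each $C^{(t)},\sigma_t$ is produced by a possibly non-constructive argument, which is harmless since the conclusion asserts only the existence of a small circuit $h$, not a uniform algorithm computing it.
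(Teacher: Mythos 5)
This is labeled an \emph{imported} theorem, citing~\cite{CCL18}; the paper itself offers no proof, so there is no in-paper argument to compare against. Your sketch is nonetheless a faithful reconstruction of the boosting/multiplicative-weights strategy underlying the cited result (and its ancestors in Trevisan--Tulsiani--Vadhan, Jetchev--Pietrzak, and Vadhan--Zheng): the potential $\mathbb{E}_{x}[D_{\mathrm{KL}}(W_x \,\|\, h^{(t)}_x)]$ starts at most $\ell$, a round with witnessed advantage $>\epsilon$ drops it by $\Omega(\epsilon^2)$, so $T = O(\ell/\epsilon^2)$, and the terminal simulator hard-codes the $T$ witnessing distinguishers, enumerates $\{0,1\}^\ell$, and samples by inverse CDF. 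You also correctly flag that the naive accounting overshoots the advertised $O(s\,2^\ell\epsilon^{-2})$: the round count alone contributes $O(\ell\, s\, 2^\ell \epsilon^{-2})$, and representing the exponential weights to finite precision adds further polylogarithmic overhead; tightening both is exactly the technical content the cited work supplies, and your sketch defers to it appropriately rather than claiming to reprove it. One small note: the min-max paragraph is inert as written — it only re-derives the immediate fact that $h_x := W_x$ is a zero-advantage simulator, and the min-max theorem is not needed for that; since the entire content of the lemma is the circuit-size bound, you could drop that paragraph, or, if you keep it for motivation, be explicit that the simulator's strategy set is a product of simplices over $\{0,1\}^\ell$ indexed by $x \in \mathrm{supp}(X)$ and the distinguisher's is the simplex over the finite class $\cC_{leak}\times\{\pm 1\}$.
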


\section{Definition of Structured-Seed PRG}

\begin{definition}[Syntax of Structured-Seed Pseudo-Random Generators (sPRG)]
 Let $\pstretch$ be a positive constant.  
A structured-seed Boolean PRG, $\sprg$, with stretch $\pstretch$ that maps $(n\cdot \poly(\secparam))$-bit binary strings into
$(m = n^\pstretch)$-bit strings, where $\poly$ is a fixed polynomial, is defined by the following
PPT algorithms:
\begin{itemize}
\item $\pidsamp(1^{\secparam}, 1^n)$ samples a function index $\pid$. 
\item $\psdsamp(\pid)$ jointly samples two binary strings, a public seed
  and a private seed, $\sd = (\psd,\ssd)$. The combined length of
  these strings is $n \cdot \mathsf{poly}(\lambda)$.
\item $\peval(\pid, \sd)$ computes a string in $\zo^m$.
\end{itemize}
\end{definition}
\begin{remark}[Polynomial Stretch.]
We denote an $\sprg$ to have polynomial stretch if $\pstretch>1$ for some constant $\pstretch$.
\end{remark}
\begin{remark}[On $\poly(\secparam)$ multiplicative factor in the seed length.]
As opposed to a standard Boolean $\prg$ definition where the length of the output is set to be $n^{\pstretch}$ where $n$ is the seed length, we allow the length of the seed to increase multiplicatively by a fixed polynomial $\poly$ in a parameter $\secparam$. Looking ahead, one should view $n$ as an arbitrary large polynomial in $\secparam$, and hence $\sprg$ will be expanding in length.
\end{remark}

\begin{definition}[Security of sPRG]
A structured-seed Boolean PRG, $\sprg$, satisfies 
\begin{description}
\item[$(T(\secparam),\seclevel (\secparam))$-pseudorandomness:] the following distributions are $(T,\seclevel)$ indistinguishable.
\begin{align*}
    \{ \pid,\ \psd,\ \peval(\pid,\psd)\ |& \  \pid \gets \pidsamp(1^{\secparam},1^n),\ \sd \gets \psdsamp(\pid)\}\\
    \{ \pid,\ \psd,\ \vec r\ |& \  \pid \gets \pidsamp(1^{\secparam},1^n),\ \sd \gets \psdsamp(\pid),\ \vec r \gets \{0,1\}^{m(n)}\}
\end{align*}
\end{description}
\end{definition}

\begin{definition}[Complexity and degree of $\sprg$]
 Let $\pdeg \in \Nat$, let $\secparam \in \mathbb{N}$ and $n=n(\secparam)$ be arbitrary positive polynomial in $\secparam$, and $p = p(\secparam)$ denote a prime modulus which is
  an efficiently computable function in $\secparam$.  Let $\mathbb{C}$ be a
  complexity class.  A $\sprg$ has complexity $\mathbb{C}$ in the
  public seed and degree $\pdeg$ in private seed over $\Int_p$,
  denoted as, $\sprg \in (\mathbb C, \text{ deg } \pdeg)$, if  for every $\pid$ in the support of
  $\pidsamp(1^{\secparam},1^n)$, there exists 
  an algorithm
  $\pprocess_\pid$ in $\mathbb{C}$  and an $m(n)$-tuple of  polynomials $Q_\pid$ that can be efficiently generated from $\pid$, such that for all $\sd$ in the support of $\psdsamp(\pid)$, it
  holds that:
  \begin{align*}
    \peval(\pid, \sd) = Q_\pid(\ol\psd, \ssd)\; \text{over } \Int_p\; , \; \ol\psd
    = \pprocess_\pid(\psd)\; ,
  \end{align*}
  where $Q_\pid$ has degree 1 in $\ol\psd$ and degree $\pdeg$ in $\ssd$. 
\end{definition}

We remark that the above definition generalizes the standard notion of
families of PRGs in two aspects: 1) the seed consists of a public part
and a private part, and 2) the seed may not be uniform. Therefore, we
obtain the standard notion as a special case. 
\begin{definition}[Pseudo-Random Generators, degree, and locality]
  A (uniform-seed) Boolean PRG ($\prg$) is an $\sprg$
  with a seed sampling algorithm $\psdsamp(\pid)$ that outputs
  a public seed $\psd$ that is an empty string and a uniformly random
  private seed $\ssd \gets \zo^n$, where the polynomial $\poly$ is fixed to be $1$.

  Let $\pdeg, \plocality \in \Nat$. The $\prg$ has multilinear degree $\pdeg$ if
  for every $\pid$ in the
  support of $\pidsamp(1^n)$, we have that $\peval(\pid,\sd)$ can be written as an $m(n)$-tuple of degree-$\pdeg$ polynomials over $\mathbb{Z}$ in $\ssd$. It has constant
  locality $\plocality$ if for every $n \in \Nat$ and $\pid$ in the
  support of $\pidsamp(1^n)$, every output bit of $\peval(\pid,\sd)$
  depends on at most $\plocality$ bits of $\ssd$.
\end{definition}


\newcommand{\lset}{\vars}    

\newcommand{\NC}{\mathsf{NC}}
\newcommand{\good}{\mathsf{good}}
\newcommand{\bad}{\mathsf{bad}}

\newcommand{\overflow}{{\epsilon_t}}

\newcommand{\gap}{\kappa}
\newcommand{\thresh}{T}
\newcommand{\cordeg}{t}
\newcommand{\Err}{\mathsf{ERR}}
\newcommand{\Cor}{\mathsf{BAD}}
\newcommand{\flag}{\mathsf{flag}}
\newcommand{\vars}{\mathsf{Vars}}
\newcommand{\Inp}{\mathsf{Inp}}
\renewcommand{\set}[1]{\left\{#1\right\}}
\newcommand{\mnl}{h}
\newcommand{\mnlu}{u}
\newcommand{\mnlv}{v}
\newcommand{\mnlset}{M}
\newcommand{\hyb}{\mathsf{H}}

\newcommand{\phibkt}{\phi_{\mathsf{bkt}}}
\newcommand{\phiind}{\phi_{\mathsf{ind}}}

\newcommand{\imbal}{\eta}
\newcommand{\uvec}{\vec{id}}
\newcommand{\arithNC}{{\mathsf{arith}\text{-}\NC}}

\renewcommand{\dim}[1]{{\mathsf{dim}(#1)}}  
\newcommand{\corr}{\mathsf{Corr}}
\newcommand{\badset}{BAD}
\newcommand{\transpose}{\mathrm{T}}
\newcommand{\numbkt}{B}
\newcommand{\capbkt}{c}

\section{Construction of Structured Seed PRG}
\label{sec:sprg}

In this section, we construct a family of structured-seed PRGs whose
evaluation has degree 2 in the private seed, and constant degree in
the public seed; the latter ensures that the computation on the public
seed lies in $\arithNC_0$ (which is exactly the class of functions computed by constant-degree polynomials).

\begin{theorem}
\label{thm:main}
  Let $\secparam$ be the security parameter. Let $d\in \mathbb{N}, \sparsity>0, \pstretch >1$ be arbitrary constants and $n=\poly(\secparam)$ be an arbitrary positive non-constant polynomial. 
  
  Then, assuming the following:
  \begin{itemize}
\item the existence of a constant locality Boolean $\prg$ with
  stretch $\pstretch >1$ and multilinear degree $\pdeg$ over $\Int$, and,
  \item $\flpn(\ell,n,r,p)$-assumption
  holds with respect to dimension
  $\ell = \displaystyle{ n^{1/\lceil \frac{\pdeg}{2} \rceil}}$, error
  rate $r = \ell^{-\sparsity}$,
  \end{itemize}
  there exists an $\sprg$ with polynomial
  stretch in $(\arithNC^0,\text{ deg } 2)$ that is
  $\gamma$-pseudorandom for every constant $\gamma > 0$.
Additionally, if both assumptions are secure against $2^{\secparam^{\nu}}$ time adversaries for some constant $\nu>0$, then, $\sprg$ is subexponentially
  $\gamma$-pseudorandom for every constant $\gamma > 0$.
\end{theorem}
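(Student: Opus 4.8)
The plan is to build the $\sprg$ directly from the observation in the introduction. I would have $\pidsamp(1^\secparam,1^n)$ sample an $\flpn$ matrix $\vec A\in\Z_p^{\ell\times n}$ with $\ell=n^{1/\lceil\pdeg/2\rceil}$ together with a (random instance of the) assumed locality-$\plocality$, degree-$\pdeg$ Boolean $\prg$ $G\colon\zo^n\to\zo^{m}$, keeping only $m=n^{\pstretch'}$ output bits for a constant $\pstretch'\in(1,\pstretch]$ chosen small enough below. Then $\psdsamp(\pid)$ draws $\vec s\gets\Z_p^\ell$, a sparse error $\vec e$ from the $\flpn$ error distribution $\cD_r$ with $r=\ell^{-\sparsity}$, and $\vec\sigma\gets\zo^n$; it sets the public seed to $\psd=(\vec b=\vec s\cdot\vec A+\vec e+\vec\sigma,\ \text{a short auxiliary hint about }\mathrm{supp}(\vec e))$ and packs into the private seed $\ssd$ the tensor powers $\vec s^{\otimes\le\lceil\pdeg/2\rceil}$ (of dimension $\ell^{\lceil\pdeg/2\rceil}=n$, by the choice of $\ell$) together with a short list of error-correction values attached to the coordinates in $\mathrm{supp}(\vec e)$. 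Finally $\peval(\pid,\sd)$ outputs $G(\vec\sigma)$; note $\sigma_i=b_i-\langle\vec s,\vec a_i\rangle-e_i$ over $\Z_p$, where $\vec a_i$ is the $i$-th column of $\vec A$.

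\textbf{Complexity and degree (the main work).} For every coordinate with $e_i=0$ we have $\sigma_i=b_i-\langle\vec s,\vec a_i\rangle$, an affine form in $\vec s$ whose coefficients are entries of the public seed; hence any output bit $G_j$ whose $\plocality$-coordinate support $S_j$ avoids $\mathrm{supp}(\vec e)$ equals a degree-$\le\pdeg$ polynomial in $\vec s$ over $\Z_p$ (using the field representation of the multilinear extension of $G_j$), and splitting every degree-$\le\pdeg$ monomial into two factors of degree $\le\lceil\pdeg/2\rceil$ makes it a bilinear form in $\vec s^{\otimes\le\lceil\pdeg/2\rceil}$ with coefficients that are constant-degree polynomials in $(\vec A,\vec b)$ --- so $\pprocess_\pid\in\arithNC^0$ can place them in $\ol\psd$. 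For output bits whose support meets $\mathrm{supp}(\vec e)$, I would expand $G_j$ around $\vec y=\vec b-\vec s\cdot\vec A$ via multilinearity, $G_j(\vec\sigma|_{S_j})=\sum_{B\subseteq S_j}\prod_{i\in B}(-e_i)\,(\partial_B G_j)(\vec y)$, so the deviation from the ``clean'' value is $\sum_{\emptyset\ne B\subseteq S_j,\ B\subseteq\mathrm{supp}(\vec e)}\prod_{i\in B}(-e_i)\,(\partial_B G_j)(\vec y)$; the only secret factors are the $e_i$, and since the error is \emph{sparse} there are few such $B$. These correction terms are exactly what the error-correction part of $\ssd$ is designed to cover as degree-$2$ monomials (pairing one error-laden private variable with one $\vec s$-monomial private variable, using $2\lceil\pdeg/2\rceil-1\ge\pdeg-1$), while the auxiliary public hint about $\mathrm{supp}(\vec e)$ lets $\pprocess$ select which correction to apply to which $j$ and zero out the rest. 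Putting these together gives $\peval(\pid,\sd)=Q_\pid(\ol\psd,\ssd)$ with $Q_\pid$ of degree $1$ in $\ol\psd$ and degree $2$ in $\ssd$, i.e. $\sprg\in(\arithNC^0,\text{ deg }2)$.

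\textbf{Seed length, stretch, and pseudorandomness.} The tensor part of $\ssd$ has $n\cdot\poly(\secparam)$ bits; since $\mathrm{supp}(\vec e)$ has only $O(rn)$ coordinates (with concentration, as $rn=n^{\Omega(1)}$), the error-correction material is of lower order, so the whole seed is $n\cdot\poly(\secparam)$ bits while the output is $m=n^{\pstretch'}$ bits --- choosing $\pstretch'>1$ (at most $\min(\pstretch,1+\sparsity/\lceil\pdeg/2\rceil)$, and truncating $G$'s output accordingly) makes $m$ polynomially larger than the seed, giving polynomial stretch. For security I would use three hybrids: (i) as $\vec\sigma$ and the auxiliary hint are simulatable without touching the $\flpn$ instance $(\vec A,\vec s\cdot\vec A+\vec e)$ --- invoking the Leakage Lemma (Theorem~\ref{thm:leakage}) to simulate the short hint from the $\flpn$ challenge alone --- $\flpn(\ell,n,r,p)$ hardness lets me replace $\vec b$ by $\vec u+\vec\sigma$ with $\vec u\gets\Z_p^n$, a uniform vector carrying no information about $\vec\sigma$; (ii) now $\vec\sigma$ is a fresh uniform string entering the view only through $G(\vec\sigma)$, so $\prg$ security replaces $G(\vec\sigma)$ by uniform $\vec r\gets\zo^m$; (iii) undo step (i). Each step costs at most the assumed distinguishing advantage, so $\sprg$ is $\gamma$-pseudorandom for every constant $\gamma>0$, and if both assumptions hold against $2^{\secparam^\nu}$-time adversaries the same chain (absorbing the polynomial loss) gives subexponential pseudorandomness.

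\textbf{Main obstacle.} The crux is the treatment of the $\flpn$ error in the complexity/degree step: one must simultaneously (a) keep every correction term expressible as a degree-$2$ monomial in a private seed of only $n\cdot\poly(\secparam)$ bits --- which is precisely why the degree-halving dimension $\ell=n^{1/\lceil\pdeg/2\rceil}$ is forced and why the argument needs the error to be \emph{sparse} rather than merely small as in $\mathsf{LWE}$ --- and (b) let a \emph{constant-degree, public} preprocessing route the right correction to each of the $m=n^{\pstretch'}$ output bits using only a hint that is short enough for the Leakage Lemma and coarse enough to avoid leaking $\mathrm{supp}(\vec e)$ precisely (revealing it exactly would expose an $\mathsf{LWE}$-with-binary-noise structure open to Arora--Ge linearization). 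Balancing these competing constraints --- seed length, degree $2$, public routability, and leakage-safety --- is the technical heart of the construction, and is where I expect most of the effort to go.
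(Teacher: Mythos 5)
Your setup (LPN-encrypting $\vec\sigma$ into $\vec b=\vec s\vec A+\vec e+\vec\sigma$, putting tensor powers of $\vec s$ in the private seed, degree-halving via $\ell=n^{1/\lceil d/2\rceil}$, the three-hybrid security argument replacing $\vec b$ by uniform via $\flpn$ and then $G(\vec\sigma)$ by uniform via $\prg$-security, and truncating the output to a constant $\pstretch'\in(1,\pstretch]$) all matches the paper's construction. The genuine gap is the correction mechanism, which you correctly flag as the ``technical heart'' but do not actually solve, and the sketch you give cannot be made to work.

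Specifically, you propose to publish ``a short auxiliary hint about $\mathrm{supp}(\vec e)$'' in $\psd$ so that $\pprocess$ can ``select which correction to apply to which $j$ and zero out the rest,'' and to simulate this hint via the Leakage Lemma. But there are roughly $\thresh=m\log n/\ell^\delta$ bad output bits, so any hint that suffices to route a correction to each bad $j$ must carry $\Omega(\thresh)$ bits of information --- polynomially many. The Leakage Lemma's simulator has size $O(s\cdot 2^\ell\cdot\epsilon^{-2})$ for $\ell$-bit leakage, so it is only useful when $\ell=O(\log\secparam)$; with a polynomial-length hint the simulator is super-exponential and the hybrid argument breaks. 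Conversely, a hint coarse enough to be $O(\log\secparam)$ bits cannot encode where the corrections go, so your two requirements (``short enough for the Leakage Lemma'' and ``lets $\pprocess$ select which correction to apply to which $j$'') are mutually exclusive. This is exactly the paper's ``second wrong idea'': storing per-$\bad$-output corrections kills security because the evaluation polynomial then depends on the LPN noise locations.

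The paper's resolution, which your proposal is missing, is to make the evaluation polynomial entirely oblivious of $\mathrm{supp}(\vec e)$. It fixes, as part of the reusable function index (not the seed), a random assignment $\phibkt:[m]\to[\numbkt]$ of output indices to buckets and an index map $\phiind$, and packs each bucket's correction vector into a $\sqrt\capbkt\times\sqrt\capbkt$ matrix $\mat M_i$ with at most $t=\secparam$ nonzero entries whp. Since $\rank(\mat M_i)\le t$, it factors $\mat M_i=\mat U_i\mat V_i$ with $\mat U_i,\mat V_i$ small, stores only the factors in $\ssd$, and adds $(\mat U_{\phibkt(j)}\mat V_{\phibkt(j)})_{\phiind(j)}$ to \emph{every} output $j$ (for good $j$ this term is $0$ by construction). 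The only leakage is a single bit $\flag$ marking the low-probability events that a bucket overflows or has too many bad entries, and that one bit is what the Leakage Lemma is invoked on, with $2^1$-overhead. This rank-factorization-plus-random-bucketing trick is the missing idea; without it, expansion and security cannot both be satisfied.
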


\paragraph{Technical Overview.} Let $\prg=(\pidsamp,\peval)$ be the Boolean PRG with multilinear degree
$\pdeg$ and stretch $\pstretch$.  Our $\sprg$
will simply evaluate $\prg$ on an input $\vec \sigma \in \{0,1\}^n$ and return its output $\vec y \in \{0,1\}^{m}$ where $m=n^{\pstretch}$. The
challenge stems from the fact that the evaluation algorithm
$\peval_\pid(\vec \sigma)$ of $\prg$ has degree $\pdeg$ in its
private seed $\vec \sigma$, but the evaluation
algorithm $\peval'_\pid(\psd, \ssd)$ of $\sprg$ can only have degree 2
in the private seed $\ssd$. To resolve this, we pre-process
$\vec \sigma$ into appropriate public and private seeds $(\psd, \ssd)$
and leverage the LPN assumption over $\Int_p$ to show that the seed is
hidden.

Towards this, $\sprg$ ``encrypts'' the seed $\vec \sigma$ using LPN
samples over $\Int_p$ as follows:
\begin{align*}
  \text{Sample: } & \vec A \gets \Int_p^{\ell \times n}, \; 
                    \vec s  \gets \Int_p^{1\times \ell},\;  \vec{e} \leftarrow \cD^{1
                    \times n}_{r}(p)\\
  \text{Add to the function index $\pid'$: }  &  \vec A\\
  \text{Add to public seed $\psd$: }  &  \vec b = \vec s \vec A 
                                        + \vec e + \vec \sigma 
\end{align*}
It follows directly from the LPN over $\Int_p$ assumption that
$(\vec A, \vec b)$ is pseudorandom and hides $\vec
\sigma$. Furthermore, due to the sparsity of LPN noises, the vector
$\vec \sigma + \vec e$ differs from $\vec \sigma$ only at a
$r=\ell^{-\delta}$ fraction of components -- thus it is a sparsely
erroneous version of the seed.

Given such ``encryption'', by applying previous
techniques~\cite{C:AJLMS19,EC:JLMS19,JLS19,GJLS20} that work essentially by
``replacing monomials'' -- previous works replace monomials in the PRG seed with polynomials in the
LWE secret, and we here replace the monomials in the erroneous seed with polynomials in the LPN
secret -- we can compute $\prg$ on the erroneous seed
$\vec \sigma + \vec e$ via a polynomial $G^{(1)}$ (that depends on
$\vec A$) that has degree $\pdeg$ on the public component $\vec b$ and
only degree 2 on all possible degree $\lceil \frac{d}{2}\rceil$
monomials in $\vec s$. More precisely,
\begin{align}\label{eq:replace-var}
  \vec y'= \peval_\pid\big(\vec \sigma + \vec e \big) = G^{1}\big( \vec
  b\ ,\ (\ol{\vec s}^{\otimes \lceil \frac{d}{2}\rceil} ) \big), \qquad \ol{\vec s}  = \vec s||1
\end{align}

where $\vec v^{\otimes k}$ denotes tensoring the vector $\vec v$ with itself $k$
times, yielding a vector of dimension $\dim{\vec v}^k$. In particular,
observe that by setting the dimension $\ell$ of secret $\vec s$ to be
sufficiently small, the polynomial $G^{(1)}$ can be expanding; this is done by setting parameters $\ell(n)$ so that
$ \left(\ell^{\lceil \frac{d}{2}\rceil} + n\right)\ll
m(n)$. The reasoning behind  comparing the the number of output bits $m=n^{\pstretch}$ with the number of field elements in the seed of $\sprg$ is that if $m\gg \dim{(\vec b, \ol{\vec s}^{\otimes \lceil \frac{d}{2}\rceil})}$, then, we have polynomial expansion because the the length of the modulus $p$ is at most $\secparam$ bits which is asymptotically smaller than the parameter $n$.


However, the new problem is that even though the degree fits, $G^{(1)}$
only evaluates an erroneous output
$\vec y'= \peval_\pid(\vec \sigma+ \vec e)$, but we want to obtain the
correct output $\vec y= \peval_\pid(\vec \sigma)$. To correct errors,
we further modify the polynomial and include more pre-processed
information in the private seeds. Our key observation is the
following: Because LPN noises are sparse, and because $\peval_\pid$
has only constant locality, only a few outputs depend on erroneous
seed locations. We refer to them as $\bad$ outputs and let $\Cor$
denote the set of their indices. By a simple Markov argument, the
number of $\bad$ outputs is bounded by
$\thresh=m r \log n=\frac{m \log n}{\ell^{\sparsity}}$
with probability $1-o(1)$.
Leveraging this sparsity, $\sprg$ corrects $\mathsf{bad}$ outputs
using the method described below. In the low probability event where
there are too many $\bad$ outputs (greater than $\thresh$), it simply outputs 0.

We describe a sequence of ideas that lead to the final correction
method, starting with two wrong ideas that illustrate the difficulties we will overcome.
\begin{itemize}
\item The first wrong idea is correcting by adding the difference
  $\corr=\vec y-\vec y'$ between the correct and erroneous outputs,
  $\vec{y} =\peval_{\pid}(\vec \sigma)$ and
  $\vec y'=\peval_{\pid}(\vec \sigma + \vec e)$; refer to $\corr$ as
  the {\em correction vector}. To obtain the correct output,
  evaluation can compute the following polynomial
  $G^{(1)}\big( \vec b\ ,\ (\ol{\vec s}^{\otimes \lceil
    \frac{d}{2}\rceil} ) \big) + \corr$.
  The problem is that $\corr$ must be included in the seed, but it is
  as long as the output and would kill expansion.
  
\item To fix expansion, the second wrong idea is adding correction
  only for $\bad$ outputs, so that the seed only stores
  non-zero entries in $\corr$, which is short (bounded by $\thresh$
  elements). More precisely, the $j$'th output can be computed as
  $G^{(1)}_j\big( \vec b\ ,\ (\ol{\vec s}^{\otimes \lceil
    \frac{d}{2}\rceil} ) \big) + \corr_j$ if output $j$ is $\bad$
  and without adding $\corr_j$ otherwise.
  This fixes expansion, but now the evaluation polynomial depends on
  the location of $\bad$ outputs, which in turn leaks information
  of the location of LPN noises, and jeopardizes security.
\end{itemize}

The two wrong ideas illustrate the tension between the expansion and
security of $\sprg$. Our construction takes care of both, by {\em
  compressing} the correction vector $\corr$ to be polynomially
shorter than the output and stored in the seed, and {\em expanding} it
back during evaluation in a way that is oblivious of the location of
$\bad$ output bits. This is possible thanks to the sparsity of the
correction vector and the allowed degree 2 computation on the private
seed. Let's first illustrate our ideas in two simple cases.
\begin{description}
\item[Simple Case 1: Much fewer than $\sqrt m$ $\bad$ outputs.]
  Suppose hypothetically that the number of $\bad$ outputs is
  bounded by $z$ which is much smaller than $\sqrt m$. Thus, if we convert
  $\corr$ into a $\sqrt m \times \sqrt m$ matrix\footnote{Any
    injective mapping from a vector to a matrix that is efficient to
    compute and invert will do.}, it has low rank $z$. We can
  then factorize $\corr$ into two matrixes $\mat{U}$ and $\mat{V}$ of
  dimensions $\sqrt m \times z$ and $z \times \sqrt m$ respectively,
  such that $\corr = \mat U \mat V$, and compute the correct output as
  follows:
  \begin{equation*}
    \forall j \in [m],\ G_j^{(2)}\big( \vec b\ ,\ (\ol{\vec s}^{\otimes \lceil
    \frac{d}{2}\rceil},\ \mat U, \mat V) \big)=
    G^{(1)}_j\big( \vec b\ ,\ (\ol{\vec s}^{\otimes \lceil
    \frac{d}{2}\rceil} ) \big) + (\mat U \mat V)_{k_j,l_j}~, 
\end{equation*}
where $(k_j,l_j)$ is the corresponding index of the output bit $j$, in
the $\sqrt m \times \sqrt m$ matrix.
When $z \ll \sqrt m$, the matrices $\mat U, \mat V$ have 
$2 z\sqrt m$ field elements, which is polynomially smaller than $m=n^{\pstretch}$.  As
such, $G^{(2)}$ is expanding.

Moreover, observe that $G^{(2)}$ has only
degree 2 in the private seed and is completely oblivious of where the
$\bad$ outputs are.

  

\item[Simple Case 2: Evenly spread $\bad$ outputs.] The above
  method however cannot handle more than $\sqrt m$ $\bad$ outputs,
  whereas the actual number of $\bad$ outputs can be up to
  $\thresh=m(\log n)/\ell^{\sparsity}$, which can be much larger than
  $\sqrt m$ since $\sparsity$ is an arbitrarily small constant. Consider
  another hypothetical case where the $\bad$ outputs are evenly
  spread in the following sense: If we divide the matrix $\corr$ into
  $\frac{m}{\ell^\sparsity}$ blocks, each of dimension
  ${\ell^{\sparsity/2}}\times{\ell^{\sparsity/2}}$, there are at most
  $\log n$ $\bad$ outputs in each block. In this case, we can
  ``compress'' each block of $\corr$ separately using the idea from
  case 1. More specifically, for every block
  $i \in [\frac{m}{\ell^\sparsity}]$, we factor it into
  $\mat U_i \mat V_i$, with dimensions
  ${\ell^{\sparsity/2}} \times \log n$ and
  $\log n \times {\ell^{\sparsity/2}}$ respectively, and correct
  $\bad$ outputs as follows:
  \begin{align*}
    \forall j \in [m],\ G_j^{(2)}\left( \vec b\ ,\ \left(\ol{\vec s}^{\otimes \lceil
    \frac{d}{2}\rceil},\ \left(\mat U_i, \mat V_i\right)_{i \in [\frac{m}{\ell^\sparsity}]} \right)\right)=
    G_j^{(1)}\left( \vec b\ ,\ (\ol{\vec s}^{\otimes \lceil
    \frac{d}{2}\rceil} ) \right) + (\mat U_{i_j} \mat V_{i_j})_{k_j,l_j}~,
  \end{align*}
  where $i_j$ is the block that output $j$ belongs to, and
  $(k_j,l_j) \in [\ell^{\sparsity/2}]\times [\ell^{\sparsity/2}]$
  is its index within this
  block. 
  We observe that $G^{(2)}$ is expanding, since each matrix $\mat U_i$
  or $\mat V_i$ has ${\ell^{\sparsity/2}}\log n$ field elements,
  and the total number of elements is
  ${\ell^{\sparsity/2}}\log n \cdot \frac{m}{\ell^\sparsity}$
  which is polynomially smaller than $m$ as long as $\sparsity$ is
  positive.  Moreover, $G^{(2)}$ is oblivious of the location of $\bad$
  outputs just as in case 1. 
  
\end{description}
At this point, it is tempting to wish that $\bad$ outputs must be
evenly spread given that the LPN noises occur at random
locations. This is, however, not true because the input-output
dependency graph of $\prg$ is arbitrary, and the location of $\bad$
outputs are correlated. Consider the example that every output bit
of $\prg$ depends on the first seed bit. With probability
$\frac{1}{\ell^\sparsity}$ it is erroneous and so are all outputs.

To overcome this, our final idea is to ``force'' the even spreading of the $\bad$
outputs, by assigning them randomly into $\numbkt$ buckets, and then
compress the correction vector corresponding to each bucket.
\begin{description}
\item[Step 1: Randomly assign outputs.] We assign the outputs
  into $\numbkt$ buckets, via a random mapping
  $\phibkt: [m]\rightarrow [\numbkt]$.  The number of buckets is set to
  $\numbkt = \frac{mt}{\ell^\sparsity}$ where $t$ is a {\em slack
    parameter} set to $\secparam$.  By a Chernoff-style argument, we can
  show that each bucket contains at most $t^2\ell^{\sparsity}$
  output bits, and at most $t$ of them are $\bad$, except with
  negligible probability in $t$, which is also negligible in $\secparam$. As such,
  $\bad$ outputs are evenly spread among a small number of not-so-large buckets.

\item[Step 2: Compress the buckets.] Next, we organize each bucket $i$
  into a matrix $\mat M_i$ of dimension
  $t\ell^{\sparsity/2}\times t\ell^{\sparsity/2}$ and then compute its
  factorization $\mat M_i = \mat U_i \mat V_i$ with respect to
  matrices of dimensions $t\ell^{\sparsity/2}\times t$ and
  $t \times t\ell^{\sparsity/2}$ respectively. To form matrix
  $\mat M_i$, we use another mapping
  $\phiind : [m] \rightarrow [t\ell^{\sparsity/2}]\times
  [t\ell^{\sparsity/2}]$ to assign each output bit $j$ to an index
  $(k_j, l_j)$ in the matrix of the bucket $i_j$ it is assigned
  to. This assignment must guarantee that no two output bits in the
  same bucket (assigned according to $\phibkt$) have the same index;
  other than that, it can be arbitrary. $(\mat M_{i})_{k,l}$ is set to
  $\corr_j$ if there is $j$ such that $\phibkt(j)= i$ and
  $\phiind(j)= (k,l)$, and set to 0 if no such $j$ exists. Since every
  matrix $\mat M_i$ has at most $t$ non-zero entries, we can factor
  them and compute the correct output as:
  \begin{align*}
    \forall j \in [m],\ G_j^{(2)}\Big( \vec b\ ,\ \underbrace{\left(\ol{\vec s}^{\otimes \lceil
    \frac{d}{2}\rceil},\ \left(\mat U_i, \mat V_i\right)_{i \in [\numbkt]} \right)}_{\ssd}\Big)=
    G_j^{(1)}\left( \vec b\ ,\ (\ol{\vec s}^{\otimes \lceil
    \frac{d}{2}\rceil} ) \right) + (\mat U_{\phibkt(j)} \cdot \mat V_{\phibkt(j)})_{\phiind(j)}~,
  \end{align*}
  $G^{(2)}$ is expanding, because the number of field elements in $\mat U_i$'s and $\mat V_i$'s are much smaller than $m$,
  namely:
  $2 t^2\ell^{\sparsity/2}\numbkt =O( \frac{m \secparam^3}{\ell^{\sparsity/2}} )\ll m$. Note that it is important that the
  assignments $\phibkt$ and $\phiind$ are not included in the seed as
  their description is as long as the output. Fortunately, they are
  reusable and can be included in the function index
  $\pid'=(\pid, \vec A, \phibkt, \phiind)$.

\item[Step 3: Zeroize if uneven buckets.] Finally, to deal with the low
  probability event that some bucket is assigned more than
  $t^2\ell^{\sparsity}$ outputs or contains more than $t$ $\bad$
  outputs, we introduce a new variable called $\flag$. If either of the conditions above occur, our $\sprg$ sets $\flag=0$ and outputs zero. We then
  include $\flag$ in the public seed and augment the evaluation
  polynomial as follow:
  \begin{align*}
    \forall j \in [m],\
    \ G_j^{(3)}\Big(\underbrace{(\vec b,\ \flag)}_{\psd} ,\ \ssd \Big)
    = \flag \cdot G_j^{(2)}\left( \vec b,\ \ssd \right)~.
  \end{align*}
  This is our final evaluation polynomial. It has constant degree $d+1$ in the
  public seed $\psd$, degree 2 in the private seed $\ssd$, and
  expansion similar to that of $G^{(2)}$. For security, observe that
  the polynomial $G^{(3)}$ is independent of the location of LPN
  noises, while the public seed leaks 1-bit of information through
  $\flag$, which can be simulated efficiently via leakage
  simulation. Therefore, by the LPN over $\Int_p$ assumption, the seed
  $\vec \sigma$ of $\prg$ is hidden and the security of $\prg$ ensures
  that the output is pseudorandom when it is not zeroized. We now proceed to the formal construction and proof.
\end{description}

\paragraph{Construction.} We now formally describe our scheme.  Assume
the premise of the theorem. Let $(\pidsamp, \peval)$ be the function
index sampling algorithm and evaluation algorithm for the $\prg$. Recall that
its seed consists of only a private seed sampled uniformly and
randomly.


We first introduce and recall some notation. The construction is parameterized by
\begin{itemize}
\item $\secparam$ is the security parameter,
\item $n$ input length to the $\prg$. $n$ is arbitrary polynomial in $\secparam$,
\item the stretch $\pstretch$ and degree $\pdeg$ of $\prg$. Set $m=n^{\pstretch}$,
\item the LPN secret dimension $\ell = n^{1/\lceil d/2 \rceil}$,
  modulus $p$ be a $\secparam$ bit prime modulus,
\item a threshold $\thresh=\frac{m\cdot \log n}{\ell^{\sparsity}}$ of the
  number of $\bad$ outputs that can be tolerated,
\item a slack parameter $t$ used for bounding the capacity of and
  number of $\bad$ outputs in each bucket, set to
  $t=\secparam$.
\item a parameter $\numbkt = \frac{m\cdot t}{\ell^{\sparsity}}$ that
  indicates the number of buckets used.
\item a parameter $\capbkt = t^2 \ell^{\sparsity}$ that
  indicates the capacity of each bucket. 
\end{itemize}

\begin{description}
\item[$\pid' \leftarrow \pidsamp'(1^\secparam,1^{n'})$:] (Note that the PRG seed length $n$ below
  is an efficiently computable polynomial in $n'$, and can be inferred
  from the next seed sampling algorithm. See Claim \ref{claim:stretch} for the exact relationship between $n$ and $n'$.)\\ Sample
  $\pid \gets \pidsamp(1^\secparam,1^n)$ and
  $\vec A \leftarrow \Z^{\ell \times n}_p$. Prepare two functions
  $\vec \phi=(\phibkt, \phiind)$ as follows:
  \begin{itemize}
  \item Sample a random function $\phibkt:[m]\rightarrow [\numbkt]$ mapping
    every output location to one of $\numbkt$ buckets.  Let $\phibkt^{-1}(i)$ for
    $i \in [\numbkt]$ denote the set of preimages of $i$ through
    $\phibkt$. This set contains all outputs assigned to the
    bucket $i$.

  \item Prepare $\phiind:[m]\rightarrow [\sqrt \capbkt ]\times [\sqrt \capbkt]$ in two cases:
    \begin{itemize}
    \item {\em If some bucket exceeds capacity}, that is, there exists $i \in [\numbkt]$ such
      that $| \phibkt^{-1}(i) | > \capbkt$, set $\phiind$ to
      be a constant function always outputting $(1,1)$.
      
    \item {\em Otherwise if all buckets are under capacity}, for every
      index $j \in [m]$, $\phiind$ maps $j$ to a pair of indexes
      $(k_j,l_j) \in [\sqrt \capbkt]\times [\sqrt \capbkt]$, under the
      constraint that two distinct output bits $j_1 \ne j_2$ that are
      mapped into the same bucket $\phibkt(j_1) = \phibkt(j_2)$ must
      have distinct pairs of indices $\phiind(j_1)\neq \phiind(j_2)$.
    \end{itemize}
  \end{itemize}
  Output $\pid'=(\pid,\vec \phi,\vec A)$.

\item[$\sd \leftarrow \psdsamp'(\pid')$:] Generate the
  seed as follows: 
  \begin{itemize}
  \item Sample a $\prg$ seed $\vec \sigma \gets \{0,1\}^n$. 
  \item Prepare samples of LPN over $\Int_p$: Sample $\vec s  \gets \Int_p^{1\times \ell}$,  $\vec{e} \leftarrow \cD^{1
    \times n}_{r}(p)$, and set  
  \begin{align*}
    \vec b = \vec s \vec A +  {\color{black} \vec \sigma +  \vec e}~.
  \end{align*}
  \item Find indices $i \in [n]$ of seed bits where $\vec \sigma +  \vec e$ and $\vec
    \sigma$ differ, which are exactly these indices where $\vec e$ is
    not 0, and define:
    \begin{align*}
      \Err = \set{i \mid \sigma_i + e_i \ne \sigma_i} = \set{i \mid
      e_i \ne 0}~. 
    \end{align*}
    We say a seed index $i$ is {\em erroneous} if $i \in \Err$.  Since
    LPN noise is sparse, errors are sparse. 
    
  \item Find indices $j \in [m]$ of outputs that depend on one or more erroneous seed indices. Let $\vars_j$ denote the
    indices of seed bits that the $j$'th output  of $\eval_{\pid}$
    depends on. Define:
    \begin{align*}
      \Cor = \set{j \mid |\vars_j \cap \Err| \ge 1}~.
    \end{align*}
    We say an output index $j$ is $\bad$ if $j \in \Cor$, and  $\good$
    otherwise.
    
    \item Set $\flag=0$ if
      \begin{enumerate}
    \item {\em Too many $\bad$ output bits:} $\vert \Cor \vert > \thresh$, 
    \item {\em {\bf or} Some bucket exceeds capacity:} $\exists i \in [\numbkt]$,
      $\vert \phibkt^{-1}(i) \vert > \capbkt$, 
    \item {\em {\bf or} Some bucket contains too many $\bad$ outputs:}
      $\exists i \in [\numbkt]$, $\vert \phibkt^{-1}(i) \cap \Cor \vert> t$.
    \end{enumerate}
    Otherwise, set $\flag=1$.

  \item Compute the outputs of $\prg$ on input the correct seed and
    the erroneous seed, $\vec{y}=\peval_{\pid}(\vec{\sigma})$ and
    $\vec{y}'=\peval_{\vec{I}}(\vec{\sigma} + \vec{e})$. Set the
    correction vector $\corr = \vec y - \vec y'$.
    
  \item Construct matrices $\mat{M}_1, \dots, \mat{M}_\numbkt$, by
    setting
    \begin{align*}
      \forall j \in [m], \  \left(\mat{M}_{\phibkt(j)}\right)_{\phiind(j)}=\corr_j 
    \end{align*}
    Every other entry is set to $0$.

  \item ``Compress'' matrices $\mat{M}_1, \dots, \mat{M}_\numbkt$ as follows:
    \begin{itemize}
    \item If $\flag=1$, for every $i\in [\numbkt]$ compute
      factorization
    \begin{align*}
      \mat{M}_i=\mat{U}_i\mat{V}_i, \qquad \mat{U}_i \in \Z^{\sqrt
      \capbkt \times t }_p, \ \mat{V}_i \in \Z^{t\times \sqrt
      \capbkt }_p
    \end{align*}
    This factorization exists because when $\flag = 1$,
    condition 3 above implies that each $\mat{M}_i$ has at most $t$
    nonzero entries, and hence rank at most $t$.

  \item If $\flag=0$, for every $i\in [\numbkt]$, set $\mat{U}_i$ and
    $\mat{V}_i$ to be $0$ matrices.
    \end{itemize}
        
  \item Set the public seed to
    \begin{align*}
      \psd = ( \vec b, \flag)~.
    \end{align*}
    
  \item Prepare the private seed $\ssd$ as follows. Let
    $\ol{\vec s} = \vec s||1$.
    \begin{align}
    \label{eq:ssd}
      \ssd= \left (\ol{\vec{s}}^{\otimes \lceil \frac{d}{2}\rceil}, \left \{ \mat{U}_i ,\mat{V}_i \right\}_{i\in [\numbkt]} \right)
    \end{align}

  Output $\sd = (\psd, \ssd)$ as $\Int_p$
  elements.

\end{itemize}

\item[$\vec y \rightarrow \peval'(\pid', \sd)$:] Compute
  $\vec y \leftarrow \peval(\pid, \vec \sigma)$, and output
  $\vec z=\flag \cdot \vec y$. This computation is done via a
  polynomial $G_{\pid'}^{(3)}$ described below that has constant degree $\pdeg+1$ in
  the public seed and only degree 2 in the private seed, that is, 
  \begin{align*}
    \peval'(\pid',\sd)= \flag \cdot \vec y = \flag \cdot  \peval_\pid(\vec \sigma) = G_{\pid'}^{(3)}(\psd, \ssd)~.
  \end{align*}
  We next define $G_{\pid'}^{(3)}$ using intermediate polynomials
  $G^{(1)}_{\pid'}$ and $G^{(2)}_{\pid'}$. For simplicity of notation,
  we suppress subscript $\pid'$ below.
  \begin{itemize}
  \item Every output bit of $\peval$ is a linear combination of degree
    $\pdeg$ monomials (without loss of generality, assume that all
    monomials have exactly degree $\pdeg$ which can be done by
    including 1 in the seed $\vec \sigma$).
  \begin{notation}
    Let us introduce some notation for monomials. A monomial $\mnl$
    on a vector $\vec a$ is represented by the set of indices
    $\mnl=\set{i_1, i_2, \dots, i_k}$ of variables used in it. $\mnl$
    evaluated on $\vec a$ is $\prod_{i \in \mnl} a_i$ if
    $\mnl \ne \emptyset$ and 1 otherwise. We will use the notation
    $a_\mnl = \prod_{i \in \mnl} a_i$. 
    We abuse notation to also use a polynomial $g$ to denote
    the set of monomials involved in its computation; hence
    $\mnl \in g$  says monomial $\mnl$ has a nonzero coefficient in $g$.
  \end{notation}
   With the above notation, we can write $\peval$ as
  \begin{align*}
    \forall j \in [m], \ \ y_j =\peval_j(\vec \sigma) =
    L_j((\sigma_\mnl)_{\mnl \in \peval_j}) \ , \text{ for a linear
     $L_j$  }.
  \end{align*}
  

\item $(\vec A, \vec b = \vec s \vec A + \vec x)$ in the public seed
  encodes $\vec x=\vec{\sigma} + \vec{e}$. Therefore, we can compute every monomial
  $x_{\mnlv}$ as follows:
  \begin{align*}
    x_i & = \innerp{ \vec c_i,\ \ol{\vec s}} & \vec c_i = -\vec  a^{\transpose}_i || b_i,
    \ \vec a_i \text{ is the $i$th column of  $\vec A$}\\
    x_\mnlv & = \innerp{ \otimes_{i \in \mnlv} \vec c_i,\ \otimes_{i
                  \in \mnlv} \ol{\vec  s}} & 
  \end{align*}
  (Recall that
  $\otimes_{i \in \mnlv} \vec z_i = \vec z_{i_1} \otimes \dots \otimes
  \vec z_{i_k}$ if $\mnlv = \set{i_1, \dots, i_k}$ and is not empty;
  otherwise, it equals 1.) 
  Combining with the previous step, we obtain a polynomial
  $G^{(1)}(\vec b, \ssd)$ that computes $\peval(\vec \sigma + \vec
  e)$: 
  \begin{align}
    G_j^{(1)}( \vec b,\ssd) \coloneqq  L_j\left(\left (\innerp{ \otimes_{i \in \mnlv} \vec c_i,\ \otimes_{i
    \in \mnlv} \ol{\vec  s}}\right )_{\mnlv \in
    \peval_j}\right)~. \label{eq:G1} 
  \end{align}
  Note that $G^{(1)}$, by which we mean $G^{(1)}_{\pid'}$, implicitly
  depends on $\vec A$ contained in $\pid'$.  Since all relevant
  monomials $\mnlv$ have  degree $\pdeg$, we have that $G^{(1)}$ has degree at
  most $\pdeg$ in $\psd$, and degree 2 in $\ssd$. The latter follows
  from the fact that $\ssd$ contains
  $\ol{\vec{s}}^{\otimes \lceil \frac{d}{2}\rceil}$ (see
  Equation~\eqref{eq:replace-var}), and hence $\ssd\otimes \ssd$
  contains all monomials in ${\vec s}$ of total degrees $\pdeg$.

  Since only $\bad$ outputs depend on erroneous seed bits such
  that $\sigma_i + e_i \ne \sigma_i$, we have that the output of
  $G^{(1)}$ agrees with the correct output
  $\vec y = \peval(\vec \sigma)$ on all $\good$ output bits.
  \begin{align*}
    \forall j \not\in \Cor,\  \peval_j(\vec \sigma) = G_j^{(1)}( \vec b, \ssd)~.
  \end{align*}
  
\item To further correct $\bad$ output bits, we add to $G^{(1)}$ all
  the expanded correction vectors as follows:
  \begin{align*}
      G^{(2)}_j(\psd , \ssd)  \coloneqq  G^{(1)}_j( \vec b,\ssd) +
                               \left(\mat{U}_{\phibkt(j)} \mat{V}_{\phibkt}(j)\right)_{\phiind(j)}
                             = G^{(1)}_j( \vec b,\ssd) + \left(\mat{M}_{\phibkt(j)}\right)_{\phiind(j)} ~.
      \end{align*}
      We have that $G^{(2)}$ agrees with the correct output
      $\vec y = \eval(\vec \sigma)$ if $\flag = 1$. This is because
      under the three conditions for $\flag =1$, every entry $j$ in
      the correction vector $\corr_j$ is placed at entry
      $\left(\mat M_{\phibkt(j)}\right)_{\phiind(j)}$. Adding it back
      as above produces the correct output.

    Observe that the function is quadratic in $\ssd$ and  degree $\pdeg$ in the public component of the seed $\psd$.

  \item When $\flag = 0$, however, $\sprg$ needs to output all
    zero. This can be done by simply multiplying $\flag$ to the output of
    $G^{(2)}$, giving the final
    polynomial
    \begin{align}
      G^{(3)}(\psd, \ssd) = \flag \cdot G^{(2)}(\psd, \ssd)~.
    \end{align}
    At last, $G^{(3)}$ has  degree $\pdeg + 1$ in the public seed, and only degree 2
    in the private seed, as desired. 
  \end{itemize}
\end{description}
  
\paragraph{Analysis of Stretch.}
We derive a set of constraints, under which $\sprg$ has polynomial
stretch. Recall that $\prg$ output length is $m = n^\pstretch$, degree
$\pdeg$, LPN secret dimension $\ell = n^{1/\lceil d/2\rceil}$, modulus
$p = O(2^{\secparam})$, and the slack parameter $t=\secparam$. 

\begin{claim}
\label{claim:stretch}
 For the parameters as set in the Construction, $\sprg$ has stretch of $\pstretch'$ for some constant $\pstretch'>1$.
 \end{claim}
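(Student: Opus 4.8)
The plan is to verify that the private seed $\ssd$ and the public seed component $\vec{b}$ together contain polynomially fewer $\Int_p$-elements than the output length $m = n^{\pstretch}$, and then observe that since each $\Int_p$-element is at most $\secparam$ bits while $n$ is an arbitrarily large polynomial in $\secparam$, the bit-length of the seed is also polynomially smaller than $m$, giving stretch $\pstretch' > 1$.

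First I would simply count. The public seed is $\psd = (\vec{b}, \flag)$, where $\vec{b} \in \Int_p^{1\times n}$ contributes $n$ field elements and $\flag$ contributes one bit. The private seed, per Equation~\eqref{eq:ssd}, is $\ssd = (\ol{\vec{s}}^{\otimes \lceil d/2\rceil}, \{\mat{U}_i, \mat{V}_i\}_{i\in[\numbkt]})$. The tensor $\ol{\vec{s}}^{\otimes \lceil d/2\rceil}$ has $(\ell+1)^{\lceil d/2\rceil}$ entries; since $\ell = n^{1/\lceil d/2\rceil}$, this is $O(n)$ (up to lower-order terms from the $+1$). Each $\mat{U}_i$ has $\sqrt{\capbkt}\cdot t = t\ell^{\sparsity/2}\cdot t = t^2\ell^{\sparsity/2}$ entries and likewise each $\mat{V}_i$; with $\numbkt = m t/\ell^{\sparsity}$ buckets, the matrices together contribute $2 t^2\ell^{\sparsity/2}\cdot \numbkt = 2 t^3 m / \ell^{\sparsity/2} = O(m\secparam^3/\ell^{\sparsity/2})$ field elements. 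So the total number of field elements in $\sd$ is $O(n + m\secparam^3/\ell^{\sparsity/2})$.

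Next I would compare this with $m = n^{\pstretch}$. Since $n$ is a non-constant polynomial in $\secparam$ and $\ell = n^{1/\lceil d/2\rceil}$ is an (unbounded) polynomial in $\secparam$, we have $\ell^{\sparsity/2} \ge \secparam^4$ for all large enough $\secparam$, so $m\secparam^3/\ell^{\sparsity/2} \le m/\secparam \ll m$; and $n \le m^{1/\pstretch} \ll m$ since $\pstretch > 1$. Hence the number of field elements is $O(m/\secparam) + o(m) \le m/2$ for large $\secparam$, and in fact is bounded by $m^{1-\epsilon_0}\cdot\poly(\secparam)$ for a suitable constant $\epsilon_0 > 0$ (coming from both the $\ell^{\sparsity/2}$ gap and the $n \ll m$ gap). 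Multiplying by $\lceil \log_2 p\rceil = O(\secparam)$ bits per element, the total seed bit-length $n' = \blen{\sd}$ satisfies $n' \le m^{1-\epsilon_0}\cdot\poly(\secparam)$. Then $m = n^\pstretch \ge (n')^{\pstretch'}$ for some constant $\pstretch' > 1$: indeed, writing $n'\le m^{1-\epsilon_0}\poly(\secparam)$ and using that $n$ (hence $m$) dominates any fixed polynomial in $\secparam$, we get $m \ge (n')^{1/(1-\epsilon_0) - o(1)}$, so any $\pstretch'$ with $1 < \pstretch' < 1/(1-\epsilon_0)$ works for large enough $\secparam$. This also pins down the relationship $n = n(n')$ promised in the statement of $\pidsamp'$.

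The only mild subtlety — and the step I would be most careful about — is bookkeeping the $\poly(\secparam)$ factors against the polynomial gap, i.e.\ ensuring the advantage $m\secparam^3/\ell^{\sparsity/2} \ll m$ genuinely holds given that $\ell$ is itself only "a large enough polynomial in $\secparam$"; this is exactly where the convention that $n$ (and thus $\ell$) is chosen as a sufficiently large polynomial in $\secparam$, together with the $\ll$ notation from the preliminaries, is used. Everything else is a routine counting argument, so I would present it as such and not belabor the constants.
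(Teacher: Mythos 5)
Your proof is correct and follows essentially the same counting argument as the paper: tally the $\Int_p$-elements in $\psd$ and $\ssd$, observe that the dominant term $O(m\secparam^3/\ell^{\sparsity/2})$ is polynomially smaller than $m$ because $\ell^{\sparsity/2} = n^{\sparsity/(2\lceil d/2\rceil)}$ and $n$ is a sufficiently large polynomial in $\secparam$, then account for the $\log p = O(\secparam)$ bit-length per element. The paper's proof records the same bound in the form $\blen{\sd} = O((n + n^{\pstretch - \sparsity/(2\lceil d/2\rceil)})\cdot\secparam^4)$ before setting $n'$ and reading off $\pstretch'$; your explicit absorption of the $\poly(\secparam)$ factors via the ``$n$ sufficiently large'' convention is exactly the implicit step there.
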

\begin{proof}
Let's start by analyzing the length of the public and
private seeds. 
\begin{itemize}
\item The public seed contains $\psd = (\vec b, \flag)$ and
  has bit length
\begin{align*}
   \ \blen{\psd} = O(n \log p) =
    O(n\cdot \secparam)~.
\end{align*}

\item The private seed $\ssd$ contains $\ssd_1, \ssd_2$ as follows:
   \begin{align*}
    \label{eq:ssd1}
       \ssd_1 =  \ol{\vec s}^{\otimes \lceil \frac{\pdeg}{2} \rceil},\qquad
     \ssd_2 = \left \{ \mat{U}_{i}, \mat{V}_i \right \}_{i\in [\numbkt]}.
    \end{align*}
The bit-lengths are:
  \begin{align*}
    \blen{ \ssd_1}  = & (\ell+1)^{\lceil d/2 \rceil}\log  p & \\
    = & O\left(n^{\frac{1}{\lceil d/2
        \rceil}}\right)^{\lceil d/2 \rceil}\log  p =
        O(n\cdot \secparam)   & \text{by } \ell = n^{\lceil d/2
        \rceil}, \ \log  p =\secparam \\ 
    \blen{\ssd_2} = & 2 \numbkt\cdot t\cdot \sqrt \capbkt\cdot \log  p
   &\\
    = & \frac{2m t}{\ell^{\sparsity}} \cdot t \cdot t \ell^{\sparsity/2}\cdot
        \log  p
        = \frac{2m t^3\log p}{\ell^{\sparsity/2}}  &
    \text{by }\numbkt= \frac{m t}{\ell^{\sparsity}},\ \capbkt = t^2\ell^\sparsity \\
    = & \frac{2m \secparam^4} {\ell^{\sparsity/2}} & \text{by }t=\secparam\\
  \end{align*} 
\end{itemize} 
\vspace{-3ex}
Because $\ell^{\sparsity/2} = n^{\frac{\sparsity}{2\lceil\frac{\pdeg}{2}\rceil}}$ and $m=n^\pstretch $, we have:
\begin{align*}
    \blen{\sd}= \blen{\psd}+\blen{\ssd_1}+\blen{\ssd_2} = O((n+ n^{\pstretch -\frac{\sparsity}{2\lceil \frac{\pdeg}{2}\rceil}})\cdot \secparam^4)
\end{align*}
We set $n'=O(n+ n^{\pstretch -\frac{\sparsity}{2\lceil \frac{\pdeg}{2}\rceil}})$, therefore $m={n'}^{\pstretch'}$ for some $\pstretch'>1$. This concludes the proof.
\end{proof}

\paragraph{Proof of Pseudorandomness} 
We prove the following proposition which  implies that
$\sprg$ is $\gamma$-pseudorandom for any constant $\gamma$.

\begin{proposition} \label{propo:sprg-sec}
Let $\ell, n, r, p$ be defined as above.  
For any running time $T = T(\secparam) \in \mathbb{N}$, if
\begin{itemize}
\item $\flpn(\ell,n,r,p)$ is
  $(T,\epsilon_{\flpn})$-indistinguishable for  advantage
  $\epsilon_{\flpn} = o(1)$, and
\item $\prg$ is $(T, \epsilon_{\prg})$-pseudorandom for  advantage
  $\epsilon_{\prg} = o(1)$,
\end{itemize}
$\sprg$ satisfies that for every constant $\gamma \in (0,1)$, the following two distributions are $(T,\gamma)$-indistinguishable.
  \begin{eqnarray*}
   \Big\{\ (\pid,\vec \phi ,\vec A,\vec b, \flag, \vec z) & : &   (\pid,\ \vec \phi, \ \vec A) \gets
    \pidsamp'(1^{n'}), \ (\psd,\ssd) \gets \psdsamp'(\pid'),\ \vec z \gets \peval'(\pid, \sd) \Big\} \\
     \Big\{\ (\pid, \vec \phi,\vec A , \vec b,  \flag, \vec r)  & : &
       (\pid,\ \vec \phi, \ \vec A) \gets \pidsamp'(1^{n'}),\
       (\psd,\ssd) \gets \psdsamp'(\pid'),\
       \vec r \gets \zo^m 
     \Big\},                                     
  \end{eqnarray*}
  (Recall that  $P = (\vec b, \flag)$.)
\end{proposition}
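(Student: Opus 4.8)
The plan is to prove pseudorandomness via a hybrid argument that moves from the real distribution (with $\peval'(\pid,\sd) = \flag\cdot\peval_\pid(\vec\sigma)$) to the random distribution. The key structural observation is that the output $\vec z$, together with the public seed $\psd = (\vec b,\flag)$ and the index $\pid'=(\pid,\vec\phi,\vec A)$, is a function of only $(\pid,\vec\phi,\vec A,\vec\sigma,\vec e)$ — the private seed $\ssd$ (the tensored LPN secret and the factorizations $\mat U_i,\mat V_i$) never appears in the challenge distribution at all; it is only used to argue that $G^{(3)}$ has the right degree. So the task reduces to showing that
\[
 (\pid,\vec\phi,\vec A,\ \vec b = \vec s\vec A + \vec\sigma + \vec e,\ \flag,\ \flag\cdot\peval_\pid(\vec\sigma))\ \approx_c\ (\pid,\vec\phi,\vec A,\ \vec b,\ \flag,\ \vec r).
\]

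\textbf{Step 1: Handle the $\flag$ bit.} Since $\flag$ is a deterministic function of $(\pid,\vec\phi,\vec A,\vec\sigma,\vec e)$ and depends on $\vec b$ only through $\vec\sigma+\vec e$, I would first argue that $\Pr[\flag = 0]$ is $o(1)$, in fact negligible. This is the union of three bad events: more than $\thresh$ bad outputs, some bucket over capacity $\capbkt$, some bucket with more than $t$ bad outputs. The first is a Markov/Chernoff bound on $|\Cor|$ using sparsity $r=\ell^{-\sparsity}$ and constant locality of $\prg$ (expected number of bad outputs is $\le m\cdot c\cdot r$, so $|\Cor|>\thresh = mr\log n$ fails with probability $o(1)$ — actually one wants a sharper Chernoff bound here to get negligible); the second and third are Chernoff bounds over the random choice of $\phibkt:[m]\to[\numbkt]$ with $\numbkt = mt/\ell^\sparsity$, $t=\secparam$. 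Conditioned on $\flag = 1$ the challenge output is exactly $\peval_\pid(\vec\sigma)$, so up to negligible loss I can replace $\vec z$ by $\peval_\pid(\vec\sigma)$ throughout.

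\textbf{Step 2: Invoke leakage simulation for $\flag$.} Even after Step 1, $\flag$ leaks one bit correlated with $\vec e$ (hence with $\vec\sigma$), which blocks a direct appeal to LPN. Here I would apply the Leakage Lemma (Theorem~\ref{thm:leakage}) with $\ell = 1$ (one bit of leakage) and the distinguisher class being circuits that, given $(\vec A,\vec b)$ and the one-bit leakage, run the remaining computation (apply $\pprocess$, compute $\peval$, run the purported iO/FE distinguisher). This replaces $\flag$ by $h(\vec A,\vec b;\,U)$ for an efficient simulator $h$ of size $O(s\cdot 2\cdot\epsilon^{-2})$, losing $\epsilon$, at the cost of blowing up the adversary's running time polynomially — fine as long as we start from $(T,\cdot)$-hardness of LPN and $\prg$ with $T$ large enough. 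Now the modified distribution is a function of $(\pid,\vec\phi,\vec A,\vec b)$ and fresh randomness, with $\vec b = \vec s\vec A+\vec\sigma+\vec e$ still the only place the secrets enter.

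\textbf{Step 3: Apply LPN.} Now $\vec A,\vec b=\vec s\vec A+(\vec\sigma+\vec e)$ where $\vec\sigma+\vec e = \vec\sigma + \vec e$; but LPN as stated gives pseudorandomness of $\vec s\vec A + \vec e'$ for a fresh sparse $\vec e'$, not of $\vec s\vec A + \vec e + \vec\sigma$. The standard trick: view $\vec b = \vec s\vec A + \vec e + \vec\sigma$; since $\vec\sigma\in\{0,1\}^n$ is independent of $(\vec A,\vec e)$, by LPN $(\vec A,\ \vec s\vec A+\vec e)\approx_c(\vec A,\vec u)$ for uniform $\vec u$, so $(\vec A,\vec b)=(\vec A,\ \vec s\vec A+\vec e+\vec\sigma)\approx_c(\vec A,\ \vec u+\vec\sigma)=(\vec A,\vec u')$ for uniform $\vec u'$ — so $\vec b$ becomes uniform and in particular independent of $\vec\sigma$. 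After this switch, the simulated $\flag$ and the challenge output $\peval_\pid(\vec\sigma)$ are evaluated with a $\vec\sigma$ that is now information-theoretically hidden given everything else. Then I apply $(T,\epsilon_\prg)$-pseudorandomness of $\prg$ to replace $\peval_\pid(\vec\sigma)$ with uniform $\vec r\gets\zo^m$. Finally I undo Steps 2 and 1 in reverse (un-simulate $\flag$, restore $\vec b = \vec s\vec A+\vec\sigma+\vec e$, restore $\vec z = \flag\cdot\vec r$), noting that $\flag\cdot\vec r$ is statistically close to $\vec r$ since $\flag = 1$ w.h.p. Summing the advantages gives $\gamma$-indistinguishability with $\gamma = o(1) + \epsilon + \epsilon_\flpn + \epsilon_\prg$, which can be driven below any constant; with subexponential hardness the same bookkeeping yields subexponential security.

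\textbf{Main obstacle.} The delicate point is the ordering and compatibility of Steps 2 and 3: the leakage simulator $h$ for $\flag$ must take as input exactly the quantities that will later become uniform (namely $(\vec A,\vec b)$), so that after applying LPN the simulated $\flag$ no longer depends on $\vec\sigma$ or $\vec e$; I must check that $\flag$ is indeed computable from $(\vec A,\vec b,\vec\sigma,\vec e)$ — it is, since $\Err$ is determined by $\vec e$ and the bucketing by $\vec\phi$ — and that the distinguisher reduction's size $s$ is polynomial so that $s' = O(s\epsilon^{-2})$ stays within the time budget $T$. The second subtlety is quantitative: to conclude $\gamma$-pseudorandomness for \emph{every} constant $\gamma$ (and subexponential security), the $\flag=0$ event and the leakage error $\epsilon$ must each be made an arbitrarily small inverse polynomial (resp. $2^{-\secparam^c}$), which requires the Chernoff bounds in Step 1 to actually give negligible — not just $o(1)$ — failure probability; this is where setting $t = \secparam$ (rather than $\log n$) in the bucketing is essential.
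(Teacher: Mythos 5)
Your high-level plan matches the paper's: bound $\Pr[\flag=0]$ statistically, simulate $\flag$ via the leakage lemma, invoke LPN and the PRG to randomize $(\vec b, \vec y)$, and collect the advantages. However, there is a concrete gap in your Step~2, and a misconception in your quantitative analysis.

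\textbf{Gap in the leakage-simulation step.} You propose to invoke Theorem~\ref{thm:leakage} with $X = (\vec A, \vec b)$, replacing $\flag$ by $h(\vec A, \vec b; U)$, and to let the leakage distinguisher ``run the remaining computation''. This does not work. The leakage lemma only guarantees $(X, W) \approx_\epsilon (X, h(X,U))$ against circuits that see $(X,W)$. But the actual adversary in the proposition also sees $\vec z = \flag\cdot \peval_\pid(\vec\sigma)$, and $\peval_\pid(\vec\sigma)$ cannot be reconstructed from $(\vec A, \vec b)$: a distinguisher given only $(\vec A, \vec b, W)$ cannot compute $\vec\sigma$ and hence cannot compute $\vec y$. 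Moreover, $\flag$ (a function of $\vec e$ and $\vec\phi$) and $\vec y$ (a function of $\vec\sigma$) are genuinely correlated conditioned on $\vec b = \vec s\vec A + \vec e + \vec\sigma$, so a simulator that ignores $\vec y$ produces a distribution that is statistically distinguishable from the real one by a $\vec y$-aware adversary. The fix — and what the paper does — is to set $X = (\pid, \vec\phi, \vec A, \vec b, \vec y)$ and simulate $\flag$ as $\Sim(\pid, \vec\phi, \vec A, \vec b, \vec y)$. After that point, Lemma~\ref{lem:flpnprg} switches $(\vec A, \vec b, \vec y)$ to $(\vec A, \vec u, \vec r)$ in one shot (LPN then PRG), and $\Sim$ is then evaluated on the switched tuple.

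\textbf{Misplaced concern about negligible $\Pr[\flag = 0]$.} You worry that a plain Markov bound on $|\Cor|$ gives only $o(1)$ rather than negligible failure, and you suggest sharpening it with Chernoff. There is no such sharpening available: the indicators $\mathbf{1}[j \in \Cor]$ are not independent (they share seed bits through the arbitrary input--output dependency of $\prg$), so Chernoff does not apply, and the paper's Markov bound giving $O(1/\log n)$ is essentially tight in general. But this is fine, because the proposition claims only $(T,\gamma)$-indistinguishability for arbitrary \emph{constant} $\gamma$, not negligible advantage; the paper establishes $\Pr[\flag=0] = O(1/\log n)$ in $\hyb_0$, carries it through the computational hybrids (since ``output $\flag$'' is itself a valid efficient distinguisher), and shows $\Pr[\flag=0 \text{ in } \hyb_2] \le O(1/\log n) + \gamma/3 + \epsilon_\prg + \epsilon_\flpn \le \gamma/2$, which suffices. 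The parameter $t=\secparam$ is indeed needed for the two \emph{bucketing} conditions to fail with negligible probability — you are right on that — but the $|\Cor|>\thresh$ condition only needs $o(1)$, and that is all one can get. Finally, your ``forward then undo'' ordering introduces unnecessary bookkeeping (and as written, the undo of Step~1 would land you at $\flag\cdot\vec r$, requiring yet another statistical step to reach $\vec r$); the paper's linear chain $\hyb_0 \to \hyb_1 \to \hyb_2 \to \hyb_3$ is cleaner and avoids double-counting the $\Pr[\flag=0]$ loss.
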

We start with some intuition behind the proposition. Observe first
that if $\flag$ is removed, the above two distributions becomes truly
indistinguishable. This follows from the facts that i) $\pid$ and
$\vec \phi$ are completely independent of $(\vec A, \vec b, \vec z)$
or $(\vec A, \vec b, \vec r)$, and ii) $(\vec A, \vec b, \vec z)$ and
$(\vec A, \vec b, \vec r)$ are indistinguishable following from the
LPN over $\Int_p$ assumption and the pseudorandomness of $\prg$. The
latter indistinguishability is the heart of the security of $\sprg$,
and is captured in Lemma~\ref{lem:flpnprg} below. Towards the
proposition, we need to additional show that publishing $\flag$ does
not completely destroy the indistinguishability. This follows from the
facts that i) $\flag$ is only 0 with sub-constant probability, and ii)
it can be viewed as a single bit leakage of the randomness used for
sampling the rest of the distributions, and can be simulated
efficiently by the leakage simulation lemma,
Theorem~\ref{thm:leakage}. The formal proof of the proposition below
presents the details.

\begin{lemma}
\label{lem:flpnprg}
Let $G:\{0,1\}^{1 \times n}\rightarrow \{0, 1\}^{1\times m(n)}$ be a
$(T,\epsilon_{\prg})$-secure pseudorandom generator. Assume that
$\flpn(\ell,n,r,p)$ is $(T,\epsilon_{\flpn})$-secure. Then the following two distributions are $(T,\epsilon_{\flpn}+\epsilon_{\prg})$-indistinguishable:
\begin{align*}
\cD_{1}   & = \Big\{(\vec{A}, \; \vec{b}= \vec{s}\cdot \vec{A}+ \vec{e} +
            \vec{\sigma},\; G(\vec{\sigma}))  \ : \ \vec{A}\leftarrow \Z^{\ell\times n}_{p};\;
\vec{s}\leftarrow \Z^{1\times \ell}_{p};\;
\vec{e} \leftarrow \cD^{1 \times n}_{r}(p);\;
\vec{\sigma}\leftarrow \{0,1\}^{1 \times n}\Big \}\\
\cD_{2}  & = \Big \{ (\vec{A},\; \vec{u},\; \vec{w})\ : \  \vec{A}\leftarrow \Z^{\ell\times n}_{p};\;
\vec{u}\leftarrow \Z^{1\times n}_{p};\; \vec{w} \leftarrow \{0,1\}^{1\times m(n)} \Big \}
\end{align*}
\end{lemma}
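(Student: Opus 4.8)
The plan is a two-hybrid argument that invokes $\flpn(\ell,n,r,p)$ exactly once and the pseudorandomness of $\prg$ exactly once. Introduce the intermediate distribution
\[
\cD_{1.5} = \Big\{\, (\vec A,\ \vec u,\ G(\vec \sigma)) \ : \ \vec A \gets \Z_p^{\ell\times n},\ \vec u \gets \Z_p^{1\times n},\ \vec \sigma \gets \{0,1\}^{1\times n}\, \Big\},
\]
and show $\cD_1 \approx_c \cD_{1.5}$ using $\flpn$ and $\cD_{1.5} \approx_c \cD_2$ using $\prg$ security; the lemma then follows by the triangle inequality. By the definition of $(T,\gamma)$-indistinguishability it suffices to exhibit reductions running in time $T\cdot\poly(\secparam)$, since the polynomial overhead is absorbed into the quantification over all $T\poly$-time adversaries.

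For the first step, given any $\adversary$ distinguishing $\cD_1$ from $\cD_{1.5}$ I will build an $\flpn$ distinguisher $\mathcal{B}$: on an $\flpn$ challenge $(\vec A,\vec c)$ --- where $\vec c = \vec s\vec A + \vec e$ with $\vec s\gets\Z_p^{1\times\ell}$, $\vec e \gets \cD^{1\times n}_r(p)$ in the real case, and $\vec c \gets \Z_p^{1\times n}$ in the random case --- the reduction samples $\vec\sigma \gets \{0,1\}^{1\times n}$ on its own, sets $\vec b = \vec c + \vec\sigma$, and outputs whatever $\adversary(\vec A,\vec b,G(\vec\sigma))$ outputs. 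In the real case the triple is distributed exactly as $\cD_1$. In the random case $\vec c$ is uniform and independent of $\vec\sigma$, so $\vec b = \vec c + \vec\sigma$ is uniform over $\Z_p^{1\times n}$ and independent of $\vec\sigma$, hence the triple is distributed as $\cD_{1.5}$. Thus $\mathcal{B}$ has the same advantage as $\adversary$, which is therefore at most $\epsilon_{\flpn}$; the only overhead of $\mathcal{B}$ is sampling $\vec\sigma$ and evaluating $G$, so it runs in time $T\cdot\poly(\secparam)$.

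For the second step, given any $\adversary$ distinguishing $\cD_{1.5}$ from $\cD_2$ I will build a $\prg$ distinguisher $\mathcal{B}'$: on a challenge $\vec t$ that equals either $G(\vec\sigma)$ for $\vec\sigma\gets\{0,1\}^n$ or a uniform string in $\{0,1\}^m$, the reduction samples $\vec A\gets\Z_p^{\ell\times n}$ and $\vec u\gets\Z_p^{1\times n}$ itself and outputs whatever $\adversary(\vec A,\vec u,\vec t)$ outputs. The two cases correspond precisely to $\cD_{1.5}$ and $\cD_2$, so $\mathcal{B}'$ has the same advantage as $\adversary$, which is at most $\epsilon_{\prg}$. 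Combining the two steps by the triangle inequality, every $T\cdot\poly(\secparam)$-time distinguisher has advantage at most $\epsilon_{\flpn}+\epsilon_{\prg}$ between $\cD_1$ and $\cD_2$, i.e.\ the two distributions are $(T,\epsilon_{\flpn}+\epsilon_{\prg})$-indistinguishable.

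I expect no substantive obstacle here: the lemma is crafted precisely to isolate the one-line idea that $\flpn$-encrypting the $\prg$ seed and then applying $\prg$ security renders the output pseudorandom. The only points that need care are (i) in the first step, observing that once the $\flpn$ sample is replaced by a uniform string the mask $\vec b = \vec c + \vec\sigma$ becomes statistically uniform and independent of $\vec\sigma$ --- so publishing $G(\vec\sigma)$ leaks nothing further about $\vec b$ --- together with the fact that it is the reduction, not the $\flpn$ challenger, that generates $\vec\sigma$ and can therefore supply $G(\vec\sigma)$ to $\adversary$; and (ii) tracking that both reductions incur only $\poly(\secparam)$ overhead, which is harmless under the paper's $(T,\gamma)$-indistinguishability convention.
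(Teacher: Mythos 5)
Your proposal is correct and uses exactly the paper's argument: the same intermediate hybrid $(\vec A,\vec u,G(\vec\sigma))$ with $\vec u$ uniform, followed by one appeal to $\flpn$ and one to $\prg$ pseudorandomness and the triangle inequality. You simply spell out the two reductions (which the paper leaves implicit), and both versions are fine.
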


\begin{proof}
 We introduce one intermediate distribution $\cD'$ defined as follows: 
 \begin{align*}
\cD'  = \Big\{ (\vec{A},\; \vec{u},\; G(\vec \sigma)) \ : \ \vec{A}\leftarrow \Z^{\ell\times n}_{p};\;
\vec{u}\leftarrow \Z^{1\times n}_{p};\;
\vec \sigma \leftarrow \{0,1\}^n\Big\}
\end{align*}
Now observe that $\cD'$ is $(T,\epsilon_{\flpn})$-indistinguishable to
$\cD_{1}$ following immediately from the
$(T,\epsilon_{\flpn})$-indistinguishability of the $\flpn(\ell,n,r,p)$
assumption.  Finally, observe that $\cD'$ is
$(T,\epsilon_{\prg})$-indistinguishable to $\cD_2$ due to
$(T,\epsilon_{\prg})$-security of $G$. Therefore, the lemma holds.
\end{proof}

\begin{proof}[Proof of Proposition~\ref{propo:sprg-sec}]
  We now list a few hybrids $\hyb_0, \hyb_1, \hyb_2, \hyb_3$, where
  the first one corresponds to the first distribution in the
  proposition, and the last one corresponds to the second distribution
  in the proposition. We abuse notation to also use $\hyb_i$ to denote
  the output distribution of the hybrid.  Let $\gamma$ be the claimed
  advantage of the adversary $\A$, running in time
  $T q(\secparam)$ for a polynomial $q$. Let $\cD_{\phi,\pid}$ denote the the
  distribution that samples the functions $\vec \phi$. 
  
  \begin{description}
  \item[Hybrid $\hyb_0$] samples $(\pid',\psd, \vec y)$ honestly as in
    the first distribution, that is,
    \begin{align*}
      \text{Sample: } & \vec A \gets \Int_p^{\ell \times n}, \; 
                        \vec s  \gets \Int_p^{1\times \ell},\;  \vec{e} \leftarrow \cD^{1
                        \times n}_{r}(p) ,\ \vec \sigma \gets \{0,1\}^n  \\
                        &
                        \; \pid \gets \pidsamp(1^{\secparam},1^n),\ \vec y= \peval_{I}(\vec \sigma),\ \vec \phi \gets \cD_{\phi,\pid}  \\
      \text{Output: }   
      &\pid, \; \vec{\phi},\ \vec A, \vec b = \vec s \vec A + \vec e + \vec \sigma,\ \flag \cdot  \vec y   \\
   & \text{where }\flag = 1 \text{ iff:}\\
   &  \text{1) } \vert  \Cor \vert \leq \thresh \text{ and,}\\
   &  \text{2) } \forall i \in [\numbkt], \ \vert \phibkt^{-1}(i)\cap \Cor\vert \leq t \  \text{ and,}\\
    &  \text{3) }\forall i \in [\numbkt], \  \vert \phibkt^{-1}(i) \vert \leq \ell^{\sparsity}\cdot t^2. 
    \end{align*}
    Note that the value of $\flag$ is correlated with that of
    $(\pid,\vec \phi, \vec A, \vec b, \vec y)$. Therefore, $\flag$ can be viewed
    as a single-bit leakage of the randomness used for sampling
    $(\pid, \vec \phi,\vec A, \vec b, \vec y)$.

  \item[Hybrid $\hyb_1$] instead of generating $\flag$ honestly, first
    samples $X=(\pid, \vec{\phi}, \vec A, \vec b, \vec y)$ honestly, and then
    invokes the leakage simulation lemma, Lemma~\ref{thm:leakage}, to
    simulate $\flag$ using $X$, for $Tq(\secparam) + \poly(\secparam)$ time
    adversaries with at most $\frac{\gamma}{3}$ advantage. Let $\Sim$
    be the simulator given by Theorem~\ref{thm:leakage}. 
    \begin{align*}
      \text{Sample: } & \vec A \gets \Int_p^{\ell \times n}, \; 
                        \vec s  \gets \Int_p^{1\times \ell},\;  \vec{e} \leftarrow \cD^{1
                        \times n}_{r}(p), \; \vec \sigma \gets \{0,1\}^n, \\
                        &
                        \; \pid \gets \pidsamp(1^{\secparam},1^n),\ \vec y= \peval_{I}(\vec \sigma),\ \vec{\phi}\gets \cD_{\phi,\pid} \\
      \text{Output: }   
      &\pid, \; \vec \phi,\ \vec A, \ \vec b = \vec s \vec A + \vec e + \vec \sigma,\ \flag \cdot  \vec y  \\
   & \text{where }\underline{\color{red} \flag = \mathsf{Sim}(I,\vec \phi ,\vec A, \vec b, \vec y)} 
    \end{align*}
    The leakage simulation lemma guarantees that the running time of
    $\Sim$ is bounded by
    $O((Tq(\secparam) + \poly(\secparam))\cdot \frac{9}{\gamma^2}\cdot
    2^1)=Tq'(\secparam))$ for a fixed polynomial $q'$, and $\A$ cannot
    distinguish $\hyb_0$ from $\hyb_1$ with advantage more than
    $\frac{\gamma}{3}$.
    \begin{claim}\label{claim:hyb1}
      For any adversary $\A$ running in time $Tq(n)$ for some
      polynomial $q$, 
      \begin{align*}
        \vert \Pr[\A(\hyb_0)=1]- \Pr[\A(\hyb_1)=1] \vert \leq \frac{\gamma}{3}~.
      \end{align*}
      Furthermore, the running time of $\mathsf{Sim}$ is $Tq'(\secparam)$ for
      some polynomial $q'$.
    \end{claim}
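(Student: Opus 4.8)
The goal is to show that replacing the honest computation of $\flag$ by a simulated $\flag = \mathsf{Sim}(I,\vec\phi,\vec A,\vec b,\vec y)$ changes the output distribution by at most $\gamma/3$ against time-$Tq(\secparam)$ adversaries, and that $\mathsf{Sim}$ runs in time $Tq'(\secparam)$. The natural approach is a direct invocation of the leakage simulation lemma, Theorem~\ref{thm:leakage}, with the right choice of the distribution $(X,W)$ and the distinguisher class $\cC_{leak}$. First I would set $X = (I,\vec\phi,\vec A,\vec b,\vec y)$ — note $\vec b$ can be reconstructed from $\vec\sigma, \vec s, \vec A, \vec e$, but it is cleaner to just declare $X$ to be the tuple actually fed to the next stage — and set $W = \flag \in \{0,1\}$, so the leakage length is $\ell = 1$. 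The joint distribution $(X,W)$ is exactly the one induced by $\hyb_0$'s sampling, so it is a well-defined distribution over $\{0,1\}^n \times \{0,1\}^1$ in the notation of Theorem~\ref{thm:leakage} (padding $X$ to a fixed bit length).

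**Choosing the distinguisher class.** The subtle point is which circuit family $\cC_{leak}$ to quantify over: the lemma gives a simulator whose quality is $\epsilon$-good only against a prescribed family of size $s(n)$. I would take $\cC_{leak}$ to be the family of circuits that, given $(X,W) = (I,\vec\phi,\vec A,\vec b,\vec y, \flag)$, first deterministically assemble $\psd = (\vec b,\flag)$, recompute the private seed $\ssd$ and the polynomials $G^{(3)}_{\pid'}$ (all of which is polynomial-time given $I,\vec\phi,\vec A$ and the internal randomness — but here is a snag, see below), evaluate $\vec z = \flag\cdot\vec y$, and then run $\A$ on $(\pid,\vec\phi,\vec A,\vec b,\flag,\vec z)$. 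The size of this family is $Tq(\secparam) + \poly(\secparam)$ where the polynomial accounts for the reconstruction and evaluation overhead. Apply Theorem~\ref{thm:leakage} with $\epsilon = \gamma/3$: it yields $\mathsf{Sim} = h$ of size $O(s\cdot 2^1\cdot (\gamma/3)^{-2}) = O((Tq(\secparam)+\poly(\secparam))\cdot 9/\gamma^2) = Tq'(\secparam)$ for a fixed polynomial $q'$ (absorbing the constant and the fixed $9/\gamma^2$ factor into $q'$ since $\gamma$ is a fixed constant), and guarantees that no $C \in \cC_{leak}$ — in particular the one embedding $\A$ — distinguishes $(X,W)$ from $(X,h(X,U))$ with advantage exceeding $\gamma/3$. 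Since $(X,W)$ produces exactly $\hyb_0$ and $(X,h(X,U))$ produces exactly $\hyb_1$ once the embedded circuit finishes its post-processing, this is precisely the claimed bound.

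**The main obstacle.** The delicate step is making sure the post-processing from $X$ to the final output tuple in $\hyb_1$ is a well-defined deterministic function of $X$ alone — i.e., that everything the simulator's distinguisher needs can be computed from $(I,\vec\phi,\vec A,\vec b,\vec y)$ without re-accessing $\vec s, \vec e, \vec\sigma$. The output is $\flag\cdot\vec y$, which only needs $\vec y$ and the (simulated) $\flag$, so in fact no reconstruction of $\ssd$ or $G^{(3)}$ is needed at all for producing $\hyb_1$'s output — this is what makes the argument clean, and I would phrase $\cC_{leak}$ accordingly: on input $(I,\vec\phi,\vec A,\vec b,\vec y,\beta)$, output $\A(I,\vec\phi,\vec A,\vec b,\beta,\beta\cdot\vec y)$. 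This circuit has size $Tq(\secparam) + O(m\log p)$. With this framing the only things to check are bookkeeping: that the padding of $X$ to a fixed length $n$ (in the lemma's notation) is harmless, that $\mathsf{Sim}$'s input/output interface matches $\mathsf{Sim}(I,\vec\phi,\vec A,\vec b,\vec y)$ as written in $\hyb_1$, and that the quoted running-time expression $O((Tq(\secparam)+\poly(\secparam))\cdot\tfrac{9}{\gamma^2}\cdot 2^1) = Tq'(\secparam)$ follows from substituting $s = Tq(\secparam)+\poly(\secparam)$, $\ell = 1$, $\epsilon = \gamma/3$ into the size bound $s' = O(s 2^\ell \epsilon^{-2})$ of Theorem~\ref{thm:leakage}. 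No step beyond this is needed; the heavier indistinguishability work (LPN plus $\prg$ pseudorandomness) is deferred to the later hybrids $\hyb_1 \to \hyb_2 \to \hyb_3$ via Lemma~\ref{lem:flpnprg}.
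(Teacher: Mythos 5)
Your proposal is correct and takes the same approach as the paper, which simply declares Claim~\ref{claim:hyb1} ``immediate from Lemma~\ref{thm:leakage}.'' Your write-up spells out the instantiation the paper leaves implicit—setting $X=(I,\vec\phi,\vec A,\vec b,\vec y)$, $W=\flag$ with $\ell=1$, taking $\cC_{leak}$ to be the circuits that post-process by computing $\flag\cdot\vec y$ and then run $\A$, and plugging $\epsilon=\gamma/3$ into the size bound $s'=O(s\,2^\ell\epsilon^{-2})$ to recover the quoted running time $Tq'(\secparam)$—and your observation that no reconstruction of $\ssd$ or $G^{(3)}$ is needed is exactly the bookkeeping that makes the application go through cleanly.
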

    
    This claim is immediate from Lemma~\ref{thm:leakage}.
    
    
  \item[Hybrid $\hyb_2$] samples $\vec{A},\vec{b}$ and $\vec y$
    uniformly and randomly.
    \begin{align*}
      \text{Sample: } & \underline{\color{red} \vec A \gets \Int_p^{\ell \times n}, \; 
                        \vec b \leftarrow \Z^{1\times n}_p} \; \\
                        &
                         \pid \gets \pidsamp(1^{\secparam},1^n),\
                          \underline{\color{red} \vec y \leftarrow \zo^m},\ \vec \phi \gets \cD_{\phi,\pid} \\
      \text{Output: }   
      &\pid, \;  \vec \phi,\ \vec A, \ \vec b,\ \flag \cdot  \vec y  \\
   & \text{where }\flag = \mathsf{Sim}(\pid, \vec \phi,\vec A, \vec b, \vec y) 
    \end{align*}
    Lemma~\ref{lem:flpnprg} shows that $(\vec A, \vec b, \vec y)$
    generated honestly as in $\hyb_1$ and $(\vec A, \vec b, \vec y)$
    sampled all at random as in $\hyb_2$ are indistinguishable, due to
    the $\flpn$ assumption and the pseudorandomness of $\prg$.  Here
    the adversary $\A$ runs in time $T q(\secparam)$ and the simulator $\Sim$
    runs in time $Tq'(\secparam)$ time, for polynomials $q, q'$. Thus, we get
    \begin{claim}\label{claim:hyb2}
      For any adversary $\A$, running in time $T$, if
      $\flpn(\ell,n,r,p)$ is $(T,\epsilon_{\flpn})$-secure and
      $\prg$ satisfies $(T,\epsilon_{\prg})$-pseudorandomness, then,
      \begin{align*}
    \vert \Pr[\A(\hyb_1)=1]- \Pr[\A(\hyb_2)=1] \vert \leq \epsilon_{\prg}+\epsilon_{\flpn}
      \end{align*}
    \end{claim}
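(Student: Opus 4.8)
To prove Claim~\ref{claim:hyb2}, the plan is to give a single, direct reduction to Lemma~\ref{lem:flpnprg}. The key observation is that $\hyb_1$ and $\hyb_2$ are obtained by applying the \emph{same} efficient post-processing to a triple $(\vec A, \vec b, \vec y)$, and that they differ \emph{only} in the distribution of that triple: in $\hyb_1$ the triple is sampled honestly, which is exactly the distribution $\cD_1$ of Lemma~\ref{lem:flpnprg} with $G$ taken to be the indexed $\prg$ evaluation $\peval_\pid$, whereas in $\hyb_2$ it is the all-uniform distribution $\cD_2$. The common post-processing, call it $\mathsf{Post}$, takes $(\pid, \vec A, \vec b, \vec y)$, samples $\vec\phi \gets \cD_{\phi,\pid}$, sets $\flag = \mathsf{Sim}(\pid, \vec\phi, \vec A, \vec b, \vec y)$ with the \emph{fixed} simulator $\mathsf{Sim}$ pinned down in the transition to $\hyb_1$, and outputs $(\pid, \vec\phi, \vec A, \vec b, \flag \cdot \vec y)$.

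First I would check, term by term against the definitions of $\hyb_1$ and $\hyb_2$, that $\hyb_1$ and $\hyb_2$ are precisely the images of $\cD_1$ and $\cD_2$ under $\mathsf{Post}$; here $\pid$ and $\vec\phi$ are sampled in the same way in both hybrids, so only $(\vec A, \vec b, \vec y)$ moves. Next, given a $(T,\cdot)$-bounded distinguisher $\A$, say running in time $T q(\secparam)$, I would define $\mathcal{B}$ that runs $\mathsf{Post}$ on its challenge tuple and returns $\A$'s verdict, so that $\Pr[\mathcal{B}(\cD_1)=1] = \Pr[\A(\hyb_1)=1]$ and $\Pr[\mathcal{B}(\cD_2)=1] = \Pr[\A(\hyb_2)=1]$. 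Then I would account for the running time: sampling $\vec\phi$ and forming $\flag \cdot \vec y$ cost $\poly(\secparam)$, $\mathsf{Sim}$ runs in time $T q'(\secparam)$ by Claim~\ref{claim:hyb1}, and $\A$ runs in time $Tq(\secparam)$, so $\mathcal{B}$ runs in time $T\cdot\poly(\secparam)$ and is an admissible $(T,\cdot)$-adversary. Finally I would apply Lemma~\ref{lem:flpnprg} with advantages $\epsilon_{\flpn}$ and $\epsilon_{\prg}$ to get $|\Pr[\mathcal{B}(\cD_1)=1] - \Pr[\mathcal{B}(\cD_2)=1]| \le \epsilon_{\flpn} + \epsilon_{\prg}$, hence the same bound for $\A$ on $\hyb_1$ versus $\hyb_2$.

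I do not expect a deep obstacle; the two points that need care are essentially bookkeeping. First, $\mathsf{Sim}$ must be used purely as a fixed efficient circuit — I would never invoke its simulation guarantee (which holds only for the honest joint distribution and is exploited in Claim~\ref{claim:hyb1}), so there is no circularity in reusing it inside $\mathsf{Post}$. Second, the one step that takes some care to state precisely is aligning Lemma~\ref{lem:flpnprg} with the present setting, where $\prg$ carries a function index: I would read its $G$ as the indexed map $\peval_\pid$, so that $\pid \gets \pidsamp(1^\secparam,1^n)$ is part of the challenge and the lemma's two distributions carry $\pid$; the lemma's two-hop proof is then unchanged — the intermediate distribution replaces $\vec b$ by a uniform vector, the first hop is the $\flpn$ assumption via the one-time-pad identity $\vec b = \vec v + \vec\sigma$, and the second hop is $(T,\epsilon_{\prg})$-pseudorandomness of $\prg$ with $\pid$ carried along — so neither hop costs any extra advantage.
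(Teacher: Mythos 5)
Your proposal is correct and takes essentially the same route as the paper: the paper proves Claim~\ref{claim:hyb2} by invoking Lemma~\ref{lem:flpnprg} directly, with the adversary of runtime $Tq(\secparam)$ and the simulator $\Sim$ of runtime $Tq'(\secparam)$ absorbed into the reduction exactly as in your post-processing argument. You merely spell out the bookkeeping (treating $\Sim$ as a fixed efficient circuit, carrying the index $\pid$ through the lemma) that the paper leaves implicit with "follows immediately."
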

    
    This claim follows immediately from Lemma~\ref{lem:flpnprg}.
    
    \item[Hybrid $\hyb_3$] no longer generates $\flag$ and simply
      outputs the random string $\vec y$ instead of $\flag\cdot\vec
      y$. 
    \begin{align*}
      \text{Sample: } & \vec A \gets \Int_p^{\ell \times n}, \; 
                        \vec b \leftarrow \Z^{1\times n}_p \; \\
                        &
                         \pid \gets \pidsamp(1^{\secparam},1^n),\ \vec y \leftarrow \zo^m,\ \vec{\phi}\gets \cD_{\phi,\pid} \\
      \text{Output: }   
      & \pid, \vec \phi,\ \vec A,\ \vec b, \underline{\color{red} \vec y} 
    \end{align*}
    Observe that $\hyb_2$ and $\hyb_3$ are only distinguishable when
    $\flag = 0$ in $\hyb_2$. By bounding the probability of
    $\flag = 0$ in $\hyb_2$, we can show that  
    \begin{claim}\label{claim:hyb3}
      For any adversary $\A$, 
      \begin{align*}
        \vert \Pr[\A(\hyb_2)=1]- \Pr[\A(\hyb_3)=1] \vert \leq \frac{\gamma}{2}
      \end{align*}
    \end{claim}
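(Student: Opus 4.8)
The plan is to bound the probability that $\flag = 0$ in hybrid $\hyb_2$, since $\hyb_2$ and $\hyb_3$ produce identical output distributions conditioned on $\flag = 1$ (in both, the output is the uniformly random $\vec y$, and the remaining components $\pid, \vec\phi, \vec A, \vec b$ are sampled identically). Thus $|\Pr[\A(\hyb_2)=1] - \Pr[\A(\hyb_3)=1]| \le \Pr_{\hyb_2}[\flag = 0]$, and it suffices to show this is at most $\gamma/2$ for all large enough $\secparam$; since $\gamma$ is a fixed constant and (looking ahead) the bad probability will be $o(1)$, this is immediate. The subtlety is that in $\hyb_2$ the bit $\flag$ is produced by $\mathsf{Sim}$, not by the honest rule, so I cannot directly reason about the three combinatorial conditions. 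I would handle this by noting that, by Claim~\ref{claim:hyb1} applied in reverse / a parallel hybrid, $\Pr_{\hyb_2}[\flag = 0]$ differs from the probability that the \emph{honest} flag rule fires on the same distribution of $(\pid, \vec\phi, \vec A, \vec b, \vec y)$ by at most $\gamma/3$ (the leakage simulator approximates the single-bit leakage function, of which the event $\flag = 0$ is just a preimage). So it remains to bound the honest-flag failure probability when $\vec A, \vec b, \vec y$ are all uniform and $\vec\phi \gets \cD_{\phi,\pid}$.

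Next I would bound each of the three failure conditions separately and union-bound. Condition 1 ($|\Cor| > \thresh$): here is the one place I must be careful, because when $\vec b$ is \emph{uniform} rather than an LPN sample, the set $\Err$ of ``erroneous'' seed coordinates is no longer controlled by the sparse LPN noise — in fact the honest sampler recomputes $\Err, \Cor$ from an internally-sampled $\vec e \gets \cD_r^{1\times n}(p)$ regardless, so I would argue that the honest $\flag$ computation still internally samples a genuine sparse $\vec e$, hence $\Pr[|\Err| > r n \cdot \omega(1)]$ is negligible by a Chernoff/Markov bound on the sparse Bernoulli vector, and then since $\peval$ has constant locality $\plocality$, each erroneous seed bit makes at most $O(1)$ outputs bad, so $|\Cor| \le O(|\Err|)$, which with a Markov argument is at most $\thresh = m\log n / \ell^\sparsity$ except with probability $o(1)$ (this is exactly the Markov bound sketched in the technical overview). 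Condition 2 (some bucket has more than $t$ bad outputs) and Condition 3 (some bucket exceeds capacity $\capbkt = t^2\ell^\sparsity$): these follow from a Chernoff bound over the random function $\phibkt: [m] \to [\numbkt]$ with $\numbkt = mt/\ell^\sparsity$. The expected load of a bucket is $\ell^\sparsity / t$, and the expected number of bad outputs per bucket (conditioned on $|\Cor| \le \thresh$) is at most $\thresh/\numbkt = \log n / t$; since $t = \secparam$, both are well below the thresholds $\capbkt$ and $t$ respectively, so the Chernoff tail over all $\numbkt$ buckets is $\numbkt \cdot 2^{-\Omega(t)} = \poly(\secparam) \cdot 2^{-\Omega(\secparam)} = \negl(\secparam)$.

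Putting it together: $\Pr_{\hyb_2}[\flag = 0] \le \gamma/3 + (\text{honest failure prob}) = \gamma/3 + o(1) \le \gamma/2$ for all sufficiently large $\secparam$, which gives the claim. The main obstacle, as noted, is the bookkeeping around the fact that in $\hyb_2$ the flag comes from the leakage simulator applied to \emph{uniform} $(\vec A, \vec b, \vec y)$ rather than honestly-distributed ones — one has to be careful that the leakage lemma's guarantee (indistinguishability by small circuits) is enough to transfer a \emph{probability bound} on a specific single-bit event, which it is, because ``$\flag = 0$'' is itself computed by a small circuit (indeed it is read off directly), so it is one of the distinguishers covered by Theorem~\ref{thm:leakage}. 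Everything else is routine concentration: a Markov bound on the sparse LPN error propagated through constant locality, and a Chernoff bound over the random bucketing function, both with the slack parameter $t = \secparam$ chosen precisely so the tails are negligible. Combining Claims~\ref{claim:hyb1}, \ref{claim:hyb2}, \ref{claim:hyb3} by the triangle inequality then yields advantage at most $\gamma/3 + \epsilon_\prg + \epsilon_\flpn + \gamma/2 < \gamma$ (using $\epsilon_\prg, \epsilon_\flpn = o(1)$), completing the proof of the proposition.
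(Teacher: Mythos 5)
Your overall skeleton is right: bound the distinguishing advantage by $\Pr_{\hyb_2}[\flag=0]$, then bound that probability using the hybrid chain and concentration. But there are two genuine problems, one fundamental and one more localized.

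\textbf{The fundamental gap: misapplying the leakage lemma on the wrong distribution.} You propose to compare $\mathsf{Sim}$'s behavior with the ``honest flag rule'' directly on $\hyb_2$'s distribution, where $(\vec A,\vec b,\vec y)$ are uniform, citing the leakage lemma. But Theorem~\ref{thm:leakage} only guarantees that $(X, h(X,U))$ approximates $(X, W)$ when $X$ is drawn from the \emph{specific} joint distribution for which $h$ was constructed --- namely, the honest distribution of $(\pid,\vec\phi,\vec A,\vec b,\vec y)$ as in $\hyb_0$/$\hyb_1$. On the uniform distribution in $\hyb_2$, the lemma is silent about what $\mathsf{Sim}$ does; it could output $\flag=0$ always, for all the lemma says. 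In particular, $\flag$ is not a deterministic function of $(\pid,\vec\phi,\vec A,\vec b,\vec y)$ (it also depends on the hidden $\vec e$), so there is no ``honest flag rule on the same distribution'' to compare against. The paper avoids this trap entirely: it bounds $\Pr[\flag=0]$ in $\hyb_0$ (where the flag is computed by the honest sampler on honestly distributed LPN samples), and then transfers that numerical bound to $\hyb_2$ by observing that ``$\flag=0$'' corresponds to an efficiently testable event on the hybrid's output, so its probability moves by at most the distinguishing advantages established in Claims~\ref{claim:hyb1} and \ref{claim:hyb2}, i.e. $\gamma/3 + \epsilon_\prg + \epsilon_\flpn$. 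You cite the right Claims but wire them up in the wrong direction.

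\textbf{A secondary but real error: the locality argument.} You assert that because $\peval$ has constant locality, ``each erroneous seed bit makes at most $O(1)$ outputs bad, so $|\Cor| \le O(|\Err|)$.'' Constant locality says each \emph{output} depends on $O(1)$ inputs; it says nothing about how many outputs a single input can feed into. Indeed the paper explicitly flags this: if every output of $\prg$ reads the first seed bit, then a single erroneous coordinate corrupts all $m$ outputs, so $|\Cor|$ is not $O(|\Err|)$. The correct argument, which is the one the paper uses, is to bound the \emph{expectation}: each output reads $\le c$ seed bits, each bad independently with probability $r$, so $\Pr[j\in\Cor] \le c r$ and $\E[|\Cor|] = O(rm)$; then Markov gives $\Pr[|\Cor| > \thresh] \le O(1/\log n)$. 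Your conditions 2 and 3 (the bucket concentration via Chernoff over the random $\phibkt$) are fine and match the paper.

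\end{document}
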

    The formal proof of this lemma is provided below. 
  \end{description}
  Combining the hybrids above, we conclude that $\A$ cannot
  distinguish $\hyb_0$ and $\hyb_3$ with advantage more than
  $\frac{5\cdot \gamma}{6}+\epsilon_{\prg}+\epsilon_{\flpn} < \gamma$,
  which gives a contradiction. Therefore, the indistinguishability
  stated in the proposition holds. We now complete the final remaining
  piece -- the proof of Claim~\ref{claim:hyb3}.
  
  \begin{proof}[Proof of Claim~\ref{claim:hyb3}]
    This indistinguishability is statistical. We start with showing
    that the probability that $\flag = 0$ in $\hyb_0$ is
    $O(\frac{1}{\log n})$.
    Towards this, we bound probability of all three conditions for setting $\flag=0$ and then apply the union bound. 
    
    \begin{itemize}
    \item $\Pr[\vert \Cor \vert >\thresh] \leq O(\frac{1}{\log
        n})$. Observe that by the fact that $\peval_\pid$ has constant
      locality  in $\vec \sigma$, the probability that any
      single output bit $j \in[m]$ is $\bad$ is bounded by
      $O(r) = \frac{O(1)}{\ell^{\sparsity}}$, where $r$ is the rate of
      LPN noises. Therefore, the expected number of $\bad$
      output bits is
    \begin{align*}
      \E[ |\Cor| ] =  \frac{O(1)m}{\ell^{\sparsity}}
    \end{align*}
    Thus by Markov's inequality,
    \begin{align*}
      \Prob[ |\Cor| > \thresh] \le \frac{1}{\thresh} \cdot
      \frac{O(1)m}{\ell^{\sparsity}\cdot \thresh} =  \frac{O(1)}{\log n}~.
    \end{align*}
   The last equality follows from the fact that $\thresh=\frac{m\cdot
     \log n}{\ell^{\sparsity}}$.
   
 \item For any $i\in [\numbkt]$,
   $\Pr_{\phibkt}\left [\vert \phibkt^{-1}(i)\cap \Cor\vert > t \mid
     \vert \Cor \vert \leq \thresh \right] \leq \negl(n)$. Suppose
   $\vert \Cor \vert =\thresh'$ where $\thresh'\leq \thresh$, and
   since $\phibkt: [m]\rightarrow [\numbkt]$ is a random function, we
   have:
   \begin{align*}
    \Pr_{\phibkt}\left [\vert \phibkt^{-1}(i)\cap \Cor\vert > t \mid
     \vert \Cor \vert = \thresh' \right] &\leq {\thresh'\choose t}
                                           \cdot \frac{1}{\numbkt^{t}}
     & \\
    & \leq \left (\frac{e \cdot \thresh'}{t} \right )^t\cdot
      \frac{1}{\numbkt^t} & \text{ by Stirling's approximation} \\
   & \leq  \left (\frac{e}{t} \right)^{t} \leq e^{-t}
     & \text{ by }\thresh' < \thresh < \numbkt\\
     & = \negl(\secparam) & \text{ by } t = \secparam
   \end{align*}
   
   \item For any $i\in \numbkt$, $\Pr_{\phibkt}[\vert \phibkt^{-1}(i) \vert > \ell^{\sparsity}\cdot t^2] \leq \negl(\secparam)$. Since $\phibkt$ is a random function,
   \begin{align*}
    \Pr_{\phibkt}\left [\vert \phibkt^{-1}(i) \vert >t^{2}\cdot
     \ell^{\sparsity} \right]
     &\leq {m\choose \ell^{\sparsity}\cdot t^2} \cdot \left (\frac{1}{\numbkt}\right)^{\ell^{\sparsity}\cdot t^2} & \\
    & \leq \left (\frac{e \cdot m}{\ell^{\sparsity}\cdot t^2} \right
      )^{\ell^{\sparsity}\cdot t^2}\cdot \left
      (\frac{1}{\numbkt}\right )^{\ell^{\sparsity}\cdot t^2} & \text{ by Stirling's approximation} \\
   & = \left(\frac{e\cdot m }{\numbkt\cdot \ell^{\sparsity}\cdot t^2}
     \right )^{\ell^{\sparsity}\cdot t^2} \leq \left(\frac{1}{t^2}\right)^{\ell^{\sparsity}\cdot t^2} & \text{ by } \numbkt =  \frac{m t}{\ell^\sparsity} > \frac{em}{\ell^\sparsity}\\
    &\leq t^{-2t^2 } = \negl(\secparam)& \text{ by } \ell^\sparsity >1 \text { and }  t = \secparam
   \end{align*}
    \end{itemize}
    
     Applying the three observations above, from a union bound it
     follows that $\Pr[\flag=0]=O(\frac{1}{\log n})$.

    Next, for adversaries of run time $Tq(\secparam)$,
    Claim~\ref{claim:hyb1} shows that $\hyb_0$ and $\hyb_1$ cannot be
    distinguished with advantage more than $\frac{\gamma}{3}$, and
    Claim~\ref{claim:hyb2} shows that $\hyb_1$ and $\hyb_2$ cannot be
    distinguished with advantage more than
    $ \epsilon_{\prg}+\epsilon_{\flpn}$, which is sub-constant. Therefore, the probability
    that $\flag = 0$ in $\hyb_2$ is upper bounded by
    \begin{align*}
      \Prob[ \flag = 0 \text{ in } \hyb_2] \le \frac{O(1)}{\log n} +
      \frac{\gamma}{3} + \epsilon_{\prg}+\epsilon_{\flpn} \le \frac{\gamma}{2}~.
    \end{align*}
     
    Finally, we upper bound the statistical distance between
    $\hyb_2$ and $\hyb_3$, which is
     \begin{align*}
       SD(\hyb_2, \hyb_3) = \frac{1}{2}\cdot
       \sum_{(I,\vec \phi,\vec{A},\vec{b},\vec{y})} \Big\vert \Pr[\hyb_2 =
       (I,\vec \phi,\vec{A},\vec{b},\vec{y})]-\Pr[\hyb_3 = (I,\vec \phi,\vec{A},\vec{b},\vec{y})] \Big\vert~.
     \end{align*}
     For $b\in \{0,1\}$, let $F_b$ be the set of tuples
     $(I,\vec{A},\vec{b},\vec{y})$ that generate $\flag = b$ through
     $\Sim$,
     \begin{align*}
       F_b = \left\{(I,\vec \phi,\vec{A},\vec{b},\vec{y}) \mid \mathsf{Sim}((I,\vec \phi,\vec{A},\vec{b},\vec{y})=b \right\}~.
     \end{align*}
     Then, we have:
     \begin{align*}
       SD(\hyb_2, \hyb_3) &= \frac{1}{2}\cdot
       \sum_{(I,\vec \phi,\vec{A},\vec{b},\vec{y}) \in F_0} \Big\vert \Pr[\hyb_2 =
       (I,\vec \phi,\vec{A},\vec{b},\vec{y})]-\Pr[\hyb_3 =
                            (I,\vec \phi,\vec{A},\vec{b},\vec{y})] \Big\vert~\\
       & + \frac{1}{2}\cdot
       \sum_{(I,\vec \phi,\vec{A},\vec{b},\vec{y}) \in F_1} \Big\vert \Pr[\hyb_2 =
       (I,\vec \phi,\vec{A},\vec{b},\vec{y})]-\Pr[\hyb_3 =
         (I,\vec \phi,\vec{A},\vec{b},\vec{y})] \Big\vert\\
       & = \frac{1}{2}\cdot
       \sum_{(I,\vec \phi,\vec{A},\vec{b},\vec{y}) \in F_0} \Big\vert \Pr[\hyb_2 =
       (I,\vec \phi,\vec{A},\vec{b},\vec{y})]-\Pr[\hyb_3 =
                            (I,\vec \phi,\vec{A},\vec{b},\vec{y})] \Big\vert~\\
       &
       \leq  \Pr[\flag = 0 \text{ in } \hyb_2] \le \frac{\gamma}{2}
     \end{align*}
     where the second equality follows from the fact that in $\hyb_2$
     and $\hyb_3$ the probability of outputing a tuple
     $(\pid', \vec \phi,\vec A, \vec b, \vec y)$ that belongs to $F_1$, or
     equivalently generates $\flag = 1$ via $\Sim$, is the same.
     This concludes the claim. 
    
   \end{proof}
   
\end{proof}


\newcommand{\ilen}{l}
\newcommand{\olen}{m}
\newcommand{\fedeg}{D}

\section{Bootstrapping to Indistinguishability Obfuscation}
\label{sec:iotheorems}
We now describe a pathway to $\iO$ and 
$\fe$ for all circuits.

  

\paragraph{From Structured-Seed PRG to Perturbation Resilient Generator.}
Starting from structured-seed $\prg$, we show how to construct
perturbation resilient generators, denoted as $\drg$. $\drg$ is the
key ingredient in several recent $\iO$ constructions
\cite{C:AJLMS19,EC:JLMS19,JLS19}. Roughly speaking, they have the same
syntax as structured-seed $\prg$s with the notable difference that it
has integer outputs $\vec y$ of polynomial magnitude; further, they
only satisfy weak pseudorandomness called {\em perturbation
  resilience} guaranteeing that $\vec y + \vec \beta$ for arbitrary adversarially chosen small integer vector $\vec \beta$ is weakly indistinguishable from $\vec y$
itself. The formal definition of $\drg$ is provided in
Definition~\ref{def:drg} in Section~\ref{sec:drg}.

\begin{theorem}[$\sprg$ to $\drg$, proven in Section~\ref{sec:drg}]
\label{thm:prgdrg}
  Let $\secparam\in \mathbb{N}$ be the security parameter,
  $\gamma\in (0,1)$, and $\pstretch > 1$.  Assume the existence of a
  (subexponentially) $\gamma$-pseudorandom $\sprg$ in
  $(\mathbb{C},\deg\ d)$ with stretch $\pstretch$. For any
  constant $0<\pstretch'<\pstretch$, there exists a (subexponentially)
  $(2\gamma+O(\frac{1}{\secparam}))$-perturbation resilient $\drg$ in
  $(\mathbb{C},\deg \ d)$ with a stretch $\pstretch'$.
\end{theorem}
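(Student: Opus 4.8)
The plan is to obtain the $\drg$ from the given $\sprg$ by a simple \emph{bit-bundling} transformation, and to derive its perturbation resilience from the $\gamma$-pseudorandomness of $\sprg$ together with an elementary statistical fact about uniform distributions on short intervals. Concretely, fix a block length $c = c(\secparam) = \Theta(\log \secparam)$, whose precise value is pinned down below. Given $\sprg = (\pidsamp, \psdsamp, \peval)$ with Boolean output length $m = n^{\pstretch}$, define $\drg$ to reuse the same index sampler $\pidsamp$ and the same seed sampler $\psdsamp$ (so the public/private split is inherited), and to evaluate by running $\peval(\pid, \sd) \in \zo^m$ and then, discarding at most $c-1$ trailing bits, grouping the remaining bits into $m' = \lfloor m/c \rfloor$ blocks, interpreting the $i$-th block as the integer $z_i = \sum_{j=1}^{c} 2^{j-1} y_{(i-1)c+j} \in [0, 2^c)$. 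Each $z_i$ is a fixed $\Int$-linear combination of $\sprg$ outputs, so $\peval^{\drg}$ is computed by the same $\mathbb{C}$-processing of the public seed followed by a polynomial of degree $1$ in $\ol{\psd}$ and degree $\pdeg$ in $\ssd$; since $2^c < p$, this computation produces each $z_i$ correctly (no wraparound), so $\drg \in (\mathbb{C}, \deg\ \pdeg)$ and its output integers have polynomial magnitude $2^c = \poly(\secparam)$.

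For perturbation resilience, fix any integer vector $\vec\beta$ of magnitude at most a prescribed polynomial bound $B$, produced (possibly adversarially) as a function of $(\pid, \psd)$ alone. Let $\vec R$ denote the result of bundling a uniformly random string $\vec r \gets \zo^m$ exactly as above, so $\vec R$ is uniform over $\big([0,2^c)\big)^{m'}$. We argue along the chain
\begin{align*}
 (\pid, \psd, \vec z + \vec\beta) \ \approx_\gamma\ (\pid, \psd, \vec R + \vec\beta) \ \approx_{s}\ (\pid, \psd, \vec R) \ \approx_\gamma\ (\pid, \psd, \vec z)~,
\end{align*}
where $\vec z = \peval^{\drg}(\pid, \sd)$. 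The first and last steps are each by $\gamma$-pseudorandomness of $\sprg$: bundling is an efficient deterministic post-processing of the last component, and the reduction obtains $\vec\beta$ by running the (efficient) perturbation-choosing adversary on $(\pid, \psd)$ and adds it to the last component, so a distinguisher for either outer pair yields a distinguisher for $(\pid,\psd,\peval(\pid,\sd))$ versus $(\pid,\psd,\vec r)$ with the same advantage, up to polynomial running-time overhead. The middle step is statistical: for each coordinate, the uniform law on $\{0,\dots,2^c-1\}$ and its shift by $\beta_i$ differ by at most $|\beta_i|/2^c \le B/2^c$ in statistical distance, so a union bound over the $m'$ coordinates gives total statistical distance at most $m' B / 2^c$. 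We now pin down $c$ by requiring $2^c \ge \secparam \cdot m' \cdot B$, which is achievable with $c = \Theta(\log\secparam)$ because $m'$ and $B$ are polynomial in $\secparam$; then the statistical error is $O(1/\secparam)$, and summing the three advantages gives $(2\gamma + O(1/\secparam))$-perturbation resilience. In the subexponential regime the same reduction applies verbatim, losing only polynomial factors in running time, so a $2^{\secparam^{\nu}}$-secure $\sprg$ yields a subexponentially secure $\drg$.

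Finally, the stretch: the $\drg$ seed equals the $\sprg$ seed, of length $n \cdot \poly(\secparam)$, while the output has $m' = \lfloor n^{\pstretch}/c \rfloor$ integer coordinates. Since $c = \Theta(\log \secparam)$ and $\pstretch' < \pstretch$, taking $n$ to be a sufficiently large polynomial in $\secparam$ yields $m' \ge (n\cdot\poly(\secparam))^{\pstretch'}$, so $\drg$ has stretch $\pstretch'$. I do not expect a deep conceptual obstacle here; the only step that needs genuine care is the joint choice of $c$, which must simultaneously (i) keep the output integers of polynomial magnitude, (ii) drive the statistical error below $O(1/\secparam)$, and (iii) cost only a sub-polynomial factor in the stretch exponent — all three are compatible precisely because every relevant quantity is polynomial in $\secparam$ and the hypothesis $\pstretch' < \pstretch$ supplies the needed slack.
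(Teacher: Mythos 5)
Your proof is correct and takes essentially the same route as the paper: it constructs the $\drg$ by reusing the $\sprg$ index and seed samplers and bundling $\Theta(\log\secparam)$ output bits into small integers, then establishes perturbation resilience by a three-step argument that applies $\sprg$ pseudorandomness on both ends around a statistical smudging step of cost $O(1/\secparam)$, exactly as in the paper's hybrid chain (your single statistical step corresponds to the paper's $\hyb_1\to\hyb_4$ after collapsing its two trivial re-encodings).
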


\paragraph{From  Perturbation Resilient  Generator to Weak FE for $\NC^0$.}
It was shown in~\cite{C:AJLMS19,EC:JLMS19,JLS19} that $\drg$, along
with $\mathsf{SXDH}$, $\mathsf{LWE}$ and $\prg$ in $\mathsf{NC}^0$,
can be used to construct a secret-key functional encryption scheme for
$\NC^0$ circuits. The $\fe$ scheme supports only a {\em
  single secret key} for a function with multiple output bits, has
{\em weak} indistinguishability security, and has ciphertexts whose
sizes grow sublinearly in the circuit size and linearly in the input
length. Formal definitions of functional encryption schemes are
provided in~\ref{sec:fenc0}.
\begin{theorem}[\cite{C:AJLMS19,EC:JLMS19,JLS19}] \label{thm:fenc0}
Let $\gamma \in (0, 1)$, $\epsilon>0$, and $D\in
\mathbb{N}$ be arbitrary constants. Let $\secparam$ be a security
parameter, $p$ be an efficiently samplable $\secparam$ bit prime, and
$k=k(\secparam)$ be a large enough positive polynomial in
$\secparam$. Assume (subexponential) hardness of 
\begin{itemize}
\item the $\mathsf{SXDH}$ assumption with respect to a bilinear groups of order $p$,
\item the $\mathsf{LWE}$ assumption with modulus-to-noise ratio $2^{k^{\epsilon}}$ where $k=k(\secparam)$ is the dimension of the secret,
\item the existence of $\gamma$-secure perturbation resilient
  generators $\drg \in (\arithNC^0,\deg\ 2)$ over $\Z_p$ with
  polynomial stretch.
\end{itemize}
There exists a secret-key functional encryption scheme for $\NC^0$
circuits with multilinear degree $D$ over $\Int$, having
\begin{itemize}
\item 1-key, weakly selective, (subexponential)
  $(\gamma+\negl)$-indistinguishability-security, and 
\item {\em sublinearly compact ciphertext with linear dependency on input
    length}, that is, ciphertext size is
  $|\ct|=\poly(\secparam)(\ilen + S^{1-\sigma})$, where $\ilen$ is the
  input length, $S$ the maximum size of the circuits supported, 
  $\sigma$ is some constant in $(0,1)$, and $\poly$ depends on
  $D$.
\end{itemize}
\end{theorem}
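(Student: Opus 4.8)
The plan is to follow the blueprint of \cite{C:AJLMS19,EC:JLMS19,JLS19}, which obtains the desired $\NC^0$-FE by bootstrapping from a \emph{partially-hiding quadratic} functional encryption scheme (built from $\mathsf{SXDH}$), using $\mathsf{LWE}$, the $\prg$ in $\mathsf{NC}^0$, and the $\drg$ to bridge the degree gap. First I would recall (or re-derive) a secret-key partially-hiding FE for degree-$2$ functions over $\Z_p$ from $\mathsf{SXDH}$ on asymmetric bilinear groups of order $p$: a ciphertext encodes a public vector $\vec x_{\mathsf{pub}}$ in the clear together with a private vector $\vec x_{\mathsf{priv}}$ in the two source groups, a key encodes a quadratic polynomial $Q$, and decryption recovers $Q(\vec x_{\mathsf{pub}},\vec x_{\mathsf{priv}})$ via the pairing. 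This scheme is $1$-key weakly-selective function-hiding and has ciphertexts of length linear in $\vec x$; this is the only place $\mathsf{SXDH}$ is used.

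The core step is the degree reduction. Given a target $\NC^0$ circuit $C$ of multilinear degree $D$ and input $x$ of length $\ilen$, I would: (i) FHE-encrypt $x$ under $\mathsf{LWE}$, so the key holder can homomorphically evaluate $C$ to obtain an FHE ciphertext $\mathsf{fct}$ of size $\poly(\secparam)$ independent of $|C|$, whose decryption under FHE key $\vec s$ is the ``almost-linear'' map $\lfloor \langle \vec s,\mathsf{fct}\rangle \rceil$ (an inner product followed by rounding); (ii) use an arithmetic randomized encoding / partial garbling so that the composed functionality ``rounded FHE-decryption of $\mathsf{fct}$, then output'' is computable by a polynomial of degree $2$ in the private quantities, \emph{provided} enough fresh randomness is supplied; (iii) generate that randomness with the $\drg$, whose output has degree $2$ in its short private seed and polynomial stretch, so the randomness that must actually be placed in the ciphertext has length only $S^{1-\sigma}\poly(\secparam)$. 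The $\NC^0$-FE ciphertext for $x$ is then $\bigl(\mathsf{fct}=\mathsf{FHE.Enc}(x),\ \mathsf{QFE.Enc}(\vec s,\ \text{private seed of }\drg,\ \text{auxiliary masks})\bigr)$, where $\mathsf{fct}$ sits in the public slot and the function key is the $\mathsf{QFE}$ key for the quadratic polynomial that reconstructs the $\drg$ randomness and applies the garbled rounded-decryption. For compactness, $|\mathsf{fct}|=\poly(\secparam)\cdot \ilen$ and the $\mathsf{QFE}$ ciphertext has length $\poly(\secparam)\cdot(\ilen+S^{1-\sigma})$ because the $\drg$ seed is sublinear in the number of garbling-randomness elements, yielding the claimed $|\ct|=\poly(\secparam)(\ilen+S^{1-\sigma})$.

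For security I would run a hybrid argument in the $1$-key weakly-selective model: (1) switch the $\mathsf{QFE}$ ciphertext to simulated/function-hiding mode using $\mathsf{SXDH}$, so only the decrypted value on the single key is revealed; (2) argue that this value is the garbled rounded-FHE-decryption of $\mathsf{fct}$ and replace the garbling by its simulator, which depends only on $C(x)$ plus $\mathsf{LWE}$-style noise terms; (3) invoke $\mathsf{LWE}$ (FHE semantic security) to replace $\mathsf{FHE.Enc}(x)$ by an encryption revealing nothing beyond $C(x)$; (4) use the \emph{perturbation resilience} of the $\drg$ precisely to absorb the small, adversarially-influenced FHE rounding-noise perturbations on the garbled output so they do not leak the challenge bit; and (5) use pseudorandomness of the $\prg$ in $\mathsf{NC}^0$ where the reduction needs the garbling randomness to appear uniform. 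The final $\gamma$ loss is exactly the $\gamma$-advantage inherited from the $\drg$, and the subexponential variant follows by taking all primitives subexponentially secure and scaling $k$, $n$, $\secparam$; the $1$-key and weakly-selective restrictions are artifacts of the $\mathsf{QFE}$ security notion and this proof technique.

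The step I expect to be the main obstacle is (ii)--(iii): arranging the arithmetic randomized encoding of ``rounded FHE decryption $+$ $\prg$'' so that it is simultaneously (a) degree $2$ in the combination of $\vec s$ and the $\drg$ seed \emph{after} the $\drg$'s own degree-$2$ expansion is accounted for --- which requires the careful trilinear/``$3$-restricted'' structuring of \cite{AJS18,LM18} so that the product of ``$\drg$-randomness (degree $2$ in seed)'' and ``garbling (linear/degree $2$ in randomness)'' does not collapse to degree $4$ --- and (b) small enough that the $\drg$ seed, hence the ciphertext, stays sublinear in $S$; and then, in the security proof, showing that the perturbation-resilience guarantee is strong enough to cover exactly the distribution of perturbations induced on the garbled output by FHE rounding error. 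Once those two ingredients line up, the rest is a standard, if lengthy, composition.
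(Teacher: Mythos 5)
Your proposal follows the \emph{garbling/randomized-encoding} route of the earlier works \cite{AJS18,LM18} (and ultimately \cite{FOCS:GGHRSW13}), not the route of the cited works \cite{C:AJLMS19,EC:JLMS19,JLS19} that the theorem is importing, and I think this creates a genuine gap rather than just a different derivation. The key insight you are missing is that when the target functions are $\NC^0$ with \emph{constant multilinear degree $D$}, no randomized encoding, no garbling, and no rounding are needed. The ciphertext encodes $\vec b = \vec s\vec A + 2\vec e + \vec x$ and places $\ol{\vec s}^{\otimes \lceil D/2\rceil}$ (along with the $\drg$ private seed) in the PHFE private slot. Each degree-$D$ monomial $x_\mnlv$ satisfies $x_\mnlv + 2e_\mnlv = \innerp{\otimes_{i\in\mnlv}\vec c_i, \otimes_{i\in\mnlv}\ol{\vec s}}$, so the PHFE key function $G$ computes $f(\vec x) + 2\vec e' + 2\vec r$ \emph{directly} as a polynomial of degree $D$ in the public slot and degree $2$ in the private slot; the decryptor recovers $f(\vec x)$ by brute-force discrete log followed by reduction mod $2$. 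Perturbation resilience of $\drg$ is invoked exactly once, to absorb $\vec e'$, after which $\mathsf{LWE}$ hides $\vec x$ inside $\vec b$. That is the entire argument.

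Your steps (ii), (iii), and (v) diverge from this in ways that do not close. First, you invoke a Boolean $\prg$ in $\NC^0$ to produce garbling randomness, but no such $\prg$ appears among the hypotheses of the theorem --- the assumptions are only $\mathsf{SXDH}$, $\mathsf{LWE}$, and the $\drg$. Second, you rely on a rounded FHE decryption $\lfloor\innerp{\vec s,\mathsf{fct}}\rceil$ and a randomized encoding to bring it to degree $2$; rounding is a high-degree operation, and as you yourself observe this pushes you toward the ``3-restricted'' trilinear structuring of \cite{AJS18,LM18}. But $\mathsf{SXDH}$ on bilinear groups only gives a function-hiding PHFE for \emph{degree-$2$} private computation (as in \cite{JLS19}); it does not give the $3$-restricted primitive your route requires, so the degree accounting you flag as ``the main obstacle'' is not merely an obstacle to be careful about --- it is exactly what forces the cited works to drop the garbling step and use the $\ol{\vec s}^{\otimes\lceil D/2\rceil}$ tensoring instead. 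Third, your use of the $\drg$ for two purposes (garbling randomness and noise smudging) conflates roles; in the cited construction the $\drg$ output is used \emph{only} as a smudging mask $\vec r$, which is why the theorem requires $\drg\in(\arithNC^0,\deg\ 2)$ and nothing stronger. So the proposal, as written, would need either an additional PRG assumption plus a $3$-restricted multilinear PHFE (neither granted by the theorem), or else a rewrite to adopt the direct tensoring construction recapped in Section~\ref{sec:fenc0}.
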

For convenient reference, the construction is recalled in Section
\ref{sec:fenc0}. 

\paragraph{From weak FE for $\NC^0$ to Full-Fledged FE for All Polynomial Size Circuits}
Starting from the above weak version of secret key functional
encryption scheme -- weak function class $\NC^0$, weak security,
and weak compactness -- we apply known transformations to
obtain a full-fledged {\em public key} FE scheme for polynomial size
circuits, satisfying adaptive collusion resistant security, and having
full compactness.

\begin{theorem}[Strengthening FE] \label{thm:fep}
  Let $\gamma \in (0,1)$. Let $\secparam \in \mathbb{N}$ be a security parameter and $k(\secparam)$ be a large enough positive polynomial. Assume the (subexponential) hardness of 
  \begin{itemize}
  \item the $\mathsf{LWE}$ assumption with modulus-to-noise ratio
    $2^{k^{\epsilon}}$ where $k=k(\secparam)$ is the dimension of the
    secret, and
  \item the existence of Boolean $\prg$s in $\NC^0$ with polynomial
    stretch and multilinear degree $d \in \Nat$ over $\Int$.
\end{itemize}
There are the following transformations:
\begin{enumerate}
\item \textsc{Starting Point.}

  Suppose there is a secret-key functional encryption scheme for
  $\NC^0$ circuits with multilinear {\color{red} degree $(3\pdeg + 2)$} over $\Int$,
  having 1-key, weakly selective, (subexponential)
  $\gamma$-indistinguishability security, and { sublinearly compact
    ciphertext and linear dependency on input length}.
  
\item \textsc{Lifting Function Class \cite{AJS15,FOCS:LinVai16,EC:Lin16}.}

  There exists a secret-key functional encryption scheme for {\color{red}
    polynomial size circuits}, having 1-key, weakly selective,
  (subexponential) $(\gamma+ \negl)$-indistinguishability security,
  and {\color{red} sublinearly compact ciphertexts}, that is,
  $|\ct|=\poly(\secparam, \ilen) S^{1-\sigma}$.

\item \textsc{Security Amplification \cite{AJS18,C:AJLMS19,C:JKMS20}.}

  There exists a secret-key functional encryption scheme for
  polynomial-size circuits, having 1-key, weakly selective,
  (subexponentially) {\color{red} ($\negl$-)indistinguishability security}, and
  sublinearly compact ciphertexts.

\item \textsc{Secret Key to Public Key, and Sublinear Ciphertext to Sublinear Encryption Time \cite{BNPW16,LPST16,GKPVZ13}.}

  There exists a  {\color{red} public-key} functional encryption
    scheme for polynomial size circuits, having 1-key, weakly selective, (subexponentially)
    indistinguishability security, and  {\color{red} sublinear
      encryption time}, 
    $T_\enc = \poly(\secparam, \ilen) S^{1-\sigma}$.

  \item \textsc{1-Key to Collusion Resistance \cite{TCC:GarSri16,TCC:LiMic16,EC:KitNisTan18} }

    There exists a public-key functional encryption scheme for
    polynomial-size circuits, having {\color{red} collusion resistant,
      adaptive}, (subexponentially) indistinguishability security, and
    encryption time $\poly(\secparam,\ilen)$.
  \end{enumerate}
\end{theorem}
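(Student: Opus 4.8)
The plan is to establish each of the five items as an instantiation of a previously known transformation, verifying only that the hypotheses and parameters chain together correctly; the two assumptions in the hypothesis (subexponential $\mathsf{LWE}$ with modulus-to-noise ratio $2^{k^{\epsilon}}$ and a Boolean $\prg$ in $\NC^0$ of multilinear degree $d$ over $\Int$) are precisely what these transformations consume.

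First, for item~2 (\emph{lifting the function class}), I would invoke the bootstrapping theorems of \cite{AJS15,FOCS:LinVai16,EC:Lin16}: given a Boolean $\prg$ in $\NC^0$ of multilinear degree $d$ over $\Int$, a secret-key FE for $\NC^0$ circuits of multilinear degree $3d+2$ with 1-key, weakly-selective, $\gamma$-indistinguishability and sublinearly-compact ciphertexts yields a secret-key FE for \emph{all} polynomial-size circuits with the same security type and $|\ct| = \poly(\secparam,\ilen)\,S^{1-\sigma}$. The degree blow-up to $3d+2$ arises because the bootstrapping step expands encryption randomness with the $\prg$ (degree $d$) and composes it with a decrypt-then-recombine operation that, under a suitable randomized-encoding-in-$\NC^0$ representation, costs a further constant multiplicative factor; this is the one place in the chain where the $\prg$ degree enters, and it explains why the starting-point FE in item~1 is asked to handle degree $3d+2$ rather than $2$. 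I must check that the degree requested by this transformation for the given $d$ matches exactly $3d+2$.

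Next, for item~3 (\emph{security amplification}), I would apply the amplification theorem of \cite{AJS18,C:AJLMS19,C:JKMS20} for sublinearly-compact secret-key FE: using subexponential hardness of $\mathsf{LWE}$ and of the underlying FE (and internally, again, the $\prg$ in $\NC^0$), a construction with constant indistinguishability advantage $\gamma$ is boosted to one with negligible advantage, while preserving 1-key, weak selectivity, and the sublinear exponent $1-\sigma$. I expect this to be the main obstacle to getting the quantitative statement exactly right: the amplifier runs roughly $\secparam$ (more precisely $\tfrac{1}{\gamma}\cdot\omega(\log\secparam)$) parallel instances and argues security through a hybrid over them, so the per-instance advantage must survive a $2^{\secparam^{c}}$-time reduction — hence subexponential security is essential, not merely convenient — and the amplified parameters ($\negl$ in place of $\gamma$, the preserved exponent) have to be tracked through this hybrid.

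Finally, items~4 and~5 are generic upgrades applied in sequence to the output of item~3. For item~4 I would apply \cite{BNPW16,LPST16,GKPVZ13}: a secret-key FE with sublinearly-compact ciphertexts yields a public-key FE with sublinear \emph{encryption time} $T_\enc = \poly(\secparam,\ilen)\,S^{1-\sigma}$, preserving 1-key, weakly-selective, (subexponential) indistinguishability — the secret-to-public step uses public-key encryption, available from $\mathsf{LWE}$. For item~5 I would apply the single-key-to-collusion-resistant and selective-to-adaptive compilers of \cite{TCC:GarSri16,TCC:LiMic16,EC:KitNisTan18}, which take the 1-key weakly-selective public-key FE from item~4 and produce a collusion-resistant, adaptively-secure public-key FE for all polynomial-size circuits with encryption time $\poly(\secparam,\ilen)$, still under (subexponential) indistinguishability. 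Composing items 2--5 gives the theorem; the only genuinely new bookkeeping is verifying that the degree-$(3d+2)$ requirement in item~1 is met by the $\drg$-to-$\NC^0$-FE construction of Theorem~\ref{thm:fenc0} instantiated with $D = 3d+2$, and that all advantages remain subexponentially small along the way.
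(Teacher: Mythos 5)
Your proposal matches the paper's approach: Theorem~\ref{thm:fep} is, as stated, a chain of five previously known transformations, and the paper does not offer a detailed proof beyond the cited references — exactly what you do. The one place worth tightening is your account of the degree $3d+2$: the precise bookkeeping (in the commented-out source and in \cite{EC:Lin16,FOCS:LinVai16}) is that the AIK-style randomized encoding is degree 3 in its random coins and degree 1 in the input, giving $3d+1$ when the coins are produced by a degree-$d$ $\prg$, with one further $+1$ coming from the "trojan" method used to lift IND-security to SIM-security, not from a generic multiplicative overhead as you describe; but this is a presentational point, not a gap.
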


\paragraph{FE to IO Transformation}
Finally, we rely on the FE to IO transformation to obtain $\iO$. 
\begin{theorem}[\cite{AJ15,BV15}]
\label{thm:feio}
Assume the existence of a public-key functional encryption scheme for
polynomial-size circuits, having 1-key, weakly selective,
subexponentially indistinguishability security, and sublinear
encryption time.  Then, (subexponentially secure) $i\mathcal{O}$ for
polynomial size circuits exists.
\end{theorem}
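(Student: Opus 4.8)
Since this is the by-now-standard functional-encryption-to-$\iO$ result, the plan is to recall the tree-based transformation of \cite{AJ15,BV15} and check that its hypotheses are met by the public-key FE scheme produced in Theorem~\ref{thm:fep}. To obfuscate a circuit $C\colon\zo^n\to\zo$ of size $S$, one builds a depth-$n$ complete binary tree whose nodes are indexed by prefixes $x_{\le i}$. The obfuscation outputs one \emph{root ciphertext} together with $n+1$ function secret keys, the $i$-th drawn from an independent FE master key pair $(\mathsf{mpk}_i,\mathsf{msk}_i)$. The root ciphertext encrypts under $\mathsf{mpk}_1$ a short $\prg$ seed $s$ (carrying along the empty prefix); for $1\le i<n$ the level-$i$ key is for the circuit $\mathsf{Next}_i$ that on input a node $(x_{\le i},s)$ computes $(s_0,s_1)\gets\prg(s)$ and outputs the pair of children ciphertexts $\mathsf{Enc}(\mathsf{mpk}_{i+1},(x_{\le i}\|0,s_0))$ and $\mathsf{Enc}(\mathsf{mpk}_{i+1},(x_{\le i}\|1,s_1))$, using bits of $s_0,s_1$ as encryption coins; the level-$n$ key is for the circuit $(x,s)\mapsto C(x)$. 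Evaluation on $x$ just decrypts down the root-to-leaf path selected by the bits of $x$.

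First I would verify completeness and efficiency. Completeness is an induction along the evaluation path: decrypting the level-$i$ ciphertext with the level-$i$ key yields precisely the two child ciphertexts, and at the leaf one recovers $C(x)$. For efficiency, the point is that $\mathsf{Next}_i$ merely runs two FE encryptions plus a constant-depth $\prg$, so its size is controlled by the level-$(i{+}1)$ ciphertext size; since the FE scheme has sublinear encryption time $T_\enc=\poly(\secparam,n)\,S^{1-\sigma}$ (hence sublinear ciphertexts), unrolling the resulting recursion on circuit sizes shows the full obfuscation has size $\poly(\secparam,n,S)$ and runs in probabilistic polynomial time. Two further points worth flagging: each FE instance ever releases exactly one function key, so 1-key security suffices; and it is the public-key property of FE that lets $\mathsf{Next}_i$ manufacture child ciphertexts with no secret embedded in it.

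For indistinguishability I would run a hybrid argument over the $2^n$ leaves. Fixing functionally equivalent $C_0,C_1$, one traverses the tree in a carefully chosen order (a recursive, subtree-by-subtree schedule) and, one node at a time, switches the ciphertext feeding that node from ``the value induced by running the $C_0$ tree'' to ``the value induced by running the $C_1$ tree.'' Each switch is done in two sub-steps: first replace the $\prg$-derived coins at the parent node with truly uniform coins ($\prg$ security), which turns the two children into fresh encryptions; then apply weakly selective 1-key security of the level-$(i{+}1)$ FE instance to change the encrypted payload, which is admissible because the single released key produces identical outputs on the two payloads once the subtree below has been fixed — and at the leaves $C_0(x)=C_1(x)$ makes the innermost hybrids literally identical. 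Composing all the switches gives $\iO(1^\secparam,C_0)\approx_c\iO(1^\secparam,C_1)$.

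The step I expect to be the main obstacle is the loss accounting. There are $2^{\Theta(n)}$ hybrids, each costing the FE (or $\prg$) distinguishing advantage, so to keep the end-to-end advantage negligible one needs $(2^{\secparam^{c}},2^{-\secparam^{c}})$-security of both the FE scheme and the $\prg$, with $\secparam$ taken large enough relative to $n$ that $2^{n}\cdot 2^{-\secparam^{c}}$ remains negligible; this is exactly why the hypothesis asks for subexponential security and why the conclusion is a subexponentially secure $\iO$. The other delicate point, already noted, is controlling the size recursion so that the merely \emph{sublinear} efficiency of the FE scheme collapses to a genuinely polynomial-size obfuscator — this is the technical heart of \cite{AJ15,BV15}, which we invoke as a black box rather than reprove.
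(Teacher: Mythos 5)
Your proposal is a faithful sketch of the tree-based FE-to-$\iO$ transformation of Ananth--Jain and Bitansky--Vaikuntanathan, which is exactly what the paper invokes here: the theorem is imported by citation to \cite{AJ15,BV15} with no in-paper proof, and your account of the construction, the role of public-key and 1-key weakly-selective security, the size recursion enabled by sublinear encryption time, and the $2^{\Theta(n)}$-hybrid argument necessitating subexponential security all match the argument of those works. No gap.
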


\paragraph{Putting Pieces Together} Combining Theorem \ref{thm:main},
Theorem \ref{thm:prgdrg}, Theorem \ref{thm:fenc0}, Theorem \ref{thm:fep},
and Theorem \ref{thm:feio}, we get our main result:

\begin{theorem}
\label{thm:iomain}
Let $\pstretch>1$, $\epsilon, \spasity \in (0,1)$, and $d \in \Nat$ be arbitrary
constants. Let $\secparam \in \mathbb{N}$ be a security parameter, $p$ be an
efficiently samplable $\secparam$ bit prime, and
$n=n(\secparam)$ and $k=k(\secparam)$ be large enough
positive polynomials in the security parameter.  Assume sub-exponential
hardness of the following assumptions:  
\begin{itemize}
\item the $\mathsf{LWE}$ assumption with modulus-to-noise ratio $2^{k^{\epsilon}}$ where $k$ is the dimension of the secret,
\item  the $\mathsf{SXDH}$ assumption with respect to bilinear groups of prime order $p$,
\item the existence of a Boolean $\mathsf{PRG}$ in $\mathsf{NC}^0$
  with polynomial stretch and multilinear degree $d$ over $\Int$, and
\item the $\flpn(\ell,n,\ell^{-\delta},p)$ where
  $\ell=n^{\frac{1}{\lceil \frac{d}{2}\rceil}}$.
\end{itemize}
Then, (subexponentially secure) indistinguishability obfuscation for
all polynomial-size circuits exists. Further, assuming only polynomial
security of these assumptions, there exists collusion resistant,
adaptive, and compact public-key functional encryption for all
circuits.
\end{theorem}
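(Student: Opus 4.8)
The plan is to chain the five component theorems above in sequence, instantiating each with the right parameters and keeping track of the function class, the multilinear degree, and the security level carried forward; there is no genuinely new content to prove, so the work is entirely in making the parameters line up.

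First I would apply Theorem~\ref{thm:main}, feeding in the assumed constant-locality Boolean $\prg$ (which, living in $\NC^0$, has constant multilinear degree $\pdeg$ over $\Int$ and polynomial stretch $\pstretch>1$) together with the assumed $\flpn(\ell,n,\ell^{-\delta},p)$ assumption. The hypotheses of Theorem~\ref{thm:iomain} are chosen precisely so that $\ell = n^{1/\lceil \pdeg/2\rceil}$, the noise rate is $r = \ell^{-\delta}$, and $p$ is a $\secparam$-bit prime, which is exactly the LPN regime Theorem~\ref{thm:main} demands; taking $n$ a large enough polynomial in $\secparam$ makes the resulting object genuinely expanding. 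This yields an $\sprg \in (\arithNC^0,\deg\ 2)$ of some polynomial stretch $\pstretch'>1$ that is $\gamma$-pseudorandom for \emph{every} constant $\gamma>0$, and subexponentially so under the subexponential versions of the two assumptions. Next I would pass this $\sprg$ through Theorem~\ref{thm:prgdrg} (with $\mathbb C = \arithNC^0$ and degree parameter $2$) to obtain, for some constant stretch $1<\pstretch''<\pstretch'$, a $\drg \in (\arithNC^0,\deg\ 2)$ of polynomial stretch $\pstretch''$ that is $(2\gamma + O(1/\secparam))$-perturbation resilient; since $\gamma$ ranges over all positive constants and the additive $O(1/\secparam)$ term vanishes, this $\drg$ is $\gamma'$-perturbation resilient for every constant $\gamma'>0$, subexponentially so under subexponential assumptions.

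Then I would invoke Theorem~\ref{thm:fenc0} with its degree parameter set to $D = 3\pdeg+2$, combining the just-constructed perturbation-resilient $\drg \in (\arithNC^0,\deg\ 2)$ with the assumed $\mathsf{SXDH}$ assumption and the assumed $\mathsf{LWE}$ assumption with modulus-to-noise ratio $2^{k^{\epsilon}}$. This produces a secret-key $\fe$ for $\NC^0$ circuits of multilinear degree $3\pdeg+2$ over $\Int$ that is $1$-key, weakly selective, $(\gamma'+\negl)$-indistinguishability-secure, with sublinearly compact ciphertexts of linear dependence on the input length --- precisely the ``Starting Point'' (item~1) of Theorem~\ref{thm:fep}. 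I would then run the pipeline of Theorem~\ref{thm:fep} --- lift the function class to all polynomial-size circuits; amplify the constant advantage $\gamma'$ to negligible; convert secret-key to public-key while converting sublinear ciphertext to sublinear encryption time; upgrade $1$-key to collusion resistance --- using the assumed $\mathsf{LWE}$ assumption and the $\NC^0$ $\prg$ of multilinear degree $\pdeg$, obtaining both a $1$-key weakly-selective public-key $\fe$ for all polynomial-size circuits with sublinear encryption time (item~4) and a collusion-resistant, adaptively secure, compact public-key $\fe$ (item~5). Finally, feeding the item~4 scheme, in its subexponentially secure instantiation, into the FE-to-$\iO$ transformation of Theorem~\ref{thm:feio} yields subexponentially secure $\iO$ for all polynomial-size circuits; the item~5 scheme, instantiated under merely polynomial hardness of the four assumptions, is exactly the ``Further'' conclusion.

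The points that actually need attention are: (a) checking that $\ell = n^{1/\lceil \pdeg/2\rceil}$ and $r=\ell^{-\delta}$ meet the exact LPN requirement of Theorem~\ref{thm:main}; (b) threading the multilinear-degree budget --- the $\NC^0$ $\prg$ contributes degree $\pdeg$, the derived $\sprg$ and $\drg$ have private-seed degree $2$, and Theorem~\ref{thm:fep} expects its input $\fe$ to handle $\NC^0$ circuits of multilinear degree $3\pdeg+2$, which forces $D=3\pdeg+2$ in Theorem~\ref{thm:fenc0}; and (c) tracking security, since Theorem~\ref{thm:main} yields only \emph{constant}-advantage pseudorandomness, which survives to a constant-advantage $\fe$, and it is the amplification step internal to Theorem~\ref{thm:fep} that drives the advantage to negligible (and, under subexponential hardness, to subexponentially small) --- exactly the strength Theorem~\ref{thm:feio} consumes. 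I expect this last point, namely ensuring the ``constant $\gamma$'' of Theorem~\ref{thm:main} is small enough to seed the amplification and that the subexponential regime is preserved end-to-end for the $\iO$ conclusion, to be the subtlest part, though it is discharged entirely by invoking the stated theorems in their subexponential forms.
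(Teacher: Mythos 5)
Your proposal matches the paper's proof, which is exactly the one-line observation that Theorems~\ref{thm:main}, \ref{thm:prgdrg}, \ref{thm:fenc0}, \ref{thm:fep}, and \ref{thm:feio} chain together. Your elaboration of the parameter matching (setting $\ell = n^{1/\lceil d/2\rceil}$, $D = 3d+2$, tracking the constant-$\gamma$ security through to the amplification step) correctly fills in the bookkeeping the paper leaves implicit.
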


\subsection{Perturbation Resilient Generators}\label{sec:drg}

We recall the definition of perturbation resilient generators from~\cite{C:AJLMS19,EC:JLMS19,JLS19}.
\newcommand{\Bd}{B}
\begin{definition}[Syntax of  Perturbation Resilient Generators ($\drg$) ~\cite{C:AJLMS19,EC:JLMS19,JLS19}]
\label{def:drg}
Let $\pstretch$ be a positive constant.  A perturbation resilient
generator $\drg$ with stretch $\pstretch$ 
is defined by the following PPT algorithms: 
\begin{itemize}
\item $\setuppoly(1^{\secparam},1^n,1^\Bd):$ takes as input the security
  parameter $\secparam$, a seed length parameter $n$, and a bound
  $\Bd$, samples a function index $\pid$.
\item $\setupseed(\pid):$ samples two binary strings, a public seed
  and a private seed, $\sd = (\psd,\ssd)$. The combined length of
  these strings is $n\cdot \poly(\secparam,\log \Bd)$. 
\item $\peval(\pid, \sd):$ takes as input the index $\pid$ and the seed $\sd$ and  computes a string in $\Z^m \cap [-\poly(n,\Bd,\secparam),\ \poly(n,\Bd,\secparam)]^m$ for some fixed polynomial $\poly$.
\end{itemize}
\end{definition}

\begin{remark}
Similar to an $\sprg$, we say that $\drg$ has polynomial stretch if above $\pstretch>1$ for some constant $\pstretch$.
\end{remark}
\begin{remark}
Note that in the definition proposed by \cite{EC:JLMS19,JLS19}, the $\setupseed$ algorithm was not given as input $\pid$, however, their results still hold even if $\setupseed$ is given $\pid$ as input.
\end{remark}

\begin{definition} [Security of $\drg$ ~\cite{C:AJLMS19,EC:JLMS19,JLS19}]\label{def:drg-sec}
A perturbation resilient generator $\drg$ satisfies
\begin{description}
\item[$(T,\seclevel)$-perturbation resilience:] For every
  $n = n(\secparam)$ a positive non-constant polynomial in the
  security parameter $\secparam$, and $\Bd=\Bd(\secparam, n)$ a
  positive non-constant polynomial in $\secparam$ and $n$, and every
  sequence $\smallset{\vec \beta= \vec \beta_\secparam}$, where
  $\vec \beta \in \Z^m \cap [-\Bd,\Bd]^{m}$, we require that the following
  two distributions are
  $(T(\secparam),\seclevel(\secparam))$-indistinguishable:
\begin{align*}
\{ (\pid,\ \psd,\ \peval(\pid,\sd,\Bd)) \ | & \ \pid \gets \setuppoly(1^{\secparam},1^n,1^\Bd), \ \sd=(\ssd,\psd)\leftarrow \setupseed(\pid) \}    \\
\{ (\pid,\ \psd,\ \peval(\pid,\sd,\Bd)+\vec \beta) \ | & \ \pid \gets \setuppoly(1^{\secparam},1^n,1^\Bd),\ \sd=(\ssd,\psd)\leftarrow \setupseed(\pid) \}    
\end{align*}
\end{description}
\end{definition}

\begin{definition}[Complexity and degree of $\drg$]
  Let $\pdeg \in \Nat$, let $\secparam \in \mathbb{N}$ and
  $n=n(\secparam)$ be arbitrary positive non-constant polynomial in
  $\secparam$, and $p = p(\secparam)$ denote a prime modulus which is
  an efficiently computable function in $\secparam$.  Let $\mathbb{C}$
  be a complexity class.  A $\drg$ has complexity $\mathbb{C}$ in the
  public seed and degree $\pdeg$ in private seed over $\Int_p$,
  denoted as, $\drg \in (\mathbb C, \text{ deg } \pdeg)$, if for any
  polynomial $\Bd(n,\secparam)$ and every $\pid$ in the support of
  $\setuppoly(1^{\secparam},1^n,1^\Bd)$, there exists an algorithm
  $\pprocess_\pid$ in $\mathbb{C}$ and an $m(n)$-tuple of polynomials
  $Q_\pid$ that can be efficiently generated from $\pid$, such that
  for all $\sd$ in the support of $\setupseed(\pid)$, it holds that:
  \begin{align*}
    \peval(\pid, \sd) = Q_\pid(\ol\psd, \ssd)\; \text{over } \Int_p\; , \; \ol\psd
    = \pprocess_\pid(\psd)\; ,
  \end{align*}
  where $Q_\pid$ has degree 1 in $\ol\psd$ and degree $\pdeg$ in $\ssd$. 
  \end{definition}

We now prove the following proposition, which immediately implies Theorem~\ref{thm:prgdrg}. 
\begin{proposition}
\label{propo:prgdrg}
Assume the existence of a $(T,\gamma)$-pseudorandom structured seed
PRG, $\sprg$, in $(\mathbb{C}, \text{deg } d)$ with a stretch of $\pstretch>0$. Then for any constant  $0<\pstretch'<\pstretch$,  there exists a $(T,2\cdot \gamma+O(\frac{1}{\secparam}))$-perturbation resilient generator, $\drg$ in $(\mathbb{C},\deg \ d)$ with a stretch $\pstretch'$.
\end{proposition}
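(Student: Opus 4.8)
The plan is to build $\drg$ directly from $\sprg$ by keeping the same index-sampling and seed-sampling algorithms, and letting $\drg$'s evaluation output the $\sprg$ output \emph{as integers in $\{0,1\}$ lifted to $\Int$} (so the magnitude bound in the $\drg$ syntax is trivially satisfied, since each coordinate is Boolean). The only real issue is matching the stretch: the $\sprg$ produces $m = n^\pstretch$ output bits from a seed of length $n\cdot\poly(\secparam)$, and $\drg$ must have stretch $\pstretch'$ measured against its \emph{own} seed length, which also contains a $\poly(\secparam,\log B)$ blow-up. Since $B$ is an arbitrary polynomial in $\secparam$ and $n$, $\log B = O(\log n + \log\secparam)$, so the extra factor is still polynomial in $\secparam$; one then invokes the ``$n$ is a large enough polynomial in $\secparam$'' convention, exactly as in Claim~\ref{claim:stretch}, to absorb all $\poly(\secparam)$ factors and conclude that $m = {n'}^{\pstretch'}$ for a fresh seed-length parameter $n'$ and some constant $\pstretch'$ with $1 < \pstretch' < \pstretch$ (this is why we must allow $\pstretch'$ strictly smaller than $\pstretch$ — we lose a bit of stretch to the polynomial padding). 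The degree and complexity class are inherited verbatim: $\peval'(\pid',\sd) = G^{(3)}_{\pid'}(\ol\psd,\ssd)$ is already degree $1$ in the processed public seed and degree $d$ (here $2$) in the private seed, with $\pprocess$ in $\mathbb C = \arithNC^0$, so $\drg \in (\mathbb C, \deg d)$ follows immediately.

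The substantive step is proving $(T, 2\gamma + O(1/\secparam))$-perturbation resilience from $(T,\gamma)$-pseudorandomness. First I would observe that for a Boolean-valued output $\vec y \in \{0,1\}^m$ and an adversarially chosen perturbation $\vec\beta \in \Int^m \cap [-B,B]^m$, the perturbed value $\vec y + \vec\beta$ has each coordinate in a fixed interval of size $O(B)$. The key trick (as in prior $\drg$ constructions from PRGs) is that over the randomness of a \emph{uniformly random} $\vec r \in \{0,1\}^m$, the quantity $\vec r + \vec\beta$ is statistically close to... no — rather, the cleaner route: by $(T,\gamma)$-pseudorandomness, $(\pid,\psd,\peval'(\pid',\sd)) \approx_c (\pid,\psd,\vec r)$ with $\vec r \gets \{0,1\}^m$; hence also $(\pid,\psd,\peval'(\pid',\sd)+\vec\beta) \approx_c (\pid,\psd,\vec r + \vec\beta)$ by the same advantage $\gamma$ (a distinguisher for the shifted pair yields one for the unshifted pair by hardwiring $\vec\beta$). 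So it suffices to show $(\pid,\psd,\vec r) $ and $(\pid,\psd,\vec r+\vec\beta)$ are statistically $O(1/\secparam)$-close, and then chain: $\peval' + \vec\beta \approx_c \vec r + \vec\beta \approx_s \vec r \approx_c \peval'$, for total advantage $2\gamma + O(1/\secparam)$. The statistical closeness of $\vec r$ and $\vec r + \vec\beta$ is \emph{false} coordinate-wise in general (shifting a uniform bit by a large integer gives a very different distribution), so the actual argument must be more careful — I expect the real construction truncates or reduces mod a suitable value, or works over $\Int_p$ where $p \gg B$ and the output coordinates are re-randomized by adding a small uniform pad of length $\poly(\secparam)$ bits to each coordinate, making each coordinate statistically $2^{-\Omega(\secparam)}$-close to uniform over a range that dwarfs $B$. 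With such a pad of $\Theta(\log \secparam + \log B)$ bits per coordinate, shifting by $\beta_i \in [-B,B]$ changes the distribution by $O(B/2^{\log B + \log\secparam}) = O(1/\secparam)$ in statistical distance per coordinate, but a union bound over $m$ coordinates would cost $m/\secparam$, which is too much — so instead one argues via a hybrid that shifts one coordinate at a time only on the $O(B)$-sized support, or notes the pad length is $\Theta(\log(mB\secparam))$ so the per-coordinate error is $O(1/(m\secparam))$ and the union bound gives $O(1/\secparam)$.

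The main obstacle I anticipate is getting this padding/stretch bookkeeping exactly right: the pad must be long enough (polynomially in $\secparam$, large enough to swamp $m \cdot B$ and pay for the union bound over $m$ coordinates) to make the statistical error $O(1/\secparam)$, yet short enough that adding $\poly(\secparam)$ bits per coordinate — i.e. $m\cdot\poly(\secparam)$ extra public-seed bits — does not destroy the polynomial stretch. This is consistent precisely because $m = n^\pstretch$ while the seed grows by $n\cdot\poly(\secparam) + m\cdot\poly(\secparam) = m\cdot\poly(\secparam)/n^{\pstretch-1}\cdot n^{\pstretch-1}\cdots$ — wait, one must check $m \gg m\cdot\poly(\secparam)/n^{\,\text{something}}$, i.e. the pad is added to the \emph{output side} not the seed side, or the pad is added to the seed but reused so it costs only $\poly(\secparam)$ total. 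I would resolve this by having the pad be a \emph{single} short value reused across coordinates via the index (as $\phibkt,\phiind$ were reused in $\pidsamp'$), or by putting the re-randomization entirely inside the existing public seed structure, so that the final seed length is still $n'\cdot\poly(\secparam)$ and Claim~\ref{claim:stretch}-style reasoning applies. The pseudorandomness-to-perturbation-resilience implication itself is then a short three-term hybrid as sketched above; the stretch and the $O(1/\secparam)$ statistical bound are where all the care goes.
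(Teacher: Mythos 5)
Your hybrid skeleton is right ($\mathsf{Eval}+\vec\beta \approx_c \vec r+\vec\beta \approx_s \vec r \approx_c \mathsf{Eval}$, costing $2\gamma + O(1/\secparam)$), but the construction of $\drg$ you propose does not support the statistical step, and you never recover the paper's actual mechanism for making it work.

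Your initial construction --- output the Boolean $\sprg$ string lifted to $\Int$, coordinate by coordinate --- cannot be perturbation resilient: if $\beta_i = 1$ then $y_i + \beta_i \in \{1,2\}$ versus $y_i \in \{0,1\}$ is trivially distinguishable, as you yourself notice. You then correctly intuit that each $\drg$ coordinate must be uniform over a range $\gg \Bd$ so that shifting by $\beta_i \in [-\Bd,\Bd]$ is smudged, and you start hunting for a ``pad'' of $\Theta(\log(m\Bd\secparam))$ bits per coordinate. That hunt is where you get stuck: putting the pad in the seed costs $m\cdot\poly(\secparam)$ bits (destroys stretch, as you observe), and a single reused pad does not give independent per-coordinate smudging. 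The idea you are missing is that \emph{the pad is the $\sprg$ output itself}. The paper's $\drg.\peval$ computes $\vec z \gets \sprg.\peval(\pid',\sd) \in \{0,1\}^m$ with $m = n^{\pstretch}$, sets $t = \lceil\log_2(\secparam\cdot n^{\pstretch'}\cdot\Bd)\rceil$ and $m' = n^{\pstretch'}$, and then groups each consecutive block of $t$ bits of $\vec z$ into a single integer $y_i = \sum_{j\in[t]} 2^{j-1} z_{(i-1)t+j}$. Under pseudorandomness of $\sprg$, each $y_i$ is uniform over $[0,2^t-1]$ with $2^t \ge \secparam\cdot n^{\pstretch'}\cdot\Bd$, so shifting by $\beta_i$ costs $O(\Bd/2^t)$ per coordinate and $O(m'\Bd/2^t) = O(1/\secparam)$ total --- no fresh randomness and no seed blow-up, because we are \emph{spending output bits} of $\sprg$, not seed bits. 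The seed of $\drg$ is literally that of $\sprg$. This is also the correct explanation for why you must take $\pstretch'$ strictly less than $\pstretch$: the output length drops from $m = n^{\pstretch}$ bits to $m' = n^{\pstretch'}$ integers because each $\drg$ coordinate consumes $t$ $\sprg$-output bits, and you need $m \ge m' t$, which holds for large enough $\secparam$ since $t = O(\log\secparam)$ when $n$ and $\Bd$ are polynomials. Your attribution of $\pstretch' < \pstretch$ to seed padding is a misdiagnosis. Once this construction is in place, your three-way hybrid chain (plus the two trivial regrouping steps) goes through exactly as you sketched.
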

\begin{proof}
Let $\sprg$ be the given
structured-seed PRG with stretch $\pstretch$. The construction of
$\drg$ is as follows.

\begin{itemize}
\item $\drg.\setuppoly(1^{\secparam},1^n,1^\Bd):$ Run $\sprg.\pidsamp(1^{\secparam},1^n)\rightarrow \pid'$, and output $\pid=(\pid',\Bd,\secparam,n)$. 
\item $\drg.\setupseed(\pid):$ Run $\sprg.\psdsamp(\pid')\rightarrow  (\psd,\ssd)$ and output $\sd = (\psd,\ssd)$.
\item $\drg.\peval(\pid, \sd):$ Compute $\vec{z}\leftarrow \sprg.\peval(\pid',\sd)$ where
    $\vec{z}\in \{0,1\}^{n^{\pstretch}}$.  Let $m'=n^{\pstretch'}$ and
    $t=\lceil \log_2(\secparam\cdot n^{\pstretch'}\cdot \Bd)
    \rceil$. 
  \begin{itemize}
  \item If $m < m't$, there are not enough bits in the output of
    $\sprg$. Set $\vec y = \vec 0^{1\times m'}$ 
  \item Otherwise, for every $i\in [m']$, set
    $y_i=\sum_{j\in [t]} 2^{j-1}\cdot z_{(i-1)\cdot t +j} $.
  \end{itemize}
  Output $\vec y$.
\end{itemize}
\paragraph{Stretch:} The output length is exactly
$m' = n^{\pstretch'}$, while the seed length is identical to that of
$\sprg$, namely $n\poly(\secparam)$, as desired.

Further, observe that the output of $\drg$ is set to 0 when there are
not enough bits in the output of $\sprg$, namely $m < m't$. It is easy
to see that for arbitrary non-constant positive polynomials
$n=n(\secparam)$ and $\Bd=\Bd(\secparam, n)$, it holds that
$t = O(\log \secparam)$ and hence for any $0< \pstretch'< \pstretch$,
$m = n^\pstretch \ge m't =n^{\pstretch'} t$ for sufficiently large
$\secparam$. In this case, the output of $\drg$ is formed by the
output of $\sprg$.

\paragraph{Complexity:} We note that $\drg$ is in
$(\mathbb{C},\deg \ d)$. In the case that $m \ge m't$,
$\drg.\peval(\pid, \sd)$ outputs $\vec{y}$ where
$y_i=\sum_{j\in [t]} 2^{j-1}\cdot z_{(i-1)\cdot t +j} $, and
$\vec{z}=\sprg.\eval(\pid',\sd)$. Since each $y_i$ is a linear
function of $\vec{z}$ and each $z_i$ is degree $d$ in $\ssd$,
$\vec{y}$ is also degree $d$ in $\ssd$. Further since each $z_i$
is linear in $\ol\psd= \pprocess_\pid(\psd)$ and
$\pprocess_\pid \in \mathbb{C}$,  $\vec{y}$ is also linear
in $\ol\psd= \pprocess_\pid(\psd)$. In the other case that $m < m't$,
the output $\vec y = \vec 0^{1\times m'}$ and had degree 0 in both
$\psd$ and $\ssd$.  Overall, $\drg\in (\mathbb{C},\deg \ d)$.

\renewcommand{\hyb}{\mathsf{H}}

\paragraph{$(T,2 \cdot \gamma+O(\frac{1}{\secparam}))$-perturbation
  resilience:}
Fix a sufficiently large $\secparam \in \Nat$, positive non-constant
polynomials $n = n(\secparam)$, $\Bd(\secparam,n)$and
$\beta= \beta_\secparam \in \Z^m \cap [-\Bd,\Bd]^{m}$, and
$t=\log_2 (\secparam\cdot n^{\pstretch'}\cdot \Bd)$.  We now show
the perturbation resilience of $\drg$ through a sequence of hybrids.

\begin{description}
\item [Hybrid $\hyb_0$:] In this hybrid, we give to the
  adversary,
  \begin{align*}
    \forall i \in [m'], \ y_i=\sum_{j\in [t]} 2^{j-1}\cdot
    z_{(i-1)\cdot t +j}+\beta_i\; , \qquad \vec{z} = \sprg.\eval(\pid',\sd)\; , 
  \end{align*}
  along with the public index $\pid$ and the public part of the seed
  $\psd$. As observed above, when $n$ and $B$ are positive
  non-constant polynomials, and $\secparam$ is sufficiently large, it
  always holds that $m \ge m't$ and the output of $\drg$ is non-zero
  and formed as above. Thus, this hybrid corresponds to the first challenge distribution
  in the security definition of $\drg$ (Definition~\ref{def:drg-sec}).

\item [Hybrid $\hyb_1$:] In this hybrid, we change $\vec y$ to 
\begin{align*}
 y_i=\sum_{j\in [t]} 2^{j-1}\cdot r_{(i-1)\cdot
  t +j}+\beta_i~\; ,\qquad \vec{r}\gets \{0,1\}^{n^{\pstretch}}\; .
\end{align*}
This hybrid is $(T,\gamma)$-indistinguishable to hybrid $\hyb_0$ by the
$(T,\gamma)$-pseudorandomness of $\sprg$.

\item[Hybrid $\hyb_2$:] In this hybrid, we change $\vec y$ to
\begin{align*}
  y_i=u_i+\beta_i\; , \qquad u_i\gets [0,2^{t}-1]~.
\end{align*}
This hybrid  is identical to hybrid $\hyb_1$.

\item[Hybrid $\hyb_3$:] In this hybrid, we change $\vec y$ to
\begin{align*}
y_i=u_i\; , \qquad  u_i\gets [0,2^{t}-1]\; .
\end{align*}
This hybrid is statistically close to hybrid $\hyb_2$ with the
statistical distance  bounded by $O(m'\cdot \frac{\Bd}{2^{t}-1})=O(\frac{1}{n})$. This is because each
$u_i$ is uniform between $[0,2^{t}-1]$ and
$\vert \beta_i \vert\leq \Bd$.

\item[Hybrid $\hyb_4$:] In this hybrid, we change $\vec y$ to
\begin{align*}
y_i=\sum_{j\in [t]} 2^{j-1}\cdot r_{(i-1)\cdot t +j}\; , \qquad
  \vec{r}\gets \{0,1\}^{n^{\pstretch}}\; .
\end{align*}
The hybrid above is identical to hybrid $\hyb_3$.

\item[Hybrid $\hyb_5$:] In this hybrid, we give to the adversary,
\begin{align*}
y_i=\sum_{j\in [t]} 2^{j-1}\cdot z_{(i-1)\cdot t +j}\; , \qquad \vec{z} = \sprg.\eval(\pid',\sd)\; . 
\end{align*}
This hybrid is $(T,\gamma)$-indistinguishable to hybrid $\hyb_4$ by
the $(T,\gamma)$-pseudorandomness of $\sprg$. By the same argument as
in hybrid $\hyb_0$, we have $m \ge m't$ and the output of
$\drg$ is non-zero and exactly as above. Thus, this corresponds to the
second challenge distribution in Definition~\ref{def:drg-sec}.
\end{description}
By a hybrid argument, we get that the total advantage in
distinguishing the two challenge distributions in the security
definition of $\drg$ is bounded by
$2\cdot \gamma+O(\frac{1}{\secparam})$.  This concludes the proof.
\end{proof}


\section{Acknowledgements}
 
 We would like to thank Stefano Tessaro and James Bartusek for helpful discussions. 
 We would also like to thank the Simons Institute for the Theory of Computing, for hosting all three authors during the program entitled ``Lattices: Algorithms, Complexity, and Cryptography".

 Aayush Jain was partially supported by grants listed under Amit Sahai, a Google PhD fellowship and a DIMACS award. This work was partly carried out while the author was an intern at NTT Research. This work was partly carried out during a research visit conducted with support from DIMACS in association with its Special Focus on Cryptography.

Huijia Lin was supported by NSF grants CNS{-}1528178, CNS{-}1929901, CNS{-}1936825~(CAREER), the Defense
Advanced Research Projects Agency (DARPA) and Army Research Office
(ARO) under Contract No.\ W911NF-15-C-0236, and a subcontract No.\
2017-002 through Galois.

Amit Sahai was supported in part from DARPA SAFEWARE and SIEVE awards, NTT Research, NSF Frontier Award 1413955, and NSF grant 1619348, BSF grant 2012378, a Xerox Faculty Research Award, a Google Faculty Research Award, an equipment grant from Intel, and an Okawa Foundation Research Grant. This material is based upon work supported by the Defense Advanced Research Projects Agency through Award HR00112020024 and the ARL under Contract W911NF-15-C- 0205. Amit Sahai is also grateful for the contributions of the LADWP to this effort.

The views expressed are those of the authors and do not reflect the official policy or position of the Department of Defense, DARPA, ARO, Simons, Intel, Okawa Foundation, ODNI, IARPA, DIMACS, BSF, Xerox,  the National Science Foundation, NTT Research, Google, or the U.S. Government.

\pagebreak

{\small
\bibliographystyle{alpha}
\bibliography{Bibliography/abbrev3, Bibliography/crypto, Bibliography/custom, Bibliography/cryptonizkamp, Bibliography/mr, Bibliography/mathreview, Bibliography/zblatt, Bibliography/jlms}}
\appendix

\renewcommand{\set}[1]{\left\{#1\right\}}

\newcommand{\pct}{P_{\ct}}
\newcommand{\sct}{S_{\ct}}

\newcommand{\renc}{\mathsf{REnc}}
\newcommand{\reval}{\mathsf{REval}}
\newcommand{\noise}{\chi}
\newcommand{\lwesec}{\epsilon}
\renewcommand{\lwe}{k}
\section{Partially Hiding Functional Encryption}
\label{sec:fedef}
We recall the notion of Partially-hiding Functional Encryption (PHFE)
schemes; some of the text in this section is taken verbatim
from~\cite{GJLS20}.  PHFE involves functional secret keys, each of
which is associated with some 2-ary function $f$, and decryption of a
ciphertext encrypting $(\vx,
\vy)$ 
with such a key reveals $f(\vx, \vy)$, $\vx$, $f$, and nothing more
about $\vy$. Since only the input $\vy$ is hidden, such an FE scheme
is called partially-hiding FE. FE can be viewed as a special case of
PHFE where the public input is the empty string. The notion was originally
introduced by~\cite{C:GorVaiWee12} and a similar notion of
partially-hiding predicate encryption was proposed and constructed
by~\cite{C:GorVaiWee15}.

We denote functionality by $\cF:\cX\times \cY \rightarrow \cZ$. The
functionality ensemble $\cF$ as well as the message ensembles $\cX$
and $\cY$ are indexed by two parameters: $n$ and $\secparam$ (for
example $\cF_{n,\secparam}$), where $\secparam$ is the security
parameter and $n$ is a length parameter and can be viewed as a
function of $\secparam$. 
\begin{definition} (Syntax of a PHFE/FE Scheme.)
\label{def:phfesyntax}
A {\em secret key} partially hiding functional encryption scheme, $\phfe$, for the
functionality $\cF:\cX\times \cY \rightarrow \cZ$ consists of the
following polynomial time algorithms:

\begin{itemize}
\item $\ppgen(1^{\secparam},1^n):$ The public parameter generation
  algorithm is a randomized algorithm that takes as input $n$ and
  $\secparam$ and outputs a string
  $\crs$. 
\item $\Setup(\crs)$: The setup algorithm is a randomized algorithm
  that on input $\crs$, returns 
  a master secret key $\msk$.
\item
  $\Enc(\msk, (x,y) \in \cX_{n,\secparam}\times \cY_{n,\secparam})$:
  The encryption algorithm is a randomized algorithm that takes in a
  master secret key and a message $(x,y)$ and returns the ciphertext
  $\ct$ along with the input $x$. $x$ is referred to as the public
  input whereas $y$ is called the private input.
\item $\KeyGen(\msk,f \in \cF_{n,\secparam})$: The key generation
  algorithm is a randomized algorithms that takes in a description of
  a function $f \in \cF_{n,\secparam}$ and returns $\sk_f$, a
  decryption key for $f$.
\item $\Dec(\sk_f,(x,\ct))$: The decryption algorithm is a
  deterministic algorithm that returns a value $z$ in $\cZ$, or $\bot$
  if it fails.
\end{itemize}

A functional encryption scheme is a partially hiding
functional encryption scheme, where $\cX_{n,\secparam} = \emptyset$
for all $n, \secparam$. 

Define three levels of efficiency: let $S=S(\secparam, n)$ be
the maximum size of functions in $\cF_{\secparam, n}$; ciphertext
$\ct$ produced by running $\ppgen, \Setup, \Enc$ honestly as above has
the following sizes with respect to some arbitrary constant
$\epsilon \in (0,1]$.
\begin{itemize}
  \item { Sublinear compactness:} $
  \poly(\secparam, n) S^{1-\epsilon}  $
 \item { Sublinear compactness and linear dependency on input
  length:} $ \poly(\secparam)(n+ S^{1-\epsilon})$
 \item { Linear Efficiency:} $ \poly(\secparam)n$
\end{itemize}

\end{definition}
We surpress the public input in notation in the case of functional encryption.



\begin{definition} (Correctness of a PHFE/FE scheme.)
\label{def:phfecorrectness}
A secret key partially hiding functional encryption scheme, $\phfe$, for the
functionality $\cF:\cX\times \cY \rightarrow \cZ$ is correct if for
every $\secparam\in \mathbb{N}$ and every polynomial
$n(\secparam) \in \mathbb{N}$, for every
$(x,y)\in \cX_{n,\secparam} \times \cY_{n,\secparam}$ and every
$f\in \cF_{n,\secparam}$, we have:
\begin{align*}
\Pr \left [  \dec(\sk_f,x,\ct))= f(x,y)\ \Bigg|\ \begin{array}{l}
           \ppgen(1^{\secparam},1^n)\rightarrow \crs\\
            \setup(\crs)\rightarrow \msk\\
            \enc(\msk,(x,y))\rightarrow (x,\ct)\\
            \keygen(\msk,f)\rightarrow \sk_f\\
\end{array}\right ] = 1 
\end{align*}
\end{definition}

\begin{definition}[Simulation security]\label{def:sim}
  \label{def:simsecurity}
  A secret-key partially hiding functional encryption scheme $\phfe$ for functionality
  $\cF:\cX \times \cY \rightarrow \cZ$ is (weakly selective) $(T,\epsilon)$-SIM secure, if for
  every positive polynomials $n = n(\secparam)$,
  $Q_\ct = Q_\ct(\secparam)$, $Q_\sk=Q_\sk(\secparam)$,
  ensembles $\smallset{(x,y)}$, $\smallset{\smallset{(x_{i},y_{i})}_{i \in [Q_\ct]}}$ in
  $\cX_{\secparam, n}\times \cY_{\secparam, n}$ and $\smallset{\smallset{f_j}_{j \in [Q_\sk]}}$ in
  $\cF_{\secparam, n}$, the following distributions are $(T,
  \epsilon)$-indistinguishable. 
  \begin{align*} 
    & \left\{ \left(\crs,\ \ct ,\ \smallset{\ct_i}_{i \in[Q_\ct]},\ \smallset{\sk_j}_{j \in [Q_\sk]}\right)\ \Bigg |
    \begin{array}{l}
      \crs \leftarrow \ppgen(1^\secparam,1^n),\
      \msk \gets \Setup(\crs)\\
      \ct \gets \Enc(\msk,(x,y))\\
      \forall i \in [Q_\ct],\ \ct_i \gets \Enc(\msk,(x_{i},y_i))\\
      \forall j \in [Q_{\sk}],\ \sk_j \gets \KeyGen(\msk,f_j)
    \end{array}\right\}\\
    & \left\{ \left(\crs,\ \tilde \ct,\ \smallset{\tilde \ct_i}_{i \in[Q_\ct]},\
      \smallset{\tilde \sk_j}_{j \in [Q_\sk]}\right)\ \Bigg |
    \begin{array}{l}
      \crs \leftarrow \ppgen(1^\secparam,1^n),\
      \tilde \msk \gets \tilde \Setup(\crs)\\
      \tilde \ct \gets \tilde\Enc_1(\tilde \msk,{\color{red} x})\\
      \forall i \in [Q_\ct],\ \tilde\ct_i \gets \tilde\Enc_2(\tilde \msk,(x_{i},y_i))\\
      \forall j \in [Q_{\sk}],\ \tilde \sk_j \gets \KeyGen(\tilde \msk,f_j,{\color{red} f_j(x,y)})
    \end{array}\right\}
  \end{align*}
\end{definition}

\begin{definition}[Indistinguishability security]\label{def:ind}
  \label{def:indsecurity}
  A secret-key functional encryption scheme $\fe$ for functionality
  $\cF:\cX \rightarrow \cZ$ is (weakly selective) $(T,\epsilon)$-IND secure, if for
  every positive polynomials $n = n(\secparam)$,
  $Q_\ct = Q_\ct(\secparam)$, $Q_\sk=Q_\sk(\secparam)$,
  ensembles $\smallset{\smallset{x_{i,0},x_{i,0}}_{i \in [Q_\ct]}}$ in
  $\cX_{\secparam, n}$ and $\smallset{\smallset{f_j}_{j \in [Q_\sk]}}$ in
  $\cF_{\secparam, n}$, the following distributions for $b \in \zo$ are $(T,
  \epsilon)$-indistinguishable. 
  \begin{align*} 
    \left\{ \left(\crs,\ \smallset{\ct_i}_{i \in[Q_\ct]},\ \smallset{\sk_j}_{j \in [Q_\sk]}\right)\ \Bigg |
    \begin{array}{l}
      \crs \leftarrow \ppgen(1^\secparam,1^n),\
      \msk \gets \Setup(\crs)\\         
      \forall i \in [Q_\ct],\ \ct_i \gets \Enc(\msk,{\color{red} x_{i,b}})\\
      \forall j \in [Q_{\sk}],\ \sk_j \gets \KeyGen(\msk,f_j)
    \end{array}\right\}
  \end{align*}
\end{definition}

\section{Recap of constant-depth functional encryption}
\label{sec:fenc0}
We give a self-contained description of a construction of {\em 1-key}
secret-key $\fe$ for $\NC^0$ satisfying {\em sublinear compactness
  with linear dependency on input length}, which can be transformed to $\iO$ as described in Section~\ref{sec:iotheorems}. We
emphasize that the construction of $\fe$ for $\NC^0$ recalled here was
given by prior works~\cite{C:AJLMS19,EC:JLMS19,FOCS:LinVai16,EC:Lin16}.  The purpose of
this appendix is providing a clean and self-contained description of
the construction for convenient lookup, and we omit the security
proof.

Consider the class of $\NC^0$ functions
$g: \zo^\ilen \rightarrow \zo^\olen$. Such functions can be computed
by a multilinear polynomial with 1/-1 coefficient of some constant
degree $D$.  We now describe the $\fe$ scheme for computing such
functions, which uses the following ingredients.

\paragraph{Ingredients.} Let $\secparam$ be the security parameter and
$p = p(\secparam) = O(2^\secparam)$ an efficiently computable prime
modulus.
\begin{itemize}
\item $\mathsf{LWE}$ over $\Int_p$ with subexponential modulus to
  noise ratio $2^{\lwe^\lwesec}$ where $\lwe$ is the dimension of LWE secret
  and $\lwesec$ is some arbitrary constant in $(0,1)$. 
  
  \noindent {\em Related parameters are set to:} 
  \begin{itemize}
  \item We use polynomially large noises: Let $\noise_{\alpha, \Bd}$
    be the truncated discrete gaussian distribution with parameter
    $\alpha$ and support $[-\Bd,\Bd]\cap \Z$, where $\alpha \le \Bd$
    are set appropriately and of magnitude $\poly(\secparam)$. As such,
    the modulus-to-noise ratio is $p/\poly(\secparam)$. 
  \item Set the LWE dimension $\lwe$ appropriately
    $\lwe = \Theta(\secparam^{1/\lwesec})$ such that the
    modulus-to-noise ratio $p/\poly(\secparam)$ is upper bounded by
    $2^{\lwe^{\lwesec}}$.
  \end{itemize}

  We will use the basic homomorphic encryption scheme by~\cite{FOCS:BraVai11}
  based on LWE. An encryption of a Boolean string $\vec
  x$ has form $\vec A, \vec b = \vec s \vec A + 2\vec e +  \vec x$
  over $\Int_p$ and supports homomorphic evaluation of constant degree
  polynomials over $\Int_p$ (without relinearization). 
  
\item A perturbation resilient generator
  $\drg = (\setuppoly,\setupseed,\peval)$ with stretch $\tau > 1$ and
  complexity $(\arithNC^{1}, \deg \ 2)$ over $\Z_p$. Such a $\drg$ was
  constructed in Section \ref{sec:iotheorems}, based on  Boolean $\prg$s in $\NC^0$ the LPN assumption over $\Int_p$.

  \noindent {\em Related parameters are set to:} 
  \begin{itemize}
  \item The bound on the noises to be smudged is set to be $\Bd^{D}\cdot l^{D}\cdot \secparam$.

  \item The output length of $\drg$ is $m$, matching the output length
    of the $\NC^0$ computation. 
  \item The seed length is then $n\poly(\secparam)$ for $n =
    m^{1/\tau}$. 
  \end{itemize}
  
\item A SIM-secure  collusion-resistant secret-key
   scheme for
   $(\arithNC^1, \text{ deg } 2)$, $\phfe=(\phfe.\ppgen, \phfe.\setup, \phfe.\enc,\phfe.\keygen, \phfe.\dec )$. This can be built
  from the $\mathsf{SXDH}$ assumption over asymmetric bilinear groups
  of order $p$ as presented in \cite{JLS19}.

  \noindent {\em Related parameters are set to:} 
  \begin{itemize}
  \item The input length parameter $n'$ is an efficiently computable
    function depending on $n, \lwe, D$ set implicitly in the
     $\Enc$ algorithm below. 
  \end{itemize} 
\end{itemize}

\paragraph{Construction:} The $\NC^0$-FE scheme $\fe = (\ppgen,
\setup, \enc, \keygen, \dec)$ is as follows: 
\begin{description}
\item [$\crs \leftarrow \ppgen(1^{\secparam},1^\ilen)$:] Sample  $\vec A\leftarrow \Z^{\lwe \times \ilen}_p$, 
     $\crs_{\phfe} \leftarrow
      \phfe.\ppgen(1^{\secparam},1^{n'})$, \\ and 
     $\pid \leftarrow \drg.\setuppoly(1^{\secparam}, 1^n, 1^{\Bd^{\fedeg}\cdot l^{\fedeg}\cdot \secparam})$.
    Output $\crs=(\crs_{\phfe},\pid, \vec A)$.
    
\item [$\msk \leftarrow \setup(\crs)$:] Sample $\msk_\phfe\leftarrow
  \phfe.\setup(\crs_{\phfe})$ and output $\msk = (\msk_\phfe, \crs)$.

\item [$\ct \leftarrow \enc(\msk,\vec x \in \{0,1\}^\ilen)$:]$ $ 
  \begin{itemize}

  \item Sample $(\psd,\ssd) \gets\drg.\setupseed(\pid)$. Note that the
    seed has length $|\psd|+|\ssd| = n \poly(\secparam)$.

  \item Encrypt $\vec x$ as follows: Sample a secret
    $\vec s \gets \Z^{\lwe}_p$ and noise vector
    $\vec e \gets \noise^\ilen_{\alpha,B}$, and compute
    $\vec b= \vec s \vec A + 2\vec e + \vec x$. 
    
  \item Let $\overline{{\vec s}}=(1 \Vert \vec s)$ and compute 
    $\overline{\vec s}^{\otimes \lceil\frac{D}{2}\rceil}$. 
    
  \item Set public input $X = (\psd, \vec b)$ and private input
    $Y=(\ssd,\overline{\vec s}^{\otimes \lceil\frac{D}{2} \rceil})$,
    and encrypt them using $\phfe$,
    $\ct \gets \phfe.\enc(\msk, (X,Y))$.
  \end{itemize}
  Output $\ct$.

\item [$\sk \gets \keygen(\msk,g)$:] Output a $\phfe$ key
  $\sk_\phfe \gets \phfe.\keygen(\msk,G)$ for the following function
  $G$.

  \underline{\bf Function $G$} takes public input $X$ and private input $Y$ and does the following:
  \begin{itemize}
  \item Compute $f(\vec x) + 2\vec e'$ via a polynomial $G^{(1)}$ that
    has degree $D$ in $X$ and degree 2 in $Y$.

    \underline{\bf Function $G^{(1)}$} is defined as follows:
    Since $f$ is a degree $D$ multilinear polynomial with 1/-1
    coefficients, we have (using the same notation as in
    Section~\ref{sec:sprg}) 
    \begin{align*}
      \forall j \in [m], \ f_j(\vec x )= L_j((x_\mnlv)_{\mnlv \in
      f_j})\ \text{ for some linear $L_j$ with 1/-1 coefficients}   ~.
    \end{align*}     
    The decryption equation for $\vec b$ is 
    \begin{align*}
      \forall i \in [\ilen],\ x_i + 2e_i & = \innerp{ \vec c_i,\ \ol{\vec s}} & \vec c_i = -\vec  a^{\transpose}_i || b_i,
                                                 \ \vec a_i \text{ is the $i$th column of  $\vec A$}~.
    \end{align*}           
    Thus, we have
    \begin{align*}
    \forall \text{ degree $D$ monomial } \mnlv ,\ x_\mnlv + 2e_\mnlv & = \innerp{ \otimes_{i \in \mnlv} \vec c_i,\ \otimes_{i
                                                      \in \mnlv} \ol{\vec  s}} & \\
      \forall j \in [m], \ f_j(\vec x ) + 2e'_j & = L_j\left(\left(\innerp{ \otimes_{i \in \mnlv} \vec c_i,\ \otimes_{i
        \in \mnlv} \ol{\vec  s}}\right)_{\mnlv\in f_j}\right) &\\
        e'_j & = L_j((e_\mnlv)_{\mnlv \in f_j})\text{ has
                       $\poly(\secparam)$ magnitude} &
    \end{align*}
    Define $G^{(1)}$ to be the polynomial that computes
    $f(\vec x) + 2\vec e'$
    \begin{align*}
       G^{(1)}(X, Y) = f(\vec x) + 2\vec e'~,
    \end{align*}
    with degree $D$ in $X$ (containing $\vec b$) and degree 2 in $Y$
    (containing $\overline{\vec s}^{\otimes
      \lceil\frac{D}{2}\rceil}$). $G^{(1)}$ also depends on $\vec A$.
           
  \item Compute $\vec r \gets \drg.\peval(\pid,\sd)$. 
  \item Output $\vec y' = \vec y + 2\vec e_f + 2\vec r $.
  \end{itemize}
  Observe that because of the complexity of $G^{(1)}$ and $\drg$,  $G$ is in $(\arithNC^1, \text{deg } 2)$.
  
\item [$\dec(\sk,\ct)$:] Decrypt the PHFE ciphertext
  $\vec y + 2\vec e' = G(X, Y) \gets \phfe.\dec(\sk_\phfe,
  \ct_\phfe)$, which reveals $\vec y \ \mathrm{mod}\ 2$.

  More precisely, the decryption of PHFE built from bilinear groups
  produces $g_T^{(y_j + 2e'_j)}$ for every $j \in[m]$, where $g_T$ is the
  generator of the target group. Thus, decryption needs to first
  extracts $y_j + 2e'_j$ by brute force discrete logarithm, which is
  efficient as $e'_j$ has $\poly(\secparam)$ magnitude.
\end{description}

\paragraph{Sublinear Compactness with Linear Dependency on Input
  Length} Observe that the ciphertext $\ct$ produced above has size
$\poly(\secparam, \ilen) S^{1-\epsilon} = \poly(\secparam, \ilen)
m^{1-\epsilon}$ for some $\epsilon \in (0,1)$, following from the following facts:
\begin{itemize}
\item By the linear
efficiency of PHFE,  $|\ct| = \poly(\secparam)(|X|+|Y|)$.
\item The seed $P, S$ of $\drg$ has length $m^{1/\tau}$ for $\tau >1$.
\item $|\vec b| = k \log p = O(k \secparam)$.
\item $\overline{\vec s}^{\otimes \lceil\frac{D}{2}\rceil}$ has size
  $\lwe^{\lceil\frac{D}{2}\rceil} \log p = O(\secparam^{
    (\lceil\frac{D}{2}\rceil/\lwesec) + 1})= \poly(\secparam)$.
\end{itemize}





\end{document}